\theoremstyle{plain}
\newtheorem{theorem}{Theorem}
\newtheorem{lemma}[theorem]{Lemma}
\newtheorem{corollary}[theorem]{Corollary}
\theoremstyle{definition}
\newtheorem{definition}[theorem]{Definition}
\newtheorem{conjecture}[theorem]{Conjecture}
\theoremstyle{remark}
\title{\bf Bounding sequence extremal functions\\ with formations}
\author{J.T. Geneson\thanks{Supported by the NSF Graduate Research Fellowship under Grant No. 1122374.} \qquad Rohil Prasad \qquad  Jonathan Tidor\\
\small Department of Mathematics, MIT\\[-0.8ex]
\small Massachusetts, U.S.A.\\
\small\tt geneson@math.mit.edu\\[-0.8ex]
\small\tt prasad01@college.harvard.edu\\[-0.8ex]
\small\tt jtidor@mit.edu
}
\date{}
\begin{document}

\maketitle


\begin{abstract}
An $(r, s)$-formation is a concatenation of $s$ permutations of $r$ letters. If $u$ is a sequence with $r$ distinct letters, then let $\mathit{Ex}(u, n)$ be the maximum length of any $r$-sparse sequence with $n$ distinct letters which has no subsequence isomorphic to $u$. For every sequence $u$ define $\mathit{fw}(u)$, the formation width of $u$, to be the minimum $s$ for which there exists $r$ such that there is a subsequence isomorphic to $u$ in every $(r, s)$-formation. We use $\mathit{fw}(u)$ to prove upper bounds on $\mathit{Ex}(u, n)$ for sequences $u$ such that $u$ contains an alternation with the same formation width as $u$. 

We generalize Nivasch's bounds on $\mathit{Ex}((ab)^{t}, n)$ by showing that $\mathit{fw}((12 \ldots l)^{t})=2t-1$ and $\mathit{Ex}((12\ldots l)^{t}, n) =n2^{\frac{1}{(t-2)!}\alpha(n)^{t-2}\pm O(\alpha(n)^{t-3})}$ for every $l \geq 2$ and $t\geq 3$, such that $\alpha(n)$ denotes the inverse Ackermann function. Upper bounds on $\mathit{Ex}((12 \ldots l)^{t} , n)$ have been used in other papers to bound the maximum number of edges in $k$-quasiplanar graphs on $n$ vertices with no pair of edges intersecting in more than $O(1)$ points.

If $u$ is any sequence of the form $a v a v' a$ such that $a$ is a letter, $v$ is a nonempty sequence excluding $a$ with no repeated letters and $v'$ is obtained from $v$ by only moving the first letter of $v$ to another place in $v$, then we show that $\mathit{fw}(u)=4$ and $\mathit{Ex}(u, n) =\Theta(n\alpha(n))$. Furthermore we prove that $\mathit{fw}(abc(acb)^{t})=2t+1$ and $\mathit{Ex}(abc(acb)^{t}, n) = n2^{\frac{1}{(t-1)!}\alpha(n)^{t-1}\pm O(\alpha(n)^{t-2})}$ for every $t\geq 2$.

  \bigskip\noindent \textbf{Keywords:} formations, generalized Davenport-Schinzel sequences, inverse Ackermann function, permutations

\bigskip\end{abstract}

\section{Introduction}

A {\it Davenport-Schinzel} sequence of order $s$ is a sequence with no adjacent repeated letters which has no alternating subsequence of length $s+2$. Upper bounds on the lengths of Davenport-Schinzel sequences provide bounds on the complexity of lower envelopes of solution sets to linear homogeneous differential equations of limited order \cite{3} and on the complexity of faces in arrangements of arcs with a limited number of crossings \cite{1}.

A sequence $s$ {\it contains} a sequence $u$ if some subsequence of $s$ can be changed into $u$ by a one-to-one renaming of its letters. If $s$ does not contain $u$, then $s$ {\it avoids} $u$. The sequence $s$ is called {\it r-sparse} if any $r$ consecutive letters in $s$ are pairwise different. If $u$ is a sequence with $r$ distinct letters, then the extremal function $\mathit{Ex}(u, n)$ is the maximum length of any $r$-sparse sequence with $n$ distinct letters which avoids $u$.

A {\it generalized Davenport-Schinzel sequence} is an $r$-sparse sequence with no subsequence isomorphic to a fixed forbidden sequence with $r$ distinct letters. Fox {\it et al.} \cite{5} and Suk {\it et al.} \cite{10} used bounds on the lengths of generalized Davenport-Schinzel sequences to prove that $k$-quasiplanar graphs on $n$ vertices with no pair of edges intersecting in more than $t$ points have at most $(n\log n)2^{\alpha(n)^{c}}$ edges, where $\alpha(n)$ denotes the inverse Ackermann function and $c$ is a constant that depends only on $k$ and $t$.

If $a$ and $b$ are single letters, then $\mathit{Ex}(a, n)=0, \mathit{Ex}(a b, n)=1, \mathit{Ex}(a b a, n)= n$ and $\mathit{Ex}(a b a b, n)=2n-1$. Nivasch \cite{8} and Klazar \cite{7} determined that $ \mathit{Ex}(ababa,n) \sim 2n\alpha(n)$. Agarwal, Sharir, and Shor \cite{2} proved the lower bound and Nivasch \cite{8} proved the upper bound to show that if $u$ is an alternation of length $2t+4$, then $\mathit{Ex}(u, n)=n2^{\frac{1}{t!}\alpha(n)^{t}\pm O(\alpha(n)^{t-1})}$ for $t\geq 1$.

If $u$ is a sequence with $r$ distinct letters and $c\geq r$, then let $\mathit{Ex_{c}}(u, n)$ be the maximum length of any $c$-sparse sequence with $n$ distinct letters which avoids $u$. Klazar \cite{6} showed that $\mathit{Ex_{c}}(u, n)=\Theta(\mathit{Ex_{d}}(u, n))$ for all fixed $c, d\geq r$.

\begin{lemma} \cite{6} \label{1.1} If $u$ is a sequence with $r$ distinct letters, then $\mathit{Ex_{d}}(u, n)\leq \mathit{Ex_{c}}(u, n)\leq(1+\mathit{Ex_{c}}(u, d-1))\mathit{Ex_{d}}(u, n)$ for all $n\geq 1$ and $d\geq c\geq r$. \end{lemma}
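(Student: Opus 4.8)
My plan is to prove the two inequalities separately; the left one is essentially trivial and the right one carries all the content. For $\mathit{Ex_{d}}(u,n)\le\mathit{Ex_{c}}(u,n)$: since $c\le d$, any window of $c$ consecutive entries of a sequence sits inside a window of $d$ consecutive entries, so every $d$-sparse sequence is automatically $c$-sparse; hence any $d$-sparse, $u$-avoiding sequence on $n$ letters is also a $c$-sparse, $u$-avoiding sequence on $n$ letters, and taking the longest such sequence gives the inequality.

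For the right-hand inequality I would start from a $c$-sparse, $u$-avoiding sequence $S$ on $n$ letters with $|S|=\mathit{Ex_{c}}(u,n)$ (assuming $S$ nonempty, else the bound is immediate) and extract a $d$-sparse subsequence greedily from left to right. Concretely: set $p_{1}=1$, and having chosen positions $p_{1}<\dots<p_{j}$, let $p_{j+1}$ be the least index $q>p_{j}$ such that $S[q]$ differs from each of the last $d-1$ chosen entries $S[p_{j}],S[p_{j-1}],\dots$ (or from all entries chosen so far, if fewer than $d-1$ have been chosen), stopping when no such $q$ exists. Writing $k$ for the number of chosen positions and $S'=S[p_{1}]\cdots S[p_{k}]$, an induction on the selection rule shows that $S'$ is $d$-sparse; as a subsequence of $S$ it avoids $u$, and it uses at most $n$ letters, so $k\le\mathit{Ex_{d}}(u,n)$.

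Next I would bound $|S|$ in terms of $k$ up to the promised factor. The chosen positions partition the remaining positions of $S$ into the $k$ ``gaps'' $(p_{1},p_{2}),(p_{2},p_{3}),\dots,(p_{k-1},p_{k}),(p_{k},|S|]$ (nothing lies before $p_{1}=1$). The key claim is that any index $q$ inside a gap has $S[q]$ equal to one of the at most $d-1$ entries most recently selected before $q$: indeed, by minimality of the next chosen position, $q$ must fail the selection condition. Hence each gap is a contiguous subword of $S$ that uses at most $d-1$ distinct letters, so---being $c$-sparse and $u$-avoiding---it has length at most $\mathit{Ex_{c}}(u,d-1)$ (using that $\mathit{Ex_{c}}(u,\cdot)$ is non-decreasing in the number of letters). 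Combining the $k$ selected entries with the $k$ gaps,
\[ |S|\le k+k\,\mathit{Ex_{c}}(u,d-1)=k\bigl(1+\mathit{Ex_{c}}(u,d-1)\bigr)\le\bigl(1+\mathit{Ex_{c}}(u,d-1)\bigr)\mathit{Ex_{d}}(u,n), \]
which is exactly the bound to be proved.

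The main obstacle, and the step I would write out most carefully, is the interplay between the $d$-sparsity of $S'$ and the key claim about gaps: both rest on keeping precise track of which $d-1$ already-selected entries the rule compares against at each step. Once that bookkeeping is fixed, the $d$-sparsity of $S'$ is a direct induction, and the gap claim follows because the minimality of each $p_{j+1}$ forces every skipped index to repeat one of those last $d-1$ selected letters---so the only genuine work is setting up the notation carefully enough to make these two observations airtight.
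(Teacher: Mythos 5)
Your proof is correct: the greedy left-to-right extraction of a $d$-sparse subsequence, together with the observation that every skipped position repeats one of the at most $d-1$ most recently chosen letters (so each of the $k$ gaps is a $c$-sparse, $u$-avoiding block on at most $d-1$ letters of length at most $\mathit{Ex_{c}}(u,d-1)$), is exactly the standard argument behind this lemma, which the paper itself only cites from Klazar \cite{6} without proof. The only points worth making explicit in a full write-up are the (routine) monotonicity of $\mathit{Ex}$ in the number of distinct letters, which you invoke twice.
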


An $(r, s)$-{\it formation} is a concatenation of $s$ permutations of $r$ distinct letters. For example {\it abcddcbaadbc} is a $(4,3)$-formation.

\begin{definition} $F_{r,s}(n)$ is the maximum length of any $r$-sparse sequence with $n$ distinct letters that avoids every $(r, s)$-formation.\end{definition}

Klazar \cite{6} proved that $F_{r,2}(n)=O(n)$ and $F_{r,3}(n)= O(n)$ for every $r$. Nivasch \cite{8} proved that $F_{r,4}(n)=\Theta(n\alpha(n))$ for $r\geq 2$. Agarwal, Sharir, and Shor \cite{2} proved the lower bound and Nivasch \cite{8} proved the upper bound to show that $F_{r,s}(n)=n2^{\frac{1}{t!}\alpha(n)^{t}\pm O(\alpha(n)^{t-1})}$ for all $r\geq 2$ and odd $s \geq 5$ with $t= \frac{s-3}{2}$.

Nivasch \cite{8} proved that $\mathit{Ex}(u, n)\leq F_{r,s-r+1}(n)$ for any sequence $u$ with $r$ distinct letters and length $s$ by showing that every $(r, s -r+1)$ formation contains $u$. 

\begin{definition} The {\it formation width} of $u$, denoted by $\mathit{fw}(u)$, is the minimum value of $s$ such that there exists an $r$ for which every $(r, s)$-formation contains $u$. The {\it formation length} of $u$, denoted by $\mathit{fl}(u)$, is the minimum value of $r$ such that every $(r, \mathit{fw}(u))$-formation contains $u$. \end{definition}

By Nivasch's proof, $\mathit{fw}(u)\leq s -r+1$ for every sequence $u$ with $r$ distinct letters and length $s$. The next two facts follow from the definition of $\mathit{fw}$.

\begin{lemma} \label{1.2} If $u$ contains $v$, then $\mathit{fw}(v)\leq \mathit{fw}(u)$. \end{lemma}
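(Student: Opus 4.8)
The plan is to derive the inequality directly from the definition of formation width together with the transitivity of the containment relation on sequences. First I would set $s=\mathit{fw}(u)$ and invoke the definition of $\mathit{fw}$ to fix a value of $r$ (for concreteness, $r=\mathit{fl}(u)$) with the property that every $(r,s)$-formation contains $u$. The heart of the argument is then the observation that containment is transitive: if a sequence $w$ contains $u$, meaning some subsequence $w'$ of $w$ becomes $u$ under a one-to-one renaming, and $u$ in turn contains $v$, meaning some subsequence $u'$ of $u$ becomes $v$ under a one-to-one renaming, then the preimage of $u'$ inside $w'$ is a subsequence of $w$ that becomes $v$ under the composition of (the relevant restrictions of) the two renamings; hence $w$ contains $v$.

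Applying this with $w$ ranging over all $(r,s)$-formations, each such formation contains $u$ and therefore contains $v$. Thus there exists an $r$ for which every $(r,s)$-formation contains $v$, and by the minimality in the definition of $\mathit{fw}(v)$ we conclude $\mathit{fw}(v)\le s=\mathit{fw}(u)$. The one point to state carefully is that $v$ may have strictly fewer distinct letters than $u$, but this is harmless: the definition of $\mathit{fw}$ only asks for the existence of \emph{some} suitable $r$, and the very $r$ witnessing $\mathit{fw}(u)=s$ also witnesses $\mathit{fw}(v)\le s$. I do not expect any genuine obstacle here — the lemma is essentially immediate once transitivity of containment is recorded — so the writeup is mainly a matter of phrasing the composition of subsequence selections and renamings cleanly.
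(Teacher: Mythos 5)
Your proof is correct and matches the paper's intent: the paper simply asserts that this lemma ``follows from the definition of $\mathit{fw}$,'' and the transitivity-of-containment argument you spell out (with the observation that the same $r$ witnessing $\mathit{fw}(u)$ works for $v$) is exactly the implicit justification.
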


\begin{lemma} \label{1.3} If $u$ begins with the letter $a$, then $\mathit{fw}(a u)=\mathit{fw}(u)+1$. \end{lemma}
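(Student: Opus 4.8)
The plan is to prove the two inequalities $\mathit{fw}(au)\le \mathit{fw}(u)+1$ and $\mathit{fw}(au)\ge \mathit{fw}(u)+1$ separately. Write $s=\mathit{fw}(u)$ and $r=\mathit{fl}(u)$; note that $s\ge 1$ since $u$ is nonempty, and that because $u$ begins with $a$, deleting the leading letter of $au$ gives back exactly $u$, while $au$ and $u$ use the same alphabet. These last two observations are what make the argument work.

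For the upper bound I would take an arbitrary $(r,s+1)$-formation $F=P_0P_1\cdots P_s$ and regard $P_1\cdots P_s$ as an $(r,s)$-formation. By the definitions of $\mathit{fw}$ and $\mathit{fl}$, the formation $P_1\cdots P_s$ contains $u$, so it has a subsequence $w$ that is a renaming of $u$. Let $b$ be the letter of $w$ that plays the role of the first letter $a$ of $u$; since $P_0$ is a permutation of all $r$ letters it contains an occurrence of $b$, and that occurrence lies entirely to the left of $w$. Prepending it to $w$ produces a subsequence of $F$ that is a renaming of $au$. Hence every $(r,s+1)$-formation contains $au$, so $\mathit{fw}(au)\le s+1$.

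For the lower bound I would show that no value of $r$ witnesses $\mathit{fw}(au)\le s$, i.e.\ that for every $r'$ there is an $(r',s)$-formation avoiding $au$. Since $\mathit{fw}(u)=s$, the value $s-1$ is not large enough, so for each $r'$ there is an $(r',s-1)$-formation $G$ avoiding $u$ (when $s=1$ this is just the empty formation). Prepend an arbitrary permutation $P_0$ of the $r'$ letters to obtain an $(r',s)$-formation $H=P_0G$. If $H$ contained a subsequence isomorphic to $au$, then because $au$ begins with the block $aa$ while $P_0$ has no repeated letter, at most the very first term of that subsequence can lie in $P_0$; deleting it leaves a subsequence of $G$ isomorphic to $u$ (using that $au$ with its first letter removed is $u$), contradicting the choice of $G$. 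Thus $H$ avoids $au$, and letting $r'$ range over all values gives $\mathit{fw}(au)\ge s+1$.

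Combining the two bounds yields $\mathit{fw}(au)=\mathit{fw}(u)+1$. I do not expect a genuine obstacle here; the only points needing slight care are the degenerate case $s=1$ (where $G$ is empty and $H$ is a single permutation) and the verification that ``isomorphic'', meaning a one-to-one renaming of letters, is preserved when the leading letter is stripped off, which holds precisely because $au$ and $u$ have the same set of letters.
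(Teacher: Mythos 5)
Your proof is correct; the paper states Lemma~\ref{1.3} without proof (as one of two facts that ``follow from the definition of $\mathit{fw}$''), and your argument---prepending a permutation for the upper bound, and prepending a permutation to a $u$-avoiding $(r',s-1)$-formation for the lower bound, using that $aa$ cannot embed into a single permutation---is exactly the intended one. The points you flag (the degenerate case $s=1$ and the fact that stripping the leading letter of $au$ returns $u$ on the same alphabet) are handled correctly.
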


Lemma \ref{1.1} implies that $\mathit{fw}(u)$ and $\mathit{fl}(u)$ can be used to obtain upper bounds on $\mathit{Ex}(u, n)$.

\begin{lemma} \label{1.4} For any sequence $u$ with $r$ distinct letters and fixed $c$ with $c\geq r, \mathit{Ex_{c}}(u, n)=O(F_{\mathit{fl}(u),\mathit{fw}(u)} (n))$. \end{lemma}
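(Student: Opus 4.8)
The plan is to combine two ingredients already available: the fact that avoiding $u$ forces a sequence to avoid every $(\mathit{fl}(u),\mathit{fw}(u))$-formation (which settles the case where the sparsity parameter equals $\mathit{fl}(u)$), and Lemma \ref{1.1}, which lets us trade an arbitrary fixed sparsity parameter $c$ for $\mathit{fl}(u)$ at the cost of only a constant factor.

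First I would record the easy observation that $\mathit{fl}(u)\geq r$: since $u$ has $r$ distinct letters, no sequence on fewer than $r$ distinct letters can contain $u$, so in particular no $(r'',s)$-formation with $r''<r$ contains $u$; hence the value $r$ appearing in the definitions of $\mathit{fw}(u)$ and $\mathit{fl}(u)$ is at least $r$. This is exactly what is needed in order to apply Lemma \ref{1.1} (equivalently, Klazar's theorem from \cite{6}) with second sparsity parameter equal to $\mathit{fl}(u)$.

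Next comes the main step. Write $q=\mathit{fl}(u)$ and $s=\mathit{fw}(u)$; by the definitions of $\mathit{fw}$ and $\mathit{fl}$, every $(q,s)$-formation contains $u$. Now let $S$ be any $q$-sparse sequence on $n$ distinct letters that avoids $u$. Then $S$ cannot contain any $(q,s)$-formation as a subsequence, since a copy of such a formation inside $S$ would, because that formation contains $u$, produce a copy of $u$ inside $S$. Therefore $S$ is among the sequences counted by $F_{q,s}(n)$, so $|S|\leq F_{q,s}(n)$; taking the maximum over all such $S$ gives $\mathit{Ex}_{q}(u,n)\leq F_{q,s}(n)=F_{\mathit{fl}(u),\mathit{fw}(u)}(n)$. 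Finally, I would invoke Lemma \ref{1.1}: since $c\geq r$ and $q=\mathit{fl}(u)\geq r$, we get $\mathit{Ex}_{c}(u,n)=O(\mathit{Ex}_{q}(u,n))$, and combining the two bounds yields $\mathit{Ex}_{c}(u,n)=O(F_{\mathit{fl}(u),\mathit{fw}(u)}(n))$.

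I do not expect a genuine obstacle: the statement is essentially a repackaging of Nivasch's observation \cite{8} that every $(\mathit{fl}(u),\mathit{fw}(u))$-formation contains $u$ together with Klazar's sparsity-normalization lemma. The only points requiring care are making sure that the sparsity index in $F_{q,s}$ is exactly $q=\mathit{fl}(u)$ (so that it matches the sparsity used when we applied the definition of $F$), and noting that the multiplicative constant supplied by Lemma \ref{1.1} is $1+\mathit{Ex}_{c}(u,d-1)$, which depends only on $c$, $d$, and $u$ and not on $n$, so that it is absorbed into the $O(\cdot)$.
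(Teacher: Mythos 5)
Your proof is correct and follows exactly the route the paper intends (the paper leaves Lemma \ref{1.4} unproved, merely noting that it follows from Lemma \ref{1.1}): observe that $\mathit{fl}(u)\geq r$, that a $\mathit{fl}(u)$-sparse sequence avoiding $u$ must avoid every $(\mathit{fl}(u),\mathit{fw}(u))$-formation since each such formation contains $u$, and then normalize the sparsity parameter via Lemma \ref{1.1}. No gaps.
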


In this paper we use $\mathit{fw}(u)$ primarily in order to prove tight upper bounds on $\mathit{Ex}(u, n)$ for several classes of sequences $u$ such that $u$ contains an alternation with the same formation width as $u$. We also bound and evaluate $\mathit{fw}$ for various other families of sequences in order to develop a classification of all sequences in terms of their formation widths.

If $\alpha_{t}$ is an alternation of length $t$ for $t\geq 2$, then $\mathit{fw}(\alpha_{t})\leq t-1$ since every $(r, t-1)$ formation contains $\alpha_{t}$ for $r\geq 2$. Any $(r, t-2)$-formation in which order of letters reverses in adjacent permutations avoids $\alpha_{t}$, so $\mathit{fw}(\alpha_{t})=t-1$. Pettie \cite{9} used the fact that every $(4, 4)$-formation contains {\it abcacbc} to prove the upper bound $\mathit{Ex}(abcacbc, n)=O(n\alpha(n))$. Since any $(r, 3)$ formation with order reversing in adjacent permutations would avoid {\it abcacbc}, then $\mathit{fw}(abcacbc)=4$. Similarly $\mathit{fw}(abcadcbd) =4$.

\begin{definition} An $(r, s)$-formation $f$ is called {\it binary} if there exists a permutation $p$ on $r$ letters such that every permutation in $f$ is either the same as $p$ or the reverse of $p$. \end{definition}

Most of the proofs in this paper depend on the fact that if $u$ is a sequence with $r$ distinct letters, then every binary $(r, s)$-formation contains $u$ if and only if $s \geq \mathit{fw}(u)$. We use the following notation to describe binary formations more concisely.

\begin{definition} $I_{c}$ is the increasing sequence $1 \ldots c$ on $c$ letters and $D_{c}$ is the decreasing sequence $c\ldots 1$ on $c$ letters. Given a permutation $\pi \in S_{c}$, the sequences $I_{\pi}$ and $D_{\pi}$ are $\pi(1)\ldots \pi(c)$ and $\pi(c)\ldots \pi(1)$ respectively.\end{definition} 

We focus especially on two classes of binary formations in order to derive bounds on $\mathit{fw}(u)$. The sequence $\mathit{up}(l, t)$ is $I_{l}$ repeated $t$ times, and $\mathit{alt}(l, t)$ is a concatenation of $t$ permutations, starting with $I_{l}$ and alternating between $I_{l}$ and $D_{l}$. For example, $\mathit{up}(3,3)=123123123$ and $\mathit{alt}(3,3)=123321123$.

\begin{definition} If $u$ is a sequence with $c$ distinct letters, then $\mathit{l}(u)$ is the smallest $k$ such that $\mathit{up}(c, k)$ contains $u$, and $\mathit{r}(u)$ is the smallest $k$ such that $\mathit{alt}(c, k)$ contains $u$.\end{definition}

Then $\mathit{fw}(u)\geq \mathit{l}(u)$ and $\mathit{fw}(u)\geq \mathit{r}(u)$. We evaluate both $\mathit{l}(u)$ and $\mathit{r}(u)$ for every binary formation $u$.

In Section \ref{sec2.1} we prove that $\gamma(r, s) =(r-1)^{2^{s-1}}+1$ is the minimum value for which every $(\gamma(r, s), s)$-formation contains a binary $(r, s)$-formation. It follows that if $u$ has $r$ distinct letters, then $\mathit{fw}(u)$ is the minimum $s$ for which every binary $(r, s)$-formation contains $u$.

In Section \ref{sec2.2} we prove that $\mathit{fw}(u)=t-1$ for every sequence $u$ with two distinct letters and length $t$. We also determine every sequence $u$ for which $\mathit{fw}(u)\leq 3$. In addition, we show that $\mathit{fw}(\mathit{up}(c, t))=2t-1$ for all $c\geq 2$ and $t\geq 1$. This implies that $\mathit{Ex}(\mathit{up}(l, t), n)=n2^{\frac{1}{(t-2)!}\alpha(n)^{t-2}\pm O(\alpha(n)^{t-3})}$ for all $l \geq 2$ and $t\geq 3$ and that $\mathit{fw}(u)\leq 2\mathit{l}(u)-1$ for every sequence $u$.

In Section \ref{sec3.1} we compute $\mathit{l}(u)$ and use the result to bound $\mathit{fw}(u)$ up to a factor of $2$ for every binary formation $u$. In particular we prove the following bounds on $\mathit{fw}(u)$.

\begin{theorem} \label{1.5} 
Fix $c\geq 2$ and let $u=I_{c}^{e_{1}}D_{c}^{e_{2}}I_{c}^{e_{3}}\ldots \mathcal{L}_{c}^{e_{n}}$, where $\mathcal{L}$ is $I$ if $n$ is odd and $D$ if $n$ is even, and $e_{i}>0$ for all $i$. Define $A= \sum_{i\geq 1}e_{2i-1}$ and $B = \sum_{i\geq 1}e_{2i}$. Let $M= \max(A, B)$ and let $m= \min(A, B)$. Then $(c-1)m+M+ \lfloor \frac{n}{2}\rfloor\leq \mathit{fw}(u)\leq 2(c-1)m+2M+2\lfloor\frac{n}{2}\rfloor-1$.
\end{theorem}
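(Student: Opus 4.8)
The plan is to reduce Theorem~\ref{1.5} to the exact evaluation
\[
\mathit{l}(u) = (c-1)m + M + \lfloor n/2 \rfloor .
\]
Granting this, the lower bound $(c-1)m+M+\lfloor n/2\rfloor \le \mathit{fw}(u)$ follows from $\mathit{fw}(u)\ge \mathit{l}(u)$, and the upper bound $\mathit{fw}(u)\le 2(c-1)m+2M+2\lfloor n/2\rfloor-1$ follows from the inequality $\mathit{fw}(u)\le 2\mathit{l}(u)-1$ established in Section~\ref{sec2.2}. So the whole content is the computation of $\mathit{l}(u)$, the least $k$ for which $\mathit{up}(c,k)=I_c^k$ contains $u$.

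Since every sequence isomorphic to $u$ has the form $\pi(u)=I_\pi^{e_1}D_\pi^{e_2}I_\pi^{e_3}\cdots$ for some $\pi\in S_c$, we have $\mathit{l}(u)=\min_{\pi\in S_c}g(\pi)$, where $g(\pi)$ is the least $k$ such that $\pi(u)$ is a (literal) subsequence of $I_c^k$. I would evaluate $g(\pi)$ by the greedy left-to-right embedding: scanning $\pi(u)=v_1v_2\cdots$, maintain a current copy of $I_c$ and the largest letter $h$ already taken from it, appending $v_{i+1}$ to the current copy when $v_{i+1}>h$ and starting a new copy otherwise. A routine exchange argument shows this greedy uses the fewest copies, so $g(\pi)$ equals one plus the number of times a new copy is started.

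The next step, which I expect to be the main obstacle, is to turn this count into a closed form. Let $d$ be the number of descents of $\pi$ and put $\delta=1$ if $\pi(1)<\pi(c)$ and $\delta=0$ otherwise. The key structural observation is that after the greedy finishes a complete copy of $I_\pi$ the value $h$ equals $\pi(c)$, and after a complete copy of $D_\pi$ it equals $\pi(1)$; hence the count is additive over the $A+B-1$ junctions between consecutive permutation-copies of $\pi(u)$. A case analysis then shows: each of the $A$ copies of $I_\pi$ forces $d$ new copies internally and each of the $B$ copies of $D_\pi$ forces $c-1-d$; each of the $n-1$ junctions between different segments forces exactly one new copy (an equal pair $\pi(c)\pi(c)$ or $\pi(1)\pi(1)$ straddles it); the $A-\lceil n/2\rceil$ junctions internal to $I$-segments force $\delta$ new copies each; and the $B-\lfloor n/2\rfloor$ junctions internal to $D$-segments force $1-\delta$ each. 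Summing (and treating the first copy and the degenerate permutations $\pi=\mathrm{id}$, with $d=0,\delta=1$, and $\pi$ the reversal, with $d=c-1,\delta=0$, separately) gives
\[
g(\pi)=1+Ad+B(c-1-d)+(n-1)+\delta(A-\lceil n/2\rceil)+(1-\delta)(B-\lfloor n/2\rfloor).
\]

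Finally I would minimize $g(\pi)$ over $\pi$. Writing the $d$-dependent part as $d(A-B)+B(c-1)$, one wants $d$ small when $A\ge B$ and large when $A\le B$; the only subtlety is that $\delta=1$ forces $d\le c-2$ while $\delta=0$ forces $d\ge1$, so the main term and the junction terms cannot be optimized independently. Comparing the handful of relevant choices — $d=0,\delta=1$ against $d=1,\delta=0$ when $A\ge B$, and $d=c-2,\delta=1$ against $d=c-1,\delta=0$ when $A\le B$ (all realizable for $c\ge2$, with the obvious coincidences when $c=2$) — yields $\min_{\pi}g(\pi)=(c-1)m+M+\lfloor n/2\rfloor$ in every case, which is the asserted value of $\mathit{l}(u)$ and hence proves the theorem.
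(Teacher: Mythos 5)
Your proposal is correct and follows essentially the same route as the paper: both reduce the theorem to the exact evaluation $\mathit{l}(u)=(c-1)m+M+\lfloor n/2\rfloor$ (the paper's Theorem~\ref{3.4}) and then sandwich $\mathit{fw}(u)$ between $\mathit{l}(u)$ and $2\mathit{l}(u)-1$. Your greedy/descent-counting bookkeeping is a reparametrization of the paper's $\pi$-overlap computation --- your parameters $d$ and $\delta$ play the role of the paper's $a=\mathit{l}_\pi(I_c)$ and $b=\mathit{l}_\pi(D_c)$ with $a+b=c+1$ from Lemma~\ref{3.1} --- and the concluding minimization over $\pi$ is the same.
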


In Section \ref{sec3.2} we compute $\mathit{r}(u)$ for every binary formation $u$. Specifically we prove that if $c\geq 2$, then $\mathit{r}(I_{c}^{e_{1}}D_{c}^{e_{2}}I_{c}^{e_{3}} \ldots \mathcal{L}_{c}^{e_{n}})=2\sum_{i=1}^{n}e_{i}-n$, where $\mathcal{L}$ is $I$ if $n$ is odd and $D$ if $n$ is even. 

In Section \ref{sec4} we use $\mathit{fw}(u)$ to derive tight bounds on $\mathit{Ex}(u, n)$ for other sequences $u$ besides $\mathit{up}(l, t)$. Let $u$ be any sequence of the form $a v a v' a$ such that $a$ is a letter, $v$ is a nonempty sequence excluding $a$ with no repeated letters and $v'$ is obtained from $v$ by only moving the first letter of $v$ to another place in $v$. We show that $\mathit{fw}(u)=4$, implying that $\mathit{Ex}(u, n)=\Theta(n\alpha(n))$. We also prove that $\mathit{Ex}(abc(acb)^{t}, n)=n2^{\frac{1}{(t-1)!}\alpha(n)^{t-1}\pm O(\alpha(n)^{t-2})}$ for all $t\geq 2$.

In Section \ref{sec5} we compute $\mathit{fw}$ for various classes of binary formations. In particular we show for $c\geq 2$ and $k\geq 1$ that $\mathit{fw}(I_{c}D_{c}I_{c}) =c+3$, $\mathit{fw}(I_{c}^{k}D_{c})= c+2k-1$, $\mathit{fw}(I_{c}D_{c}I_{c}D_{c}) = 2c+3$, $\mathit{fw}(\mathit{alt}(c, 2k))\geq k(c+2)-1$, and $\mathit{fw}(\mathit{alt}(c, 2k+1))\geq k(c+2)+1$. 

In Section \ref{sec6} we discuss some unresolved questions.

\section{An extension of the Erdos-Szekeres theorem} \label{sec2.1}

The following upper bound is obtained by iterating the Erdos-Szekeres theorem as in \cite{6}.

\begin{lemma} \label{2.1} Every $((r-1)^{2^{s-1}}+1, s)$-formation contains a binary $(r, s)$-formation. \end{lemma}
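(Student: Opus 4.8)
The plan is to prove Lemma~\ref{2.1} by induction on $s$, at each step applying the Erdős–Szekeres monotone-subsequence theorem once to every permutation in the formation simultaneously, using a common ground set that has been passed down from the previous step.

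For the base case $s=1$: a $((r-1)^{1}+1,1) = (r,1)$-formation is a single permutation of $r$ letters, which is trivially a binary $(r,1)$-formation. For the inductive step, suppose the claim holds for $s-1$, and let $f = p_1 p_2 \cdots p_s$ be an $((r-1)^{2^{s-1}}+1, s)$-formation on a letter set $X$ with $|X| = (r-1)^{2^{s-1}}+1$. Consider the first permutation $p_1$. By Erdős–Szekeres, any sequence of more than $(a-1)(b-1)$ distinct reals contains an increasing subsequence of length $a$ or a decreasing subsequence of length $b$; taking $a = b = (r-1)^{2^{s-2}}+1$, and noting $((r-1)^{2^{s-2}})^2 = (r-1)^{2^{s-1}} < |X|$, we extract a subset $Y \subseteq X$ with $|Y| = (r-1)^{2^{s-2}}+1$ on which $p_1$ restricted to $Y$ is a monotone sequence — either $I_\pi$ or $D_\pi$ for a single permutation $\pi$ of $Y$. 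Now restrict the remaining permutations $p_2, \ldots, p_s$ to $Y$: this gives a $((r-1)^{2^{s-2}}+1, s-1)$-formation on $Y$, which by the inductive hypothesis contains a binary $(r, s-1)$-formation; that is, there is a subset $Z \subseteq Y$ with $|Z| = r$ such that $p_2, \ldots, p_s$ restricted to $Z$ each equals $I_\sigma$ or $D_\sigma$ for one fixed permutation $\sigma$ of $Z$.

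The remaining point is to reconcile $p_1$ restricted to $Z$ with this permutation $\sigma$. Since $Z \subseteq Y$ and $p_1$ restricted to $Y$ is monotone (with respect to the order $\pi$), $p_1$ restricted to $Z$ is also monotone with respect to $\pi$. But monotonicity in a fixed linear order means that, as a sequence on the letter set $Z$, $p_1|_Z$ is one of exactly two permutations: the one listing $Z$ in increasing $\pi$-order, or the one listing it in decreasing $\pi$-order. These two permutations are reverses of one another. So whichever of $I_\sigma, D_\sigma$ the permutation $p_1|_Z$ happens to equal — and it must equal one of them, since $\{I_\sigma, D_\sigma\}$ is precisely the set of the two mutually-reverse orderings of $Z$ determined by $\sigma$, and $\sigma$ can be chosen to be the $\pi$-increasing order of $Z$ — we conclude that $p_1|_Z \in \{I_\sigma, D_\sigma\}$. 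Hence $p_1|_Z, p_2|_Z, \ldots, p_s|_Z$ is a binary $(r,s)$-formation, completing the induction.

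The only mild subtlety — really a bookkeeping point rather than a genuine obstacle — is making sure the linear order witnessing monotonicity of $p_1$ agrees with the permutation $\sigma$ produced by the inductive hypothesis, so that the two "directions" available for $p_1|_Z$ are exactly $I_\sigma$ and $D_\sigma$ rather than some unrelated pair. This is handled automatically once one observes that a binary formation on $Z$ is, by definition, determined by a single underlying linear order on $Z$ (call it $\sigma$), and that $p_1|_Z$ being $\pi$-monotone on a set $Z$ whose $\pi$-order coincides with $\sigma$ forces $p_1|_Z \in \{I_\sigma, D_\sigma\}$; one simply takes $\sigma$ to be the restriction of $\pi$ to $Z$ when invoking the hypothesis, which is legitimate since the inductive statement allows any underlying permutation. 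The arithmetic $((r-1)^{2^{s-2}})^2 = (r-1)^{2^{s-1}}$ and the threshold check are immediate.
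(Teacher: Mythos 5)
There is a genuine gap in the reconciliation step at the end. The inductive hypothesis applied to $p_2|_Y,\ldots,p_s|_Y$ is an existential statement: it hands you a set $Z$ \emph{and} a permutation $\sigma$ of $Z$ such that each $p_i|_Z$ ($i\ge 2$) lies in $\{I_\sigma,D_\sigma\}$. You do not get to ``take $\sigma$ to be the restriction of $\pi$ to $Z$'': $\sigma$ is forced (up to reversal) by the output, since for instance $p_2|_Z$ must itself equal $I_\sigma$ or $D_\sigma$, so $\sigma$ is $p_2|_Z$ or its reverse. Meanwhile $p_1|_Z$ is monotone only with respect to the ambient order you used for Erd\H{o}s--Szekeres, i.e.\ it is the increasing or decreasing listing of $Z$ in that order, and there is no reason for that listing to coincide with $I_\sigma$ or $D_\sigma$. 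Concretely, with $r=3$, $s=2$, $p_1=12345$, $p_2=21345$, your procedure can return $Y=Z=\{1,2,3\}$ with $p_1|_Z=123$ but $\sigma=213$, so $\{I_\sigma,D_\sigma\}=\{213,312\}$ and $p_1|_Z$ does not belong to it; the restricted permutations then fail to form a binary formation (even though the lemma itself holds, e.g.\ on the letters $\{3,4,5\}$). Note also that applying Erd\H{o}s--Szekeres to the \emph{first} permutation with respect to an arbitrary ambient order buys you nothing: you could have declared that order to be the one in which $p_1$ is increasing, making the extraction of $Y$ vacuous; the real difficulty, aligning the direction of $p_1$ with the common order of the rest, is exactly the point left unproved.

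The paper's proof avoids this by reversing the order of the two operations. It normalizes so that the first permutation of $F$ is the identity, applies the inductive hypothesis to the first $s$ permutations (so the binary sub-formation it produces has its first permutation equal to a restriction of the identity, which forces its underlying permutation to be the natural integer order), and only then applies Erd\H{o}s--Szekeres to the \emph{last} permutation restricted to those letters, with respect to that same integer order. The monotone subsequence so found is automatically monotone in the order underlying the binary formation, and restricting everything to those $r$ letters stays binary. Your argument could be repaired, for example by performing the Erd\H{o}s--Szekeres extraction on $p_1$ with respect to the order in which $p_2$ lists the letters, or by strengthening the induction to control the underlying permutation of the output; but as written the claim that ``$\sigma$ can be chosen to be the $\pi$-increasing order of $Z$'' is false, and the proof does not go through.
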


\begin{proof} We prove by induction on $s$ that every $((r-1)^{2^{s-1}}+1, s)$-formation contains a binary $(r, s)$-formation. Clearly this is true for $s =1$. For the inductive hypothesis fix $s$ and suppose for every $r\geq 1$ that each $((r-1)^{2^{s-1}}+1, s)$-formation contains a binary $(r, s)$-formation. 

Consider any $((r-1)^{2^{s}}+1, s +1)$-formation $F$. Without loss of generality suppose that the first permutation of $F$ is $I_{(r-1)^{2^{s}}+1}$. By inductive hypothesis the first $s$ permutations of $F$ contain a binary $((r-1)^{2}+1, s)$-formation $f$. By the Erdos-Szekeres theorem, every sequence of $(x-1)^{2}+1$ distinct integers contains an increasing or decreasing subsequence of length $x$. Therefore the last permutation of $F$ contains an increasing or decreasing subsequence of length $r$ on the letters of $f$. Thus $F$ contains a binary $(r, s+1)$-formation. 
\end{proof}

\begin{corollary} \label{2.2} If $u$ has $r$ distinct letters, then every binary $(r, s)$-formation contains $u$ if and only if $s \geq \mathit{fw}(u)$. \end{corollary}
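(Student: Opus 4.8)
The plan is to prove Corollary~\ref{2.2} by combining Lemma~\ref{2.1} with the definition of $\mathit{fw}(u)$ and with the monotonicity properties of binary formations.

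First I would establish the easy direction: if every binary $(r,s)$-formation contains $u$, then $s \geq \mathit{fw}(u)$. This follows almost immediately from the definition of $\mathit{fw}$ together with Lemma~\ref{2.1}. Indeed, suppose every binary $(r,s)$-formation contains $u$. Set $N = (r-1)^{2^{s-1}}+1$. By Lemma~\ref{2.1}, every $(N,s)$-formation contains a binary $(r,s)$-formation, which in turn contains $u$. Hence every $(N,s)$-formation contains $u$, so by the definition of formation width $\mathit{fw}(u) \leq s$, i.e.\ $s \geq \mathit{fw}(u)$.

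Next I would handle the reverse direction: if $s \geq \mathit{fw}(u)$, then every binary $(r,s)$-formation contains $u$, where $r$ is the number of distinct letters of $u$. Here the key observation is a monotonicity statement for binary formations: if a binary $(r,s')$-formation $g$ contains $u$, then any binary $(r,s)$-formation with $s \geq s'$ also contains $u$. One sees this because a binary $(r,s)$-formation consists of $s$ blocks each equal to some fixed permutation $p$ of the $r$ letters or its reverse; by deleting permutations from the end (or more carefully, noting that one can pass from a binary formation on $s$ blocks to one on $s'$ blocks by restricting to a suitable set of $s'$ consecutive blocks, all on the same $r$ letters), any binary $(r,s)$-formation contains a binary $(r,s')$-formation on the same letter set, hence contains $u$. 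So it suffices to show that \emph{some} binary $(r,\mathit{fw}(u))$-formation contains $u$, and then that in fact \emph{every} binary $(r,\mathit{fw}(u))$-formation contains $u$. For the latter, let $s = \mathit{fw}(u)$ and let $f$ be an arbitrary binary $(r,s)$-formation. Applying Lemma~\ref{2.1} in reverse spirit: take $N = (r-1)^{2^{s-1}}+1$; by the definition of $\mathit{fw}(u)=s$ there exists $r'$ such that every $(r',s)$-formation contains $u$ --- but I want to conclude it for the specific value $r' = N$ or to reduce to binary formations on exactly $r$ letters. The cleanest route is: since $\mathit{fw}(u) = s$, there is some $r_0$ with every $(r_0,s)$-formation containing $u$; without loss of generality $r_0 \geq r$ and we may take $r_0 = N = (r-1)^{2^{s-1}}+1$ (enlarging $r_0$ only makes the statement easier, since an $(r_0,s)$-formation on more letters contains an $(r,s)$-formation after relabelling --- or rather we argue directly). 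Given the arbitrary binary $(r,s)$-formation $f$ on letters $a_1,\dots,a_r$, we pad it to an $(N,s)$-formation $F$ by inserting $N-r$ extra letters into each block in a way consistent with the binary structure (placing all new letters in increasing order at the start of an $I$-block, in the reverse order in a $D$-block, so that $F$ is itself binary and its restriction to $\{a_1,\dots,a_r\}$ is exactly $f$). Then $F$ is an $(N,s)$-formation, so it contains $u$; since $u$ has only $r$ distinct letters and $F \setminus \{a_1,\dots,a_r\}$ restricted appropriately\dots here one must check that the copy of $u$ inside $F$ can be taken to lie in $f$ --- this requires that the padding letters are ``useless'' for forming $u$, which holds because $u$ has exactly $r$ distinct letters and $F$ has exactly $s = \mathit{fw}(u)$ blocks, so by an argument that the relevant machinery (the tightness of $\mathit{fw}$) forces a copy on any $r$-subset of letters, or more simply because we choose the padding to be on letters outside the support of some copy. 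The safe way is instead: observe that $\mathit{fw}(u) = s$ means there is an $r'$ such that every $(r',s)$-formation contains $u$, and an arbitrary binary $(r,s)$-formation $f$, being in particular an $(r,s)$-formation, contains $u$ provided $r \geq \mathit{fl}(u)$; and $r \geq \mathit{fl}(u)$ holds since $r$ is the number of distinct letters of $u$ and $\mathit{fl}(u)$ as defined is at most\dots

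I expect the main obstacle to be exactly this last point: showing that every binary $(r,s)$-formation (not just some $(r_0,s)$-formation for a large $r_0$) contains $u$ when $s = \mathit{fw}(u)$ and $r = \#\{\text{distinct letters of }u\}$. The resolution is the padding construction: given a binary $(r,s)$-formation $f$, build a binary $((r-1)^{2^{s-1}}+1,\,s)$-formation $F$ whose restriction to the original $r$ letters is $f$, by inserting the additional letters so that within each $I$-block they appear (in a fixed increasing order) before all letters of $f$ and within each $D$-block they appear (in the reverse fixed order) after all letters of $f$; one checks $F$ is a valid formation. Then $F$ contains $u$ by definition of $\mathit{fw}(u)=s$ (via Lemma~\ref{2.1}, every $((r-1)^{2^{s-1}}+1,s)$-formation contains a binary $(r,s)$-formation hence, after noting $\mathit{fw}(u) \le s$ forces $u$ to appear, $F$ contains $u$); and the copy of $u$ in $F$ uses only $r$ distinct letters, all of which may be assumed to be among the $r$ original letters of $f$ since the padding letters, being placed at the extreme ends of each block in a monotone pattern, can be swapped out for original letters without destroying the copy (or by re-examining: if a copy of $u$ in $F$ uses a padding letter $x$, replace $x$ throughout by any original letter not used in the copy --- possible because $u$ has exactly $r$ distinct letters and there are $r$ original letters, so a counting/greedy argument lets us reduce to a copy on the original $r$ letters). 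Hence $f$ contains $u$. Combined with the monotonicity in $s$ and the forward direction above, this gives the biconditional.
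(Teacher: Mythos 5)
Your forward direction (every binary $(r,s)$-formation contains $u$ $\Rightarrow$ $s\geq \mathit{fw}(u)$) is correct and identical to the paper's: Lemma~\ref{2.1} upgrades the binary hypothesis to all $((r-1)^{2^{s-1}}+1,s)$-formations. The reverse direction, however, has a genuine gap at exactly the point you flagged. After padding an arbitrary binary $(r,s)$-formation $f$ to a binary $(N,s)$-formation $F$ and extracting a copy of $u$ from $F$, you propose to land the copy inside $f$ by ``replacing a padding letter $x$ throughout by any original letter not used in the copy.'' This does not work: the replacement letter generally occupies a different position in the base permutation of $F$, so inside a single block its occurrence may fall on the other side of some letter already used by the copy, changing the within-block order and hence the isomorphism type of the chosen subsequence. (Concretely, with original base order $o_1 o_2 o_3$ and padding letter $x$ preceding all of them, a copy on $\{x,o_1,o_3\}$ sees $x$ before $o_1$ in every $I$-block, whereas the unused letter $o_2$ comes after $o_1$; substituting $o_2$ for $x$ occurrence-by-occurrence destroys the pattern.) Your fallback via $\mathit{fl}(u)$ is also a dead end, since $\mathit{fl}(u)$ is in general much larger than $r$ (Corollary~\ref{2.4} only bounds it by $(r-1)^{2^{\mathit{fw}(u)-1}}+1$).

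The missing observation is that a binary $(r,s)$-formation is determined up to isomorphism by its sign pattern (which blocks equal the base permutation and which equal its reverse). Any copy of $u$ in $F$ uses exactly $r$ distinct letters, and the restriction of $F$ to those $r$ letters is a binary $(r,s)$-formation with the same sign pattern as $f$, hence isomorphic to $f$ via the \emph{order-preserving} relabeling matching the two base permutations --- not via an arbitrary one-letter swap. That isomorphism transports the copy of $u$ into $f$. This is also the content of the paper's (terser) argument, which runs in the contrapositive and needs no padding: if some binary $(r,s-1)$-formation $f$ avoids $u$, then for every $z\geq r$ the binary $(z,s-1)$-formation extending $f$ also avoids $u$ (any copy of $u$ in it would restrict to a binary $(r,s-1)$-formation isomorphic to $f$), so no $z$ witnesses $\mathit{fw}(u)\leq s-1$. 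Once you replace your substitution step with this isomorphism argument, your padded version goes through as well.
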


\begin{proof} If for some $s$ every binary $(r, s)$-formation contains $u$, then there exists a function $\gamma(r, s)$ such that every $(\gamma(r, s), s)$-formation contains $u$. Thus $\mathit{fw}(u)\leq s$.

If some binary $(r, s -1)$-formation $f$ avoids $u$, then for every $z\geq r$ the binary $(z, s -1)$-formations which contain $f$ will avoid $u$. Hence $\mathit{fw}(u)> s -1$.
\end{proof}

\begin{corollary} \label{2.3} If $u$ is a nonempty sequence and $v$ is obtained from $u$ by inserting a single occurrence of a letter which has no occurrence in $u$, then $\mathit{fw}(u)= \mathit{fw}(v)$.\end{corollary}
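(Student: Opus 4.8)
\emph{Proof idea.}  The plan is to prove $\mathit{fw}(u)\le\mathit{fw}(v)$ and $\mathit{fw}(v)\le\mathit{fw}(u)$ separately.  The first inequality is immediate: deleting the inserted occurrence from $v$ exhibits $u$ as contained in $v$, so Lemma~\ref{1.2} gives $\mathit{fw}(u)\le\mathit{fw}(v)$.  For the reverse inequality I would set $r$ equal to the number of distinct letters of $u$ and $s=\mathit{fw}(u)$, and use Corollary~\ref{2.2} to reduce to showing that every binary $(r+1,s)$-formation $F$ contains $v$.  After relabelling one may assume the underlying permutation of $F$ is the identity on $\{1,\dots,r+1\}$, so $F=P_1\cdots P_s$ with each $P_i\in\{I_{r+1},D_{r+1}\}$; write $v=\alpha z\beta$ with $\alpha\beta=u$ and $z$ the new letter.

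The key idea will be to pick one letter $c^{*}\in\{1,\dots,r+1\}$ to serve as $z$ and to locate $u$ in $G_{c^{*}}$, the subsequence of $F$ on the remaining $r$ letters.  Since $G_{c^{*}}$ is a binary $(r,s)$-formation, Corollary~\ref{2.2} supplies a copy of $u$ in it, and that copy necessarily uses all $r$ letters of $G_{c^{*}}$, so $c^{*}$ does not occur in it; it therefore suffices to arrange that some occurrence of $c^{*}$ in $F$ falls strictly between the occurrences of the last letter of $\alpha$ and the first letter of $\beta$ in that copy.  When $\alpha$ (resp.\ $\beta$) is empty I would take $c^{*}$ to be the first letter of $P_1$ (resp.\ the last letter of $P_s$), whose corresponding occurrence then lies before (resp.\ after) the whole copy of $u$.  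Otherwise I would fix a copy of $u$ in the abstract binary $(r,s)$-formation whose block types match those of $F$ — it exists by Corollary~\ref{2.2} and transfers to any $G_{c^{*}}$ since matching blocks are order-isomorphic — and let $b_0\le b_1$ be the indices of the blocks containing the last letter of $\alpha$ and the first letter of $\beta$ in this copy.  If $b_1\ge b_0+2$, take $c^{*}=r+1$, so that the whole of block $P_{b_0+1}$ lies between the two parts.  If $b_1=b_0+1$, take $c^{*}\in\{1,r+1\}$ according to the types of $P_{b_0}$ and $P_{b_0+1}$ (a short check of the four patterns $(I,I),(I,D),(D,I),(D,D)$), so that the occurrence of $c^{*}$ ending $P_{b_0}$ or starting $P_{b_0+1}$ lands between the parts.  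If $b_1=b_0=:j$, take $c^{*}$ to have the appropriate rank inside block $j$ so that, after deleting $c^{*}$ to form $G_{c^{*}}$, its unique occurrence in block $j$ of $F$ still sits strictly between the last letter of $\alpha$ and the first letter of $\beta$.  In each case, realizing the chosen copy inside $G_{c^{*}}$ and inserting the selected occurrence of $c^{*}$ for $z$ produces a copy of $v$ in $F$.

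The endpoint cases and the sub-cases $b_1\ge b_0+2$ and $b_1=b_0+1$ should be routine.  The step I expect to be the main obstacle is the sub-case $b_1=b_0$, where $\alpha$ and $\beta$ meet inside a single block: there one has to choose $c^{*}$ by its rank within that block (separately for $I$- and $D$-blocks) and verify carefully that the small shift of ranks caused by removing $c^{*}$ to form $G_{c^{*}}$ preserves the required betweenness.  I would also check the degenerate case $r=1$, where there is no genuine $I$-versus-$D$ distinction and the sub-case $b_1=b_0$ cannot occur.
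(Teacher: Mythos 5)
Your argument is correct, but it executes the key step by a genuinely different route from the paper's. Both proofs obtain $\mathit{fw}(u)\le\mathit{fw}(v)$ from Lemma~\ref{1.2} and reduce the reverse inequality, via Corollary~\ref{2.2}, to showing that every binary formation with $\mathit{fw}(u)$ permutations on a suitable alphabet contains $v$. The paper does this with $2r+1$ letters and no case analysis: in a binary $(2r+1,\mathit{fw}(u))$-formation with first permutation $I_{2r+1}$, the even letters $2,\dots,2r$ form a binary $(r,\mathit{fw}(u))$-formation and hence carry a copy of $u$, and since odd and even letters alternate within every permutation there is automatically an odd letter available before, after, and between any two letters of that copy, so the single new letter of $v$ can be inserted wherever it is needed. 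You instead work with only $r+1$ letters and must decide which letter to sacrifice as the new one, which is what forces your three-way case analysis on the block indices $b_0,b_1$; the cases do go through (in the delicate sub-case $b_1=b_0$, taking $c^{*}$ at position $\rho_0+1$ of block $P_j$ places it strictly between the images of ranks $\rho_0$ and $\rho_1$ after the deletion shift, since those land at positions $\rho_0$ and $\rho_1+1\ge\rho_0+2$), and the endpoint and $r=1$ cases are routine as you say. What your version buys is the sharper fact that every binary $(r+1,\mathit{fw}(u))$-formation already contains $v$ --- optimal in the alphabet size and marginally better for bounding $\mathit{fl}(v)$ --- at the cost of a much longer argument; the paper's parity trick buys a three-sentence proof.
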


\begin{proof} If $u$ has $r$ distinct letters, then every binary $(2r+1, \mathit{fw}(u))$-formation $F$ with first permutation $I_{2r+1}$ has a copy of $u$ using only the even numbers $2, \ldots, 2r$. Since there is at least one odd number between every pair of even numbers in $F$, then the copy of $u$ in $F$ can be extended to a copy of $v$ using an odd number. \end{proof}

\begin{corollary} \label{2.4} If $u$ has $r$ distinct letters, then $\mathit{fl}(u)\leq(r-1)^{2^{\mathit{fw}(u)-1}}+1$.\end{corollary}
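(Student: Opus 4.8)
The plan is to chain together the two results immediately preceding this corollary, Lemma \ref{2.1} and Corollary \ref{2.2}, with essentially no extra work. Write $r$ for the number of distinct letters of $u$ and set $s=\mathit{fw}(u)$. The first observation is that, by Corollary \ref{2.2} applied with this value of $s$ (note $s\geq\mathit{fw}(u)$ trivially, with equality), every binary $(r,s)$-formation contains $u$.

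Next I would feed this into Lemma \ref{2.1}: every $((r-1)^{2^{s-1}}+1,\,s)$-formation contains a binary $(r,s)$-formation, and that binary formation in turn contains $u$ by the previous paragraph. Hence every $((r-1)^{2^{\mathit{fw}(u)-1}}+1,\,\mathit{fw}(u))$-formation contains $u$.

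Finally I would invoke the definition of $\mathit{fl}(u)$ as the \emph{minimum} value of the first parameter for which every such formation (with second parameter $\mathit{fw}(u)$) contains $u$; since we have exhibited one admissible value, namely $(r-1)^{2^{\mathit{fw}(u)-1}}+1$, the minimum is at most this, giving $\mathit{fl}(u)\leq(r-1)^{2^{\mathit{fw}(u)-1}}+1$. The only point that might need a word of justification is that the set of admissible first parameters is upward closed — so that "the minimum exists and is bounded by any particular admissible value" makes sense — but this is immediate because any $(r',\mathit{fw}(u))$-formation with $r'$ larger than an admissible $r_0$ restricts (by deleting all but $r_0$ of the letters) to an $(r_0,\mathit{fw}(u))$-formation, which already contains $u$. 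There is no real obstacle here; the corollary is just the composition of Lemma \ref{2.1} and Corollary \ref{2.2} read through the definition of $\mathit{fl}$.
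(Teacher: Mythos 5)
Your proof is correct and follows essentially the same route as the paper: the paper's one-line argument likewise combines Corollary \ref{2.2} (every binary $(r,\mathit{fw}(u))$-formation contains $u$) with Lemma \ref{2.1} to conclude that every $((r-1)^{2^{\mathit{fw}(u)-1}}+1,\mathit{fw}(u))$-formation contains $u$, whence the bound on $\mathit{fl}(u)$. The closing remark about upward closure is harmless but unnecessary, since the minimum of a nonempty set of positive integers is automatically at most any exhibited element.
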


\begin{proof} Since every binary $(r, \mathit{fw}(u))$-formation contains $u$, then every $((r- 1)^{2^{\mathit{fw}(u)-1}}+1, \mathit{fw}(u))$-formation contains $u$.\end{proof}

The next theorem shows that the upper bound in Lemma \ref{2.1} is tight.

\begin{theorem} \label{2.5} For every $r, s \geq 1$ there exists a $((r-1)^{2^{s-1}}, s)$-formation that avoids every binary $(r, s)$-formation.\end{theorem}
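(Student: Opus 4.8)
The plan is to write down an explicit $((r-1)^{2^{s-1}},s)$-formation $F$ and verify directly that it avoids every binary $(r,s)$-formation. The first thing to record is a reduction that makes the problem purely combinatorial. Since $F$ is a formation, each of its letters occurs exactly $s$ times; so if some subsequence $w$ of $F$ is isomorphic to a binary $(r,s)$-formation $B$, then $w$ has length $rs$ and uses a set $T$ of exactly $r$ letters, each exactly $s$ times, which forces $w$ to be the whole restriction $F|_T$ (the subsequence obtained by deleting all occurrences of letters outside $T$). Hence it suffices to build $F$ so that for every $r$-element set $T$ of letters, $F|_T$ is \emph{not} binary; equivalently, there is no linear order $\prec$ on $T$ for which every one of the $s$ blocks of $F|_T$ is either $\prec$ or its reverse.

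For the construction I would set $d=2^{s-1}$, take the letter set to be $L=\{1,\dots,r-1\}^{d}$ (so $|L|=(r-1)^{2^{s-1}}$), and fix a bijection $\beta$ from $\{1,\dots,d\}$ onto the set of all functions $\{2,\dots,s\}\to\{0,1\}$. For each $i$ with $2\le i\le s$, let $\rho_i\colon L\to L$ be the coordinatewise map that replaces $x_j$ by $r-x_j$ whenever $\beta(j)(i)=1$ and fixes $x_j$ whenever $\beta(j)(i)=0$; each $\rho_i$ is a bijection of $L$. Now let block $1$ of $F$ list the letters of $L$ in lexicographic order, and for $2\le i\le s$ let block $i$ list the letters $x$ in the lexicographic order of $\rho_i(x)$; this is a genuine $((r-1)^{2^{s-1}},s)$-formation. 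The key computation to record is: if $x\ne y$ first differ in coordinate $j$, then $\rho_i(x)$ and $\rho_i(y)$ still first differ in coordinate $j$ (flipping coordinate values never creates or destroys an equality), and at that coordinate $\rho_i$ preserves or reverses the comparison according to whether $\beta(j)(i)$ is $0$ or $1$. Consequently, for $i\ge 2$ the order of $x,y$ in block $i$ agrees with their order in block $1$ if and only if $\beta(j)(i)=0$.

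To finish, suppose for contradiction that $F|_T$ is binary for some $T$ with $|T|=r$. After possibly reversing the witnessing order, block $1$ of $F|_T$ is $\prec$ itself, so for each $i$, block $i$ of $F|_T$ either preserves the block-$1$ order of \emph{every} pair from $T$ or reverses it for \emph{every} pair from $T$. Combining this with the key computation: for any two pairs $\{x,y\},\{x',y'\}\subseteq T$ with first-difference coordinates $j$ and $j'$, and every $i\ge 2$, the statements ``$\beta(j)(i)=0$'' and ``$\beta(j')(i)=0$'' are both equivalent to ``block $i$ preserves block-$1$ order on $T$'', hence $\beta(j)(i)=\beta(j')(i)$; so $\beta(j)=\beta(j')$ and therefore $j=j'$ by injectivity of $\beta$. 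Thus there is a single coordinate $j^{*}$ in which every pair of letters of $T$ first differs, i.e.\ all letters of $T$ agree in coordinates $1,\dots,j^{*}-1$ and are pairwise distinct in coordinate $j^{*}$. But coordinate $j^{*}$ takes only $r-1$ possible values, so $|T|\le r-1$, contradicting $|T|=r$. Hence $F$ avoids every binary $(r,s)$-formation, as required; the degenerate cases $s=1$ (where $d=1$ and $|L|=r-1$) and $r=1$ (where $L=\emptyset$) are covered trivially by the same argument.

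The one genuinely substantive point is the construction. Morally we want an edge-colouring of the complete graph on $(r-1)^{2^{s-1}}$ vertices by $2^{s-1}$ colours with no monochromatic $K_r$ (this is exactly the product/Ramsey colouring, colouring an edge by the first coordinate in which its endpoints differ), but we must realise such a colouring through the relative orders of letters across only $s$ permutations, where the ``colours'' available to us are precisely the $2^{s-1}$ possible patterns of agreement/disagreement with block $1$. The coordinate-flipping lexicographic trick, with $\beta$ matching coordinates of $L$ to these patterns, is what forces the two notions of ``colour'' to coincide, and that is the step I expect to take the most care to set up correctly; everything else is bookkeeping.
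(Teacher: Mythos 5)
Your proof is correct. The opening reduction is valid (each letter of $F$ occurs exactly $s$ times, so any copy of a binary $(r,s)$-formation must use all $s$ occurrences of each of its $r$ letters and hence equals $F|_T$ for some $r$-set $T$), the coordinate-flip construction produces a genuine $((r-1)^{2^{s-1}},s)$-formation, and the final step correctly forces all pairs of $T$ to first differ in a single coordinate, which has only $r-1$ possible values. Your argument realizes the same underlying idea as the paper's --- both put the $2^{s-1}$ possible agree/reverse patterns across permutations $2,\dots,s$ in bijection with $2^{s-1}$ ``digit positions'', each of which can support only $r-1$ mutually consistent letters --- but the construction and verification are genuinely different. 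The paper builds its formation iteratively, obtaining permutation $k+1$ from permutation $k$ by a positional ``$k$-swap'' that reverses sub-blocks at prescribed scales, and verifies avoidance via Lemma~\ref{2.6}, an induction on $k$ showing that any letter set with consistent order behaviour stays confined between two block scales whose gap eventually shrinks to a single digit. You instead define each permutation independently as a lexicographic order after flipping a pattern-dependent set of coordinates, and verify everything with one local computation about first-differing coordinates. Your route is shorter to check and makes explicit the connection to the product colouring witnessing $R(r;2^{s-1})>(r-1)^{2^{s-1}}$; the paper's route keeps the intermediate block-confinement information (which pair patterns land in which blocks at each stage), at the cost of a more delicate induction.
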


\begin{proof} We construct the desired formation $\mathit{F}(r, s)$ one permutation at a time. Define an $\alpha$-{\it block} in $\mathit{F}(r, s)$ to be a block of numbers in a permutation from positions $(k-1)(r-1)^{\alpha}+1$ to $k(r-1)^{\alpha}$ for some $k$. For $k\leq s -1$ define a $k$-{\it swap} on a permutation of length $(r-1)^{2^{s-1}}$ as follows: For every even $i, 1<i\leq 2^{k}$, a $k$-swap reverses the placement of the $(i-1)2^{s-k-1}$-blocks in each $i2^{s-k-1}$-block. For example if $(r, s) =(3,3)$, then a $1$-swap on $1234567890ABCDEF$ produces $CDEF90AB56781234$.

Let permutation 1 of $\mathit{F}(r, s)$ be the identity permutation on the letters $1, \ldots, (r-1)^{2^{s-1}}$. To form permutation $k+1$ of $\mathit{F}(r, s)$, perform a $k$-swap on permutation $k$. The next lemma about $\mathit{F}(r, s)$ will imply that $\mathit{F}(r, s)$ avoids every binary $(r, s)$-formation.

\begin{lemma} \label{2.6} Consider any set $B$ of distinct numbers occurring in each of the first $k$ permutations of $\mathit{F}(r, s)$ with the same or reverse order in adjacent permutations. Let $i(k)= \sum_{j=1}^{k-1}e_{j}2^{k-j-1}$ where $e_{j}=1$ if the elements in $B$ reverse order from permutation $j$ to permutation $j+1$ and $e_{j}=0$ otherwise. Then in permutation $k$ the elements of $B$ are contained in different $i(k)2^{s-k}$-blocks, but the same $(i(k)+1)2^{s-k}$-block. \end{lemma}

\begin{proof} We induct on $k$. When $k=1$, $i(k)=0$. The entire permutation is a $2^{s-1}$-block and $0$-blocks are individual elements, so the lemma is true when $k = 1$.

For the inductive hypothesis, suppose that in permutation $k$ the elements of $B$ are contained in a single $(i(k)+1)2^{s-k}$-block but different $i(k)2^{s-k}$-blocks. Consider any set $B$ of distinct numbers occurring in each of the first $k+1$ permutations of $\mathit{F}(r, s)$ with the same or reverse order in adjacent permutations.

Now consider the $k$-swap that sends permutation $k$ of $\mathit{F}(r, s)$ to permutation $k+1$. The parts of the swap that reverse the placement of the $(j-1)2^{s-k-1}$-blocks in each $j2^{s-k-1}$-block for even $j\geq 2i(k)+4$ do not affect the order of the elements of $B$ since the elements of $B$ are contained in a single $(2i(k)+2)2^{s-k-1}$-block.

The parts of the swap that reverse the placement of the $(j-1)2^{s-k-1}$-blocks in each $j2^{s-k-1}$-block for even $j\leq 2i(k)$ also do not affect the order of the elements of $B$ since the elements of $B$ are contained in different $(2i(k))2^{s-k-1}$-blocks. Thus the only part of the swap which is relevant to the order of the elements in $B$ is the reversal of the placement of the $(2i(k)+1)2^{s-k-1}$-blocks inside each $(2i(k)+2)2^{s-k-1}$-block.

If the order of elements in $B$ reverses from permutation $k$ to permutation $k+1$, then $i(k+1)=2i(k)+1$. All the elements of $B$ must be contained in different $(2i(k)+1)2^{s-k-1}$-blocks, or else the $k$-swap would not reverse their order. By inductive hypothesis the elements of $B$ are contained in the same $(i(k+1)+1)2^{s-k-1}$-block.

If the order of elements in $B$ is the same in permutation $k$ and permutation $k+1$, then $i(k+1)=2i(k)$. The elements of $B$ must be contained in the same $(2i(k)+1)2^{s-k-1}$-block, or else the $k$-swap would not preserve their order. By inductive hypothesis the elements of $B$ are contained in different $i(k+1)2^{s-k-1}$-blocks.
\end{proof}

Given any set $B$ of distinct numbers contained in every permutation of $\mathit{F}(r, s)$ whose order either stays the same or reverses between adjacent permutations, there is some $i$ such that the elements of $B$ are in different $i$-blocks, but the same $(i+1)$-block of permutation $s$. Since there are $r-1$ $i$-blocks in each $(i+1)$-block, then $r-1$ is the maximum possible number of elements in $B$.
\end{proof}

\section{Using binary formations to compute $\mathit{fw}$}\label{sec2.2}

If $u$ has one distinct letter, then $\mathit{fw}(u)$ is the length of $u$. If $u$ has two distinct letters, then $\mathit{fw}(u)$ also depends only on the length of $u$.

\begin{lemma} \label{2.7} If $u$ has two distinct letters and length $t$, then $\mathit{fw}(u)=t-1$.
\end{lemma}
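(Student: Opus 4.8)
The upper bound is free: Nivasch's bound quoted above gives $\mathit{fw}(v)\le s-r+1$ for any length-$s$ word $v$ on $r$ letters, so $\mathit{fw}(u)\le t-1$. For the matching lower bound, by Corollary~\ref{2.2} it suffices to produce, for each word $u$ on two letters $a,b$ of length $t\ge 3$, a binary $(2,t-2)$-formation avoiding $u$ (the case $t=2$ is trivial, since then $u\cong ab$ and $\mathit{fw}(ab)=1$). Assume $u$ begins with $a$ and let $c_1,\dots,c_{k'}$ be the lengths of its maximal runs, so $\sum_i c_i=t$ and $k'\ge 2$. The plan is to take
\[
W_u \;=\; D_2^{\,c_1-1}\,I_2^{\,c_2}\,D_2^{\,c_3}\,I_2^{\,c_4}\cdots,
\]
a concatenation of $k'$ groups alternating between powers of $D_2$ and of $I_2$, where group $i$ carries exponent $c_i$ for $2\le i\le k'-1$ and exponent $c_i-1$ for $i\in\{1,k'\}$. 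The total number of permutations is $(c_1-1)+c_2+\cdots+c_{k'-1}+(c_{k'}-1)=t-2$, so $W_u$ is a binary $(2,t-2)$-formation; the claim to establish is that $W_u$ avoids $u$, after which Corollary~\ref{2.2} forces $\mathit{fw}(u)>t-2$, hence $\mathit{fw}(u)=t-1$.

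To prove $W_u$ avoids $u$ I would induct on $t$. If $c_1\ge 2$ then $W_u=D_2\cdot W_{u'}$, where $u'=a^{c_1-1}b^{c_2}\cdots$ is $u$ with its first run shortened. A copy of $u=a\cdot u'$ in $W_u$ forces a copy of $u'$ in $W_{u'}$: the symbol playing the initial $a$ goes to some $\ell\in\{1,2\}$, and whether that occurrence lies inside $W_{u'}$, or is the "$1$" at position $2$ of the leading $D_2=21$, or is the "$2$" at position $1$, in each case the whole copy of $u'$ — which begins with $a$'s, hence with $\ell$'s — lands inside $W_{u'}$, contradicting the inductive hypothesis. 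The mirror argument handles $c_{k'}\ge 2$. This leaves the case $c_1=c_{k'}=1$, where $W_u=I_2^{c_2}D_2^{c_3}I_2^{c_4}\cdots$. If moreover every interior run has length $1$, then $u$ is the alternation of length $k'=t$, while $W_u=\mathit{alt}(2,t-2)$ has $t-1$ maximal runs and hence longest alternating subsequence of length $t-1<t$, so $W_u$ avoids $u$. In the remaining case some interior run $c_j\ge 2$, and I would again try to peel: delete one permutation from the $j$-th group of $W_u$ and one letter from the $j$-th run of $u$, and argue that any copy of $u$ in $W_u$ descends to a copy of the shortened word in the shortened formation.

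The main obstacle is precisely that last reduction. Unlike the end-peels, where the deleted permutation sits at an extreme of $W_u$ and a copy of $u$ cannot reach past it, an interior run $r_j$ of $u$ need not be realised inside the $j$-th block-group of $W_u$: its letters, and even letters of the neighbouring runs $r_{j\pm1}$ (which take the opposite value but still occur inside that group), can spill across group boundaries. So one first has to pin down how a copy of $u$ can sit inside the near-alternating word $W_u$ — each group of which is an alternation, and consecutive groups of which are separated by a single doubled letter — and then show that such a copy always wastes a whole permutation of the $j$-th group, which may then be removed together with one letter of $r_j$. I expect this to be an elementary but somewhat fiddly positional analysis; everything else in the argument is bookkeeping.
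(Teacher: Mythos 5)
Your $W_u$ is in fact the same formation the paper uses: encoding each letter of $u$ after the first by $D_2$ (if it equals the first letter) or $I_2$ (otherwise) gives $D_2^{c_1-1}I_2^{c_2}D_2^{c_3}\cdots$, and your $W_u$ is this word with its final permutation removed. The upper bound and your two end-peeling steps are sound. But the proof is not complete: all of the difficulty is concentrated in the interior-run case, which you explicitly leave as an unproven ``fiddly positional analysis.'' The obstruction you name is genuine --- a copy of $u$ in $W_u$ need not place the $j$-th run of $u$ inside the $j$-th group of $W_u$, so deleting one permutation from that group and one letter from that run does not visibly send copies to copies --- and nothing in the proposal resolves it. As written, the lower bound is established only for words all of whose interior runs have length $1$, together with whatever is reachable from those by end-peels; that does not cover all two-letter words (e.g.\ $abbab$ already requires the missing step after the end-peels are exhausted).

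The paper closes exactly this gap by strengthening the induction hypothesis instead of peeling interior letters. It works with the full $(2,t-1)$-formation $f(u)$ (your $W_u$ plus one more permutation) and proves, by induction on $t$ peeling only the \emph{last} letter of $u$, the stronger invariant that every copy of $u$ in $f(u)$ must use the very last letter of $f(u)$; truncating the final permutation then yields a $(2,t-2)$-formation avoiding $u$, and Corollary~\ref{2.2} finishes as in your write-up. The inductive step is a short two-case check on whether the last two letters of $u$ agree or differ, which determines whether the first letter of the appended permutation coincides with the last letter of $f(u')$; no interior deletion is ever needed. To salvage your run-by-run induction you would need a comparable strengthening (e.g.\ pinning down where the image of each run boundary of $u$ can sit inside $W_u$); the last-letter invariant is the cheaper route.
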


\begin{proof} By Lemma \ref{1.3} it suffices to prove this lemma for sequences with different first and second letters. The upper bound follows since every $(2, t- 1)$-formation contains $u$. For the lower bound it suffices to construct a $(2, t-1)$ formation $f(u)$ which only contains copies of $u$ for which the last letter of the copy of $u$ is the last letter of $f(u)$. Therefore the $(2, t-2)$-formation in the first $t-2$ permutations of $f(u)$ avoids $u$, so $\mathit{fw}(u)>t-2$ by Corollary \ref{2.2}.

Assume without loss of generality that $u$ starts with $xy$. Construct $f(u)$ by ignoring the leading $x$ and replacing every $x$ in $u$ by $b a$ and every $y$ by $a b$. Let $u'$ denote the sequence obtained by deleting the last letter of $u$. We prove by induction on the length of $u$ that $f(u)$ contains only copies of $u$ for which the last letter of the copy of $u$ is the last letter of $f(u)$. The first case to consider is $u=xy$.

Since $f(xy)= a b$, then $f(xy)$ contains exactly one copy of the sequence $xy$ and the last letter of the copy of $xy$ is the last letter of $f(xy)$. Suppose by inductive hypothesis that $f(u')$ contains only copies of $u'$ for which the last letter of the copy of $u'$ is the last letter of $f(u')$. If the last two letters of $u$ are the same, then the first letter of the last permutation of $f(u)$ is different from the last letter of $f(u')$, so the last letter of $f(u)$ will be the last letter of any copy of $u$ in $f(u)$. If the last two letters of $u$ are different, then the first letter of the last permutation of $f(u)$ is the same as the last letter of $f(u')$, so the last letter of $f(u)$ will be the last letter of any copy of $u$ in $f(u)$.
\end{proof}

If $u$ has at least three distinct letters, then $\mathit{fw}(u)$ cannot be determined solely from the length of $u$ and the number of distinct letters in $u$. For example $\mathit{fw}(abcabc)=3$ and $\mathit{fw}(abccba)=4$.

The next lemma identifies all sequences $u$ for which $\mathit{fw}(u)=3$. As a result of Corollary \ref{2.3}, deleting any letters which occur just once in $u$ will not change the value of $\mathit{fw}(u)$ unless $u$ has no other letters. We call a sequence {\it reduced} if every distinct letter in the sequence occurs at least twice.

By Lemma \ref{2.7}, $\mathit{fw}(u)=1$ if and only if $u$ is nonempty and each distinct letter in $u$ occurs once, and $\mathit{fw}(u)=2$ if and only if one letter in $u$ occurs twice and every other distinct letter occurs once.

\begin{lemma} \label{2.8} If $u$ is reduced and $\mathit{fw}(u)=3$, then either there exists some $l \geq 2$ for which $u$ is isomorphic to $\mathit{up}(l, 2)$ or $u$ is isomorphic to one of the sequences $aaa$, $aabb$, $abba$, $abcacb$, $abcbac$, $abccab$, or $abcdbadc$.
\end{lemma}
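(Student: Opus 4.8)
The plan is to convert the hypothesis, via Corollary~\ref{2.2}, into a statement about which binary formations contain $u$, and then to run a case analysis. If $u$ has $c$ distinct letters, then by Corollary~\ref{2.2} the equality $\mathit{fw}(u)=3$ says exactly that $u$ is contained in every binary $(c,3)$-formation but not in every binary $(c,2)$-formation. Relabelling a formation does not change whether it contains $u$, so we may take the repeated permutation to be $I_c$; then the binary $(c,3)$-formations are, up to isomorphism, $I_c^3$, $I_c^2D_c$, $I_cD_cI_c$ and $I_cD_c^2$, while the binary $(c,2)$-formations are $I_c^2$ and $I_cD_c$. So the first task is to describe all reduced $u$ that embed into each of those four width-$3$ formations but into neither $I_c^2$ nor $I_cD_c$.

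First I would handle the multiplicities. If $c=1$ then $\mathit{fw}(u)$ equals the length of $u$, so $\mathit{fw}(u)=3$ forces $u=aaa$. If $c\ge2$ and some letter $x$ occurred at least three times, then restricting $u$ to $x$ and any other letter would give a two-letter subsequence of length at least $5$, hence of formation width at least $4$ by Lemma~\ref{2.7}; by Lemma~\ref{1.2} that contradicts $\mathit{fw}(u)=3$. Since $u$ is reduced, every letter therefore occurs exactly twice, so $u$ has length $2c$. One also checks routinely that such a $u$ has $\mathit{fw}(u)\ge3$ automatically: if $u$ embedded into $I_c^2$ it would be isomorphic to $\mathit{up}(c,2)$, which does not embed into $I_cD_c$ when $c\ge2$. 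Thus for $c\ge2$ the only binding conditions are the four width-$3$ embeddings.

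Next I would peel off repeated end letters. If the first two letters of $u$ are equal, say to $a$, then $u=au'$ with $u'$ beginning with $a$, so $\mathit{fw}(u')=2$ by Lemma~\ref{1.3}. In $u'$ the letter $a$ occurs once and every other letter twice; deleting that single $a$ preserves formation width by Corollary~\ref{2.3} and yields a reduced sequence on $c-1$ letters of formation width $2$. By the characterization of formation width $2$ recalled above, this forces $c-1=1$ and that sequence to be $bb$, so $u=aabb$. Running the same argument on the reverse of $u$ (which has the same formation width) disposes of the case that the last two letters of $u$ coincide, again giving $u\cong aabb$. It remains to treat a reduced $u$ on $c\ge2$ letters, each occurring exactly twice, whose first two letters are distinct and whose last two letters are distinct.

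This last case is where I expect the real work to lie, and it is the main obstacle. The goal is to show that such a $u$ embedding into all four width-$3$ binary formations is isomorphic to $\mathit{up}(l,2)$ (with $l=c$) or to one of $abba$, $abcacb$, $abcbac$, $abccab$, $abcdbadc$. The approach I would take: fix an embedding of $u$ into $I_c^3$ and record, for each letter, which two of the three blocks contain its two occurrences; this partitions the letters into three classes and fixes the relative order inside each block, and then the embeddings into $I_cD_cI_c$ and the remaining width-$3$ formations sharply restrict how these classes can interleave. In chord-diagram language (each letter being a chord joining its two occurrences) one has $u\cong\mathit{up}(c,2)$ exactly when all chords pairwise cross, so the substance is to show that as soon as some pair of chords fails to cross, that pair together with its interaction with the other chords is forced into one of the listed small configurations; in particular nothing new should survive once $c\ge5$, and for each of $c=2,3,4$ one is left with finitely many explicit words to check. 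The delicate points will be making this case analysis genuinely exhaustive (I would try to organize it around an extremal non-crossing pair, or induct on $c$), correctly ruling out the near-misses such as $abccba$ and $\mathit{up}(c,3)$ (which fail already to embed into $I_c^3$), and verifying directly — using Lemma~\ref{1.3}, Lemma~\ref{2.7}, and comparison with $I_c^2$ and $I_cD_c$ — that each of the six sporadic words really does have formation width exactly $3$.
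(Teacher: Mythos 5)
Your setup is sound and matches the paper's strategy: Corollary \ref{2.2} reduces $\mathit{fw}(u)=3$ to containment in the four binary $(c,3)$-formations $I_c^3$, $I_c^2D_c$, $I_cD_cI_c$, $I_cD_c^2$; the multiplicity argument (via Lemma \ref{2.7} and Lemma \ref{1.2}) correctly forces every letter to occur exactly twice once $u\not\cong aaa$; and your peeling of repeated end letters via Lemma \ref{1.3}, Corollary \ref{2.3}, and the characterization of formation width $2$ is correct and in fact streamlines the paper's treatment, which instead eliminates the thirteen words of the form $xxv$ or $vxx$ one family at a time. However, the proof stops exactly where the real content begins. For the remaining case --- $u$ reduced on $c\geq 2$ letters, each occurring twice, with distinct first two and distinct last two letters --- you describe a chord-diagram framework and assert that ``nothing new should survive once $c\geq 5$'' and that $c=2,3,4$ leave ``finitely many explicit words to check,'' but you neither perform those checks nor prove the $c\geq 5$ claim. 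You yourself flag exhaustiveness as a ``delicate point,'' which is an acknowledgment that the argument is incomplete.

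Concretely, three things are missing. First, for $c=3$ there remain words such as $caabbc$, $abccba$, $acabcb$, $acabbc$, and their reflections, none of which is killed by your peeling step; the paper disposes of each by exhibiting a specific binary $(3,3)$-formation ($xyzxyzxyz$, $xyzzyxxyz$, $xyzzyxzyx$, or its reverse) avoiding it, and you would need the analogous finite verification. Second, for $c=4$ one must first argue that $u$ is forced to be a $(4,2)$-formation (the paper does this by finding a reduced three-letter non-$(3,2)$-formation inside any counterexample) and then eliminate all $(4,2)$-formations other than $\mathit{up}(4,2)$ and $abcdbadc$, again by explicit avoiding formations. Third, and most importantly, the claim that no new examples arise for $c\geq 5$ requires an actual argument: the paper shows that a $(5,2)$-formation cannot have all of its $(4,2)$-subformations isomorphic to $abcdbadc$ (it would need both $abcdbadc$ and $acdecaed$ as subsequences, which is impossible) and cannot mix $abcdbadc$-type with $\mathit{up}(4,2)$-type subformations, and then reduces $c>5$ to $c=5$. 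Your proposal contains no substitute for this step, so as written it does not establish the lemma.
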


\begin{proof} Since $u$ is reduced, then every distinct letter in $u$ occurs at least twice. If any letter in $u$ occurs three times, then it is the only letter in $u$ and $u$ is isomorphic to $aaa$, or else $\mathit{fw}(u)\geq 4$ by Lemma \ref{2.7}. If $u$ is not isomorphic to $aaa$, then every distinct letter in $u$ occurs twice.

Suppose $u$ is not isomorphic to $\mathit{up}(l, 2)$ for any $l \geq 2$. Then there exist two distinct letters $x$ and $y$ in $u$ for which the subsequence consisting of occurrences of $x$ and $y$ is isomorphic to $aabb$ or $abba$. If $x$ and $y$ are the only distinct letters in $u$, then $u$ is isomorphic to $aabb$ or $abba$.

If $u$ has three distinct letters, then $u$ is isomorphic to a sequence obtained by adding two occurrences of $c$ anywhere in $aabb$ or $abba$, so we consider $30$ cases. If $u$ had the form $xxv$ or $vxx$ for some letter $x$ and sequence $v$ of length $4$ with two distinct letters not equal to $x$, then $\mathit{fw}(v)=3$ by Lemma \ref{2.7}, so $\mathit{fw}(u)=4$ by Lemma \ref{1.3}. This eliminates the cases $aabbcc$, $aabcbc$, $aacbbc$, $aabccb$, $aacbcb$, $aaccbb$, $acacbb$, $caacbb$, $accabb$, $cacabb$, $ccaabb$, $abbacc$, and $ccabba$.

The binary $(3,3)$-formation $xyzxyzxyz$ avoids $caabbc$, $abbcca$, $accbba$, $cabbac$, $acbbca$, and $abccba$. The binary $(3,3)$-formation $xyzzyxxyz$ avoids $acabcb$, $abcbca$, and $acbcba$. The binary $(3, 3)$-formation $xyzzyxzyx$ avoids $acabbc$ and $cacbba$. So its reverse avoids $caabcb$ and $abbcac$. Thus each of these sequences have formation width at least $4$ by Corollary \ref{2.2}.

If $u$ is one of the remaining sequences $abcbac$, $acbbac$, $cabbca$, or $cabcba$, then $\mathit{fw}(u)=3$. Thus every reduced sequence $u$ with three distinct letters for which $\mathit{fw}(u)=3$ is a $(3, 2)$-formation. Note that $acbbac$ and $cabbca$ are isomorphic to $abccab$, and $cabcba$ is isomorphic to $abcacb$.

If $u$ has four distinct letters, then $u$ is isomorphic to a sequence obtained by adding two occurrences of $d$ to the sequence $abcabc$, $abcacb$, $abcbac$, or $abccab$. If $u$ was not a $(4, 2)$-formation, then $u$ would contain a reduced sequence $v$ with three distinct letters which was not a $(3, 2)$-formation, so $\mathit{fw}(u)\geq 4$.

We consider each $(4, 2)$-formation with first permutation $abcd$. The binary $(4, 3)$-formation $xyzwxyzwxyzw$ avoids $abcdadcb$, $abcdbdca$, $abcdcbad$, $abcdcbda$, $abcddacb$, $abcd\-dbac$, $abcddbca$, $abcddcab$, and $abcddcba$. The binary $(4,3)$-formation $xyzwxyzwwzyx$ avoids $abcdbacd$, $abcdcabd$, and $abcdcadb$. The binary $(4,3)$-formation $xyzwwzyxxyzw$ avoids $abcd\-acbd$, $abcd\-acdb$, $abcd\-bcad$, $abcd\-bcda$, $abcd\-cdab$, and $abcd\-cdba$. The binary $(4, 3)$-formation $xyzw\-wzyx\-wzyx$ avoids $abcd\-abdc$, $abcd\-adbc$, $abcd\-bdac$, and $abcd\-dabc$. Thus each of these $(4, 2)$-formations have formation width at least $4$ by Corollary \ref{2.2}.

If $u$ is $abcdbadc$, then $\mathit{fw}(u)=3$. If $u$ had five distinct letters, then $u$ must be a $(5, 2)$-formation or else $\mathit{fw}(u)\geq 4$. If $u$ was any $(5, 2)$-formation with first permutation $abcde$, then every $(4, 2)$-formation in $u$ would be isomorphic to $abcdbadc$ or $\mathit{up}(4, 2)$. It is impossible for a $(5, 2)$-formation to have both a subsequence isomorphic to $abcdbadc$ and another subsequence isomorphic to $\mathit{up}(4, 2)$, so every $(4, 2)$-formation in $u$ would be isomorphic to $abcdbadc$ or else $u$ would be isomorphic to $\mathit{up}(5, 2)$. In particular $u$ must have both $abcdbadc$ and $acdecaed$ as subsequences, a contradiction.

If $u$ had $r$ distinct letters for some $r > 5$ and $u$ was not isomorphic to $\mathit{up}(r, 2)$, then $u$ would contain a subsequence of length $10$ with five distinct letters that was not isomorphic to $\mathit{up}(5, 2)$, so $\mathit{fw}(u) > 3$. \end{proof}

The last lemma can also be verified by using the formation width algorithm in the appendix. The next lemma provides an upper bound on $\mathit{fw}(u)$ for every binary formation $u$. It is tight if $u=\mathit{up}(l, t)$ for any $l \geq 2$ and $t\geq 1$.

\begin{lemma} \label{2.9} Let $u=I_{c}^{e_{1}}D_{c}^{e_{2}}I_{c}^{e_{3}}\ldots \mathcal{L}_{c}^{e_{n}}$, where $\mathcal{L}$ is $I$ if $n$ is odd and $D$ if $n$ is even so that $e_{i}>0$ for each $i$ and $ \sum_{i=1}^{n}e_{i}=k$. Then $\mathit{fw}(u)\leq c(k-e_{m})+2e_{m}-1$ for all $m$.
\end{lemma}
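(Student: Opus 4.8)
The plan is to invoke Corollary \ref{2.2}, reducing the claim to ``every binary $(c,s)$-formation $f$ contains $u$'' where $s=c(k-e_m)+2e_m-1$, and then to exhibit an explicit copy of $u$ in $f$. The workhorse will be an elementary observation: reading a binary $(c,\cdot)$-formation $f$ (whose underlying permutation we may take to be the identity) from any block onward, within the next $c$ blocks one can always match a single copy of $1\,2\cdots c$ and, separately, a single copy of $c\,(c-1)\cdots 1$ --- if one of the next $\le c$ blocks is $I_c$ (resp.\ $D_c$) read off the whole permutation there, and otherwise those $c$ blocks are all $D_c$ (resp.\ all $I_c$) and one picks the letters $1,2,\dots,c$ (resp.\ $c,c-1,\dots,1$) one per block. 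Consequently, under either the identity renaming or the reversal renaming (these are the only two renamings we use, and the only two under which every $I_c$ and every $D_c$ of $u$ becomes $1\,2\cdots c$ or $c\,(c-1)\cdots 1$), any $j$ consecutive permutations of $u$ can be matched greedily inside any window of $cj$ consecutive blocks of $f$.

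Now for the main argument. Replacing $u$ by its reversal-renamed copy if necessary (this is isomorphic to $u$, hence has the same formation width and the same run lengths, merely with $I$ and $D$ interchanged), I may assume the $m$-th run of $u$ is $I_c^{e_m}$. Set $L=c\sum_{i<m}e_i$ and $R=c\sum_{i>m}e_i$, so that $L+(2e_m-1)+R=s$; reserve blocks $1,\dots,L$ of $f$ for the prefix $R_1\cdots R_{m-1}$, blocks $L+1,\dots,L+2e_m-1$ for $R_m$, and blocks $L+2e_m,\dots,s$ for the suffix $R_{m+1}\cdots R_n$. Since $2(e_m-1)<2e_m-1$, at least $e_m$ of the $2e_m-1$ reserved blocks for $R_m$ are $I_c$ or at least $e_m$ of them are $D_c$. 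If at least $e_m$ are $I_c$, use the identity renaming: greedily match $R_1\cdots R_{m-1}$ (that is $\sum_{i<m}e_i$ permutations, which fits in the first $L$ blocks by the elementary fact), match $I_c^{e_m}$ by reading one copy of $1\,2\cdots c$ off each of $e_m$ of those $I_c$ blocks, and greedily match $R_{m+1}\cdots R_n$ inside the last $R$ blocks. If instead at least $e_m$ are $D_c$, do the same thing under the reversal renaming, for which $R_m$ becomes $D_c^{e_m}$ and is matched one copy of $c\,(c-1)\cdots 1$ per $D_c$ block; the prefix and suffix are still runs of increasing/decreasing permutations and still cost $\le c$ blocks apiece. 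In both cases $f$ contains $u$, so $\mathit{fw}(u)\le s$.

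The one point that needs care --- and which I expect to be the main obstacle to state cleanly rather than to prove --- is the apparent circularity between choosing the renaming and matching the prefix. It is resolved by the observation that the window of $2e_m-1$ blocks reserved for $R_m$ is determined by $f$ together with $L$, and $L=c\sum_{i<m}e_i$ does not depend on the renaming; so one inspects that window first, picks the renaming that makes $R_m$'s orientation agree with the majority orientation in the window, and only then runs the two greedy matches, whose lengths $\le L$ and $\le R$ are valid for either renaming by the elementary fact. The remaining items --- the boundary cases $m=1$, $m=n$, and $n=1$, and the bookkeeping that the prefix match terminates by position $L$ (so $R_m$ may begin at position $L+1$) and the suffix match by position $s$ --- should be routine.
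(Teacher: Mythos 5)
Your proof is correct and follows essentially the same route as the paper's: both reserve a window of $2e_m-1$ consecutive permutations for the $m$-th run, extract $e_m$ identically oriented permutations from it by pigeonhole, and match each remaining permutation of $u$ at a cost of $c$ blocks by taking one letter per block. The paper compresses the renaming bookkeeping into the phrase ``this copy of $\mathit{up}(c,e_m)$ can be extended,'' which is exactly the non-circularity point you spell out explicitly.
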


\begin{proof} Let $k_{1}= \sum_{i=1}^{m-1}e_{i}$ and $k_{2}= \sum_{i=m+1}^{j}e_{i}$. In any binary $(c, c(k-e_{m})+ 2e_{m}-1)$-formation $f$, there is a copy of $\mathit{up}(c, e_{m})$ in permutations $ck_{1}+1$ through $ck_{1}+2e_{m}-1$ of $f$ by the pigeonhole principle. This copy of $\mathit{up}(c, e_{m})$ can be extended to make a copy of $u$ in $f$ by using one letter from each of the remaining $ck_{1}+ck_{2}$ permutations of $f$. Thus $\mathit{fw}(u)\leq c(k-e_{m})+2e_{m}-1$ by Corollary \ref{2.2}. 
\end{proof}

\begin{theorem} \label{2.10} $\mathit{fw}(\mathit{up}(l, t))=2t-1$ for every $l \geq 2$ and $t\geq 1$.
\end{theorem}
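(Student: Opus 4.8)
The plan is to prove $\mathit{fw}(\mathit{up}(l,t)) = 2t-1$ by establishing both inequalities. The upper bound $\mathit{fw}(\mathit{up}(l,t)) \le 2t-1$ follows immediately from Lemma~\ref{2.9} with $c=l$, $k=t$, $n=t$, and $e_i=1$ for all $i$: we get $\mathit{fw}(\mathit{up}(l,t)) \le l(t-1) + 2 - 1 = l(t-1)+1$, which is worse than $2t-1$ in general, so that choice is not enough. Instead I would pick $m$ so that $e_m$ absorbs as much as possible; but since all $e_i = 1$, Lemma~\ref{2.9} gives only $l(t-1)+1$. So for the upper bound I would argue directly: in any binary $(l, 2t-1)$-formation $f$, I want to find a copy of $\mathit{up}(l,t) = I_l^t$. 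Since $f$ is binary, it is a concatenation of $2t-1$ permutations each equal to $I_l$ or $D_l$. I claim each increasing run can contribute one full $I_l$, and crucially a single permutation $D_l$ still lets the letters $1,\dots,l$ appear in increasing order if we are willing to split them across two consecutive permutations — actually the clean statement is that among any $2t-1$ permutations (each $I_l$ or $D_l$) on letters $\{1,\dots,l\}$, one can select, greedily, $t$ disjoint blocks of permutations whose concatenation contains $I_l^t$. The standard trick: process the $2t-1$ permutations left to right; an $I_l$ by itself yields one copy of $I_l$; a maximal stretch of $D_l$'s of length $p$ together with... Rather than belabor this, the cleanest route is: by Corollary~\ref{2.2} it suffices to show every binary $(l,2t-1)$-formation contains $\mathit{up}(l,t)$, and this I would prove by the pigeonhole/greedy argument that any binary formation of length $2t-1$ on $l$ letters contains $I_l^t$ — noting that consecutive equal permutations help and any orientation reversal costs at most one permutation, so $t$ copies of $I_l$ need at most $t + (t-1) = 2t-1$ permutations in the worst alternating case.

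For the lower bound $\mathit{fw}(\mathit{up}(l,t)) \ge 2t-1$, by Corollary~\ref{2.2} it suffices to exhibit a binary $(l, 2t-2)$-formation that avoids $\mathit{up}(l,t) = I_l^t$. The natural candidate is $\mathit{alt}(l, 2t-2)$, the strictly alternating binary formation $I_l D_l I_l D_l \cdots$ of length $2t-2$. I would show $\mathit{alt}(l,2t-2)$ does not contain $I_l^t$. The key observation: to embed $I_l^t$ one needs $t$ occurrences of the full increasing pattern on $l$ letters, and in an alternating sequence $I_l D_l I_l D_l \cdots$, going from one full increasing copy to the next requires "passing through" at least one more permutation — more precisely, since $l \ge 2$, the letters $1$ and $2$ alone form the pattern $(12)^t$, an alternation of length $2t$, and $\mathit{alt}(l, 2t-2)$ restricted to letters $1,2$ is exactly $121212\cdots$ of length $2t-2$ (an alternation of length $2t-2$), which cannot contain an alternation of length $2t$. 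Hence $\mathit{alt}(l,2t-2)$ avoids $\mathit{up}(l,t)$, giving $\mathit{fw}(\mathit{up}(l,t)) > 2t-2$.

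Combining the two bounds yields $\mathit{fw}(\mathit{up}(l,t)) = 2t-1$. I expect the main obstacle to be the upper bound direction: making the greedy embedding of $I_l^t$ into an arbitrary binary $(l, 2t-1)$-formation fully rigorous, in particular verifying that the worst case is precisely the strictly alternating one and that $2t-1$ permutations always suffice regardless of how the $I_l$'s and $D_l$'s are arranged. The cleanest phrasing is probably an induction on $t$: a binary formation of length $2t-1$ either starts with $I_l$ (peel it off, recurse on length $2t-2 \ge 2(t-1)-1$... wait, $2t-2 = 2(t-1)$, which has one extra permutation to spare) or starts with $D_l$; in the latter case the first two permutations are $D_l$ followed by $I_l$ or $D_l$, and either way one extracts an $I_l$ from within the first two permutations and recurses on the remaining $2t-3 = 2(t-1)-1$ permutations. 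Handling the base case $t=1$ (every binary $(l,1)$-formation, i.e., $I_l$ or $D_l$, must contain $I_l$ — true only for $I_l$, but $D_l$ also "contains" $\mathit{up}(l,1)=I_l$ up to... no, containment allows renaming, and $\mathit{up}(l,1)$ as an abstract sequence with $l$ distinct letters is contained in any permutation of $l$ letters) resolves this. I would also double-check the $t=1$ and $t=2$ cases against the known facts in the excerpt ($\mathit{fw}(aba) $ corresponds to... and $\mathit{fw}(abab)=3 = 2\cdot2-1$, consistent with $\mathit{Ex}(abab,n)=2n-1$ and $F_{r,3}(n)=O(n)$).
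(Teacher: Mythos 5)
Your overall skeleton (lower bound from the contained alternation $(ab)^t$, upper bound by showing every binary $(l,2t-1)$-formation contains $\mathit{up}(l,t)$ via Corollary~\ref{2.2}) matches the paper's, but you stumble on the upper bound by misreading Lemma~\ref{2.9}. The sequence $\mathit{up}(l,t)=I_l^t$ is a \emph{single} block in the normal form $I_c^{e_1}D_c^{e_2}\cdots$, i.e.\ $n=1$ and $e_1=t$; your choice $n=t$, $e_i=1$ would describe $\mathit{alt}(l,t)$, since consecutive blocks alternate between $I$ and $D$. With $n=1$, $k=e_1=t$ and $m=1$ the lemma gives $\mathit{fw}(u)\le c(k-e_1)+2e_1-1=2t-1$ on the nose; this one-line application is the paper's entire upper-bound argument. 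The direct argument you fall back on is fine once stated cleanly --- among the $2t-1$ permutations of a binary formation, at least $t$ have the same orientation by pigeonhole, and $t$ identical permutations literally form a copy of $\mathit{up}(l,t)$ --- but your proposed induction is not a safe substitute: when the formation begins $D_lD_l$ you cannot extract an increasing copy of $I_l$ from those two permutations for $l\ge 3$, so the step ``extract an $I_l$ from within the first two permutations and recurse'' fails (you would have to switch to hunting for $D_l^t$ instead, which is what the pigeonhole argument does implicitly).

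For the lower bound, the paper simply quotes $\mathit{fw}((ab)^t)=2t-1$ (established in the introduction) together with Lemma~\ref{1.2}. Your version via $\mathit{alt}(l,2t-2)$ reaches the same conclusion but contains a factual slip: the restriction of $I_lD_lI_lD_l\cdots$ to the letters $1,2$ is $1221\,1221\cdots$ of length $2(2t-2)$, not $1212\cdots$ of length $2t-2$. The conclusion survives because the longest two-letter alternation inside $(1221)^{t-1}$ has length $2t-1<2t$, so this witness still avoids $(ab)^t$ and hence $\mathit{up}(l,t)$; just be aware that the justification as written is incorrect and needs this repair.
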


\begin{proof} For the lower bound $\mathit{fw}(\mathit{up}(l, t))\geq \mathit{fw}((a b)^{t})=2t-1$ since $\mathit{up}(l, t)$ contains $(a b)^{t}$. The upper bound $\mathit{fw}(\mathit{up}(l, t))\leq 2t-1$ follows from Lemma \ref{2.9}. 
\end{proof}

Therefore $\mathit{fw}(u)=2t-1$ for every sequence $u$ such that $u$ contains $(a b)^{t}$ and there exists $l \geq 2$ for which $\mathit{up}(l, t)$ contains $u$. As a corollary this implies the upper bounds in the next result, which gives nearly tight asymptotic bounds on $\mathit{Ex}(\mathit{up}(l, t), n)$. The lower bounds in the next corollary follow from the lower bounds on $\mathit{Ex}((a b)^{t}, n)$ in \cite{2} by Lemma \ref{1.1}.

\begin{corollary} \label{2.11} $\mathit{Ex}(\mathit{up}(l, t), n)=n2^{\frac{1}{(t-2)!}\alpha(n)^{t-2}\pm O(\alpha(n)^{t-3})}$ for all $l \geq 2$ and $t\geq 3$.
\end{corollary}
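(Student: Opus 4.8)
The plan is to derive the two bounds by assembling results already established above; the only genuinely new input is Theorem~\ref{2.10}.

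For the upper bound, start from Theorem~\ref{2.10}, which gives $\mathit{fw}(\mathit{up}(l,t))=2t-1$, and note that $\mathit{up}(l,t)$ has exactly $l$ distinct letters, so $\mathit{Ex}(\mathit{up}(l,t),n)=\mathit{Ex_{l}}(\mathit{up}(l,t),n)$. Applying Lemma~\ref{1.4} with $c=l$ gives $\mathit{Ex}(\mathit{up}(l,t),n)=O\bigl(F_{r,\,2t-1}(n)\bigr)$, where $r=\mathit{fl}(\mathit{up}(l,t))$ is some fixed integer with $r\geq 2$. Since $t\geq 3$, the integer $s:=2t-1$ is odd and satisfies $s\geq 5$, with $\frac{s-3}{2}=t-2$; hence the evaluation of $F_{r,s}(n)$ recalled in the introduction (Agarwal--Sharir--Shor for the lower half, Nivasch for the upper half) yields $F_{r,\,2t-1}(n)=n2^{\frac{1}{(t-2)!}\alpha(n)^{t-2}\pm O(\alpha(n)^{t-3})}$. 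Absorbing the multiplicative constant hidden in the $O(\cdot)$ of Lemma~\ref{1.4} into the $2^{O(\alpha(n)^{t-3})}$ factor gives the upper bound in the corollary.

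For the lower bound, observe that $\mathit{up}(l,t)$ contains $(ab)^{t}$, so by transitivity of sequence containment any $l$-sparse sequence avoiding $(ab)^{t}$ also avoids $\mathit{up}(l,t)$; consequently $\mathit{Ex_{l}}(\mathit{up}(l,t),n)\geq \mathit{Ex_{l}}((ab)^{t},n)$. Since $(ab)^{t}$ has only two distinct letters, Lemma~\ref{1.1} with $r=c=2$ and $d=l$ gives $(1+\mathit{Ex_{2}}((ab)^{t},l-1))\,\mathit{Ex_{l}}((ab)^{t},n)\geq \mathit{Ex_{2}}((ab)^{t},n)$, and $\mathit{Ex_{2}}((ab)^{t},l-1)$ is a constant depending only on $l$ and $t$. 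Finally the Agarwal--Sharir--Shor lower bound on the alternation of length $2t$ gives $\mathit{Ex}((ab)^{t},n)=\mathit{Ex_{2}}((ab)^{t},n)\geq n2^{\frac{1}{(t-2)!}\alpha(n)^{t-2}-O(\alpha(n)^{t-3})}$, and absorbing the constant factor into the error exponent (legitimate since $\alpha(n)^{t-3}\geq 1$ throughout the range $t\geq 3$) completes the matching lower bound.

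I do not anticipate a real obstacle: both directions are bookkeeping on top of cited theorems plus Theorem~\ref{2.10}. The point that deserves care is matching parameters, since Nivasch's asymptotic formula for $F_{r,s}(n)$ requires $s$ odd with $s\geq 5$, and $s=2t-1$ meets this precisely when $t\geq 3$ --- which is why the corollary excludes $t=1,2$ (for $t=2$ one has $\mathit{fw}(\mathit{up}(l,2))=3$ and $\mathit{Ex}(\mathit{up}(l,2),n)=\Theta(n)$, and $t=1$ is trivial). The secondary subtlety is that the passage from multiplicative constants to additive $O(\alpha(n)^{t-3})$ terms in the exponent is valid only because any constant equals $2^{O(1)}=2^{O(\alpha(n)^{t-3})}$ in this regime.
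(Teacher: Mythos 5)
Your proof is correct and follows the same route as the paper: the upper bound comes from Theorem~\ref{2.10} combined with Lemma~\ref{1.4} and the Agarwal--Sharir--Shor/Nivasch evaluation of $F_{r,s}(n)$ for odd $s=2t-1\geq 5$, and the lower bound comes from the containment of $(ab)^{t}$ in $\mathit{up}(l,t)$ together with Lemma~\ref{1.1}. Your parameter bookkeeping ($\frac{s-3}{2}=t-2$, absorption of constants into $2^{O(\alpha(n)^{t-3})}$) is exactly the detail the paper leaves implicit.
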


As a result, the constant $c$ improves in the $(n\log n)2^{\alpha(n)^{c}}$ upper bound from \cite{10} on the maximum number of edges in $k$-quasiplanar graphs on $n$ vertices with no pair of edges intersecting in more than $O(1)$ points, since their proof used the bounds $\mathit{Ex}(\mathit{up}(l, t), n)\leq nl2^{lt-3}(10l)^{10\alpha(n)^{lt}}$ from \cite{6}.

\section{Bounding the formation width of binary formations}\label{sec3}

In this section we compute the exact values of $\mathit{l}(u)$ and $\mathit{r}(u)$ for all binary formations $u$. This yields upper and lower bounds on $\mathit{fw}(u)$ which differ by at most a factor of two for each binary formation $u$.

\subsection{Computing $l$}\label{sec3.1}

 If $\pi\in S_{c}$ and $u$ is a sequence on the letters $1, \ldots, c$, then let $\mathit{l}_{\pi}(u)=k$ if $u$ is a subsequence of $I_{\pi}^{k}$ but $u$ is not a subsequence of $I_{\pi}^{k-1}$. It follows that $\mathit{l}(u)= \min_{\pi\in S_{c}}\{\mathit{l}_{\pi}(u)\}$.

\begin{lemma} \label{3.1} If $\mathit{l}_{\pi}(I_{c})= a$ and $\mathit{l}_{\pi}(D_{c})=b$, then $a+b=c+1$.
\end{lemma}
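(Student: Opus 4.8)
The statement asserts that for a fixed permutation $\pi \in S_c$, the quantity $\mathit{l}_{\pi}(I_c)$ (the number of copies of $I_\pi$ needed to contain the identity sequence $1\ldots c$) and $\mathit{l}_{\pi}(D_c)$ (the number needed to contain $c\ldots 1$) satisfy $\mathit{l}_{\pi}(I_c) + \mathit{l}_{\pi}(D_c) = c+1$. The plan is to reformulate $\mathit{l}_\pi$ of a monotone sequence in terms of a chain-covering of $\pi$, then invoke a Dilworth/Erd\H{o}s--Szekeres type identity.

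First I would observe that embedding the increasing sequence $1\ldots c$ into $I_\pi^k$ amounts to partitioning the positions $1,\ldots,c$ (in their natural order) into at most $k$ contiguous groups such that within each group the letters can be picked off in increasing order from a single copy of $I_\pi$; since one copy of $I_\pi$ can supply an increasing run of letters precisely when those letters appear in increasing order of value as we read $\pi$ left to right, the minimal $k$ is exactly the minimal number of pieces into which the word $\pi(1)\pi(2)\cdots\pi(c)$ must be cut so that each piece, when we restrict to the letters we want in that piece, is compatible. More cleanly: $\mathit{l}_\pi(I_c)$ equals the length of the longest decreasing subsequence of $\pi$, i.e. the minimal number of increasing subsequences needed to cover $\pi$ (by Dilworth applied to the permutation poset), while $\mathit{l}_\pi(D_c)$ equals the length of the longest increasing subsequence of $\pi$. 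I would prove each of these two claims by a short greedy argument: to embed $1\ldots c$ we greedily extend an increasing run in $I_\pi$ until forced to start a fresh copy, and the number of forced restarts is one less than the longest decreasing subsequence of $\pi$; symmetrically for $D_c$.

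Granting that $\mathit{l}_\pi(I_c) = \mathrm{lds}(\pi)$ and $\mathit{l}_\pi(D_c) = \mathrm{lis}(\pi)$, where $\mathrm{lis}$ and $\mathrm{lds}$ denote the lengths of the longest increasing and longest decreasing subsequences, the lemma reduces to showing $\mathrm{lis}(\pi) + \mathrm{lds}(\pi) = c+1$ for every $\pi \in S_c$. This is false in general (e.g. $\pi$ the identity on $3$ letters gives $3+1=4=c+1$, but $\pi = 231$ gives $\mathrm{lis}=2$, $\mathrm{lds}=2$, sum $4 = c+1$ — actually it does hold here) — so I would double-check the exact correspondence. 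The safe route, and the one I expect to be the crux, is to prove directly by induction on $c$ (or by a direct combinatorial pairing) that the two greedy quantities are complementary: every position of $\pi$ that forces a restart when building the increasing embedding is precisely a position that does \emph{not} force a restart when building the decreasing embedding, and conversely, so the two restart-counts sum to $c-1$, giving $\mathit{l}_\pi(I_c) + \mathit{l}_\pi(D_c) = (c-1) + 2 = c+1$. Concretely, reading $\pi$ left to right and comparing consecutive values $\pi(i), \pi(i+1)$: if $\pi(i) < \pi(i+1)$ this boundary is "good" for the increasing build and "bad" for the decreasing build, and if $\pi(i) > \pi(i+1)$ it is the reverse; there are $c-1$ such boundaries, and a restart in the greedy monotone embedding of $I_c$ happens exactly at a descent boundary while a restart for $D_c$ happens exactly at an ascent boundary. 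Hence $(\mathit{l}_\pi(I_c) - 1) + (\mathit{l}_\pi(D_c) - 1) = (\#\text{descents}) + (\#\text{ascents}) = c-1$.

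The main obstacle is making the greedy-restart characterization airtight: I must argue that the naive left-to-right greedy embedding of $1\ldots c$ into copies of $I_\pi$ is optimal (it is, by an exchange argument — any embedding can be pushed to the greedy one without increasing the number of copies), and that a new copy is started at step $i$ if and only if $\pi^{-1}$ has a descent at the corresponding place, i.e. the letter $i$ appears before letter $i-1$ within a single copy of $I_\pi$. Once the restart positions are pinned down to descents of $\pi^{-1}$ for the $I_c$ side and ascents of $\pi^{-1}$ for the $D_c$ side, the count $c-1$ of total adjacencies finishes the proof. I would present this as: (i) greedy optimality lemma, (ii) restart $\iff$ descent/ascent of $\pi^{-1}$, (iii) sum of descents and ascents is $c-1$, hence $a + b = c+1$.
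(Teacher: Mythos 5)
Your final three-step plan is correct and is essentially the paper's own argument: both proofs reduce the lemma to the observation that, of the $c-1$ boundaries between consecutive values $i,i+1$, each forces a fresh copy of $I_\pi$ for exactly one of the two embeddings (a descent of $\pi^{-1}$ forces a restart for $I_c$, an ascent for $D_c$), so $(a-1)+(b-1)=c-1$; the paper merely packages this same count as a tally of endpoints of connected sections of the point set $\{(i,\pi(i))\}$. Be sure to discard the longest-increasing/decreasing-subsequence identification you floated at the start and the boundary test on $\pi(i)$ versus $\pi(i+1)$: the relevant comparison is $\pi^{-1}(i)$ versus $\pi^{-1}(i+1)$ (adjacent \emph{values} out of positional order), and $\mathit{l}_\pi(I_c)$ is not $\mathrm{lds}(\pi)$ in general --- for $\pi=2413$ one has $\mathrm{lds}(\pi)=2$ but $\mathit{l}_\pi(I_4)=3$.
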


\begin{proof} Represent the permutation $\pi$ by the set of points $(i, \pi(i))$. Connect points $(i, \pi(i))$ and $(j, \pi(j))$ if $i<j$ and $\pi(j)=\pi(i)+1$. This partitions the points into $a$ connected sections. In a different representation connect points $(i, \pi(i))$ and $(j, \pi(j))$ if $i<j$ and $\pi(j)=\pi(i)-1$. This partitions the points into $b$ connected sections.

We count the total number of endpoints of connected sections of points in both representations in two ways so that each connected section of points is considered to have two endpoints, even when the section consists of a single point. Since every connected section has two endpoints, then there are a total of $2(a+b)$ endpoints. Alternatively every point $(i, \pi(i))$ contributes two endpoints, unless $\pi(i)=1$ or $\pi(i)=c$, in which case $(i, \pi(i))$ contributes three endpoints. Thus there are a total of $2c+2$ endpoints, so $a+b= c+1$.
\end{proof}

\begin{corollary} \label{3.2} $\mathit{l}(I_{c}D_{c})=c+1$ for every $c\geq 1$.\end{corollary}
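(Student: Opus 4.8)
The plan is to establish the stronger statement that $\mathit{l}_{\pi}(I_{c}D_{c}) = c+1$ for \emph{every} $\pi \in S_{c}$; since $\mathit{l}(I_{c}D_{c}) = \min_{\pi \in S_{c}}\{\mathit{l}_{\pi}(I_{c}D_{c})\}$, the corollary follows immediately. Fix $\pi \in S_{c}$ and write $a = \mathit{l}_{\pi}(I_{c})$ and $b = \mathit{l}_{\pi}(D_{c})$, so that $a+b = c+1$ by Lemma \ref{3.1}. It therefore suffices to prove $\mathit{l}_{\pi}(I_{c}D_{c}) = a+b$. The upper bound is free: $I_{c}$ is a subsequence of $I_{\pi}^{a}$ and $D_{c}$ is a subsequence of $I_{\pi}^{b}$, so $I_{c}D_{c}$ is a subsequence of $I_{\pi}^{a}I_{\pi}^{b} = I_{\pi}^{a+b}$, giving $\mathit{l}_{\pi}(I_{c}D_{c}) \le a+b$.

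For the lower bound, fix any embedding of $I_{c}D_{c}$ as a subsequence of $I_{\pi}^{m}$. The sequence $I_{c}D_{c}$ contains the letter $c$ (its largest) twice in a row: as the last letter of its $I_{c}$-block and as the first letter of its $D_{c}$-block. Since a single copy of $I_{\pi}$ contains $c$ exactly once, these two occurrences are mapped into two different copies of $I_{\pi}$, and because the embedding is order-preserving the first occurrence lands in some copy numbered $q$ and the second in some copy numbered $p$ with $q < p$. Then the entire prefix $I_{c} = 1\,2\cdots c$ is embedded into copies $1, \dots, q$, so $I_{c}$ is a subsequence of $I_{\pi}^{q}$ and hence $q \ge a$; likewise the entire suffix $D_{c} = c\,(c-1)\cdots 1$ is embedded into copies $p, p+1, \dots, m$, so $D_{c}$ is a subsequence of $I_{\pi}^{m-p+1}$ and hence $m-p+1 \ge b$. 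Using $p \ge q+1 \ge a+1$ we get $m \ge b + (p-1) \ge a+b$, so $\mathit{l}_{\pi}(I_{c}D_{c}) \ge a+b$, completing the equality.

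The only step needing any care is the lower bound: everything rests on the observation that the two adjacent copies of the letter $c$ are forced into distinct copies of $I_{\pi}$, after which the rest is routine bookkeeping about order-preserving embeddings. Combining the two bounds gives $\mathit{l}_{\pi}(I_{c}D_{c}) = a+b = c+1$ by Lemma \ref{3.1}; as this holds for every $\pi \in S_{c}$, we conclude $\mathit{l}(I_{c}D_{c}) = c+1$.
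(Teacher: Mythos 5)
Your proof is correct and follows essentially the same route as the paper: combine Lemma \ref{3.1} with the fact that $I_{c}$ and $D_{c}$ cannot share a copy of $I_{\pi}$, so $\mathit{l}_{\pi}(I_{c}D_{c})=a+b=c+1$ for every $\pi$. The paper simply asserts this non-overlap (since the last letter of $I_{c}$ equals the first letter of $D_{c}$) in its discussion of $\pi$-overlaps, whereas you prove it explicitly via the two forced occurrences of the letter $c$; the content is the same.
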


\begin{corollary} \label{3.3} $\mathit{fw}(I_{c}D_{c})=c+1$ for every $c\geq 1$.\end{corollary}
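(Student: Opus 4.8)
The plan is to read off both inequalities from results already in hand, since Corollary~\ref{3.2} has just pinned down $\mathit{l}(I_cD_c)$. For the lower bound I would invoke the inequality $\mathit{fw}(u)\ge \mathit{l}(u)$ recorded right after the definition of $\mathit{l}$: Corollary~\ref{3.2} says $\mathit{l}(I_cD_c)=c+1$, i.e. $\mathit{up}(c,c)$ contains no copy of $I_cD_c$, and therefore $\mathit{fw}(I_cD_c)\ge \mathit{l}(I_cD_c)=c+1$. (Equivalently, one can phrase this through Corollary~\ref{2.2}: since $\mathit{up}(c,c)$ is itself a binary $(c,c)$-formation avoiding $I_cD_c$, no binary $(c,c)$-formation suffices, so $\mathit{fw}>c$.)

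For the upper bound I would apply Lemma~\ref{2.9} to $u=I_cD_c$. In the notation of that lemma this is the case $n=2$, $e_1=e_2=1$, hence $k=\sum_i e_i=2$; choosing $m=1$ gives $\mathit{fw}(u)\le c(k-e_1)+2e_1-1 = c\cdot 1 + 2 - 1 = c+1$. Combining the two bounds yields $\mathit{fw}(I_cD_c)=c+1$.

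There is no real obstacle here: the substance sits in Lemma~\ref{3.1} (via Corollary~\ref{3.2}) and in Lemma~\ref{2.9}, both already established. If one wanted the upper bound self-contained rather than routed through Lemma~\ref{2.9}, the only step needing a short argument would be: in a binary $(c,c+1)$-formation with underlying permutation $\pi$, take the first block as an occurrence of the $I_c$-pattern, and then note that any $c$ subsequent blocks, each equal to $I_\pi$ or $D_\pi$, contain the reversed pattern $D_\pi$ as a subsequence — greedily scanning the blocks, each $I_\pi$-block yields exactly one new letter of $D_\pi$ while a single $D_\pi$-block finishes the remainder, so $c$ blocks always suffice. This completes the extension of the $I_c$-pattern to an $I_cD_c$-pattern, but it is exactly the content already packaged in Lemma~\ref{2.9}, so I would simply cite that lemma.
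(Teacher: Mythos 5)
Your proposal is correct and matches the paper's proof: the lower bound comes from $I_c^c$ (i.e.\ $\mathit{up}(c,c)$) avoiding $I_cD_c$ via Corollary~\ref{3.2}, and the upper bound is the ``trivial'' one, which you simply make explicit by citing Lemma~\ref{2.9} (or equivalently Nivasch's bound $\mathit{fw}(u)\le s-r+1$ with $s=2c$, $r=c$). Both your instantiation of Lemma~\ref{2.9} and your direct greedy argument for the upper bound check out.
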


\begin{proof} The upper bound is trivial. The lower bound follows since $I_{c}^{c}$ avoids $I_{c}D_{c}$.
\end{proof}

If $u$ and $v$ are sequences on the letters $1, \ldots, c$, then $\mathit{l}_{\pi}(u)+\mathit{l}_{\pi}(v)-1\leq \mathit{l}_{\pi}(uv)\leq \mathit{l}_{\pi}(u)+\mathit{l}_{\pi}(v)$. Say that $u$ and $v$ $\pi$-{\it overlap} if $\mathit{l}_{\pi}(uv)=\mathit{l}_{\pi}(u)+\mathit{l}_{\pi}(v)-1$. Then $u$ and $v$ $\pi$-overlap if and only if the last letter of $u$ and the first letter of $v$ $\pi$-overlap.

For each $\pi\in S_{c}$, the sequences $I_{c}$ and $D_{c}$ do not $\pi$-overlap since the last letter of $I_{c}$ is the first letter of $D_{c}$, and $D_{c}$ and $I_{c}$ do not $\pi$-overlap since the last letter of $D_{c}$ is the first letter of $I_{c}$. Furthermore if $c\geq 2$, then exactly one of the two sequences $I_{c}$ or $D_{c}$ $\pi$-overlaps itself, depending on the order in which the first and last letters of $I_{c}$ occur in $I_{\pi}$. Moreover for any sequence $u$, if $\mathit{l}_{\pi}(u)=1$ then $u$ does not $\pi$-overlap itself.

The next theorem implies Theorem \ref{1.5} since $\mathit{l}(u)\leq \mathit{fw}(u)\leq 2\mathit{l}(u)-1$.

\begin{theorem}\label{3.4} Fix $c\geq 2$ and let $u=I_{c}^{e_{1}}D_{c}^{e_{2}}I_{c}^{e_{3}}\ldots \mathcal{L}_{c}^{e_{n}}$, where $\mathcal{L}$ is $I$ if $n$ is odd and $D$ if $n$ is even, and $e_{i}>0$ for all $i$. Define $A= \sum_{i\geq 1}e_{2i-1}$ and $B = \sum_{i\geq 1}e_{2i}$. Let $M= \max(A, B)$ and let $m= \min(A, B)$. Then $ \mathit{l}(u)=(c-1)m+M+ \lfloor\frac{n}{2}\rfloor$.\end{theorem}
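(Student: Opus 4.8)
The plan is to compute $l_\pi(u)$ exactly for each $\pi\in S_c$ and then use $l(u)=\min_{\pi\in S_c} l_\pi(u)$. Write $u=w_1w_2\cdots w_n$ with $w_j=I_c^{e_j}$ for $j$ odd and $w_j=D_c^{e_j}$ for $j$ even, and set $a=l_\pi(I_c)$, $b=l_\pi(D_c)$.

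\textbf{Step 1: the block boundaries never $\pi$-overlap.} The last letter of $I_c^{e_j}$ equals the first letter of $D_c^{e_{j+1}}$, and the last letter of $D_c^{e_j}$ equals the first letter of $I_c^{e_{j+1}}$, so by the remarks preceding the theorem (a letter never $\pi$-overlaps itself) neither boundary $\pi$-overlaps. Hence $l_\pi$ is additive over the blocks: $l_\pi(u)=\sum_{j=1}^n l_\pi(w_j)$.

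\textbf{Step 2: block lengths.} By Lemma~\ref{3.1}, $a+b=c+1$, so $1\le a,b\le c$; since $c\ge2$ exactly one of $I_c,D_c$ $\pi$-overlaps itself, and the one that does has $l_\pi\ge2$. If $I_c$ $\pi$-overlaps itself, then $l_\pi(I_c^e)=e(a-1)+1$ and $l_\pi(D_c^e)=eb$; otherwise $l_\pi(D_c^e)=e(b-1)+1$ and $l_\pi(I_c^e)=ea$. Finally $\{1,\dots,n\}$ has $\lceil n/2\rceil$ odd indices and $\lfloor n/2\rfloor$ even indices, and $A=\sum_{j\text{ odd}}e_j$, $B=\sum_{j\text{ even}}e_j$.

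\textbf{Step 3: two cases and minimization.} Summing Step~2, if $I_c$ self-overlaps then $l_\pi(u)=(a-1)A+bB+\lceil n/2\rceil=a(A-B)+(c+1)B-A+\lceil n/2\rceil$, which is affine in $a\in\{2,\dots,c\}$ with minimum $(c-1)m+M+\lceil n/2\rceil$ (attained at $a=2$ if $A\ge B$, at $a=c$ otherwise, using $a+b=c+1$). If instead $D_c$ self-overlaps, the symmetric computation gives $l_\pi(u)=aA+(b-1)B+\lfloor n/2\rfloor=b(B-A)+(c+1)A-B+\lfloor n/2\rfloor$ with $b\in\{2,\dots,c\}$ and minimum $(c-1)m+M+\lfloor n/2\rfloor$. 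Since $\lfloor n/2\rfloor\le\lceil n/2\rceil$ and every $\pi$ falls into exactly one of the two cases, this gives the lower bound $l_\pi(u)\ge(c-1)m+M+\lfloor n/2\rfloor$ for all $\pi$. For the matching upper bound I would exhibit the minimizer inside the second case: the identity permutation (which gives $a=1$, $b=c$, with $D_c$ self-overlapping) when $A\ge B$, and the permutation with $I_\pi=1\,c\,(c-1)\cdots2$ (which gives $a=c-1$, $b=2$, with $D_c$ self-overlapping) when $A<B$; in each case the Step~3 formula evaluates to $(c-1)m+M+\lfloor n/2\rfloor$.

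\textbf{Expected main obstacle.} After Step~1 the argument is essentially bookkeeping, but the point that is easy to miss is that one must split into the two self-overlap cases and that they are genuinely asymmetric — the $I_c$-self-overlapping case carries a $\lceil n/2\rceil$ term, the $D_c$-self-overlapping case a $\lfloor n/2\rfloor$ term — and it is exactly this gap, together with $\lfloor n/2\rfloor\le\lceil n/2\rceil$, that produces the floor in the statement. Some care is also needed to nail down the range $\{2,\dots,c\}$ of the self-overlap parameter (lower end: a self-overlapping sequence has $l_\pi\ge2$; upper end: Lemma~\ref{3.1} with the other value $\ge1$) and to confirm the two extremal permutations really lie in the claimed case with the claimed values of $a$ and $b$.
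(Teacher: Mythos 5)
Your proposal is correct and follows essentially the same route as the paper: additivity of $l_\pi$ over the $I_c^{e_j}$/$D_c^{e_j}$ blocks since block boundaries never $\pi$-overlap, the case split on whether $I_c$ or $D_c$ $\pi$-overlaps itself combined with Lemma~\ref{3.1}, and the same two extremal permutations (the identity when $A\ge B$, and $I_\pi=1\,c\,(c-1)\cdots2$ when $A<B$) for attainment. The only difference is presentational: you spell out the affine minimization in $a$ (resp.\ $b$) explicitly, where the paper states the resulting inequality directly.
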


\begin{proof} Fix an arbitrary $\pi\in S_{c}$ and let $ \mathit{l}_{\pi}(I_{c})=a$ and $\mathit{l}_{\pi}(D_{c})=b$. We show $\mathit{l}_{\pi}(u) \geq (c-1)m+M+\lfloor\frac{n}{2}\rfloor$ by considering two cases depending on whether $I_{c}$ or $D_{c}$ $\pi$-overlaps itself.

\textbf{Case 1:} $I_{c}$ $\pi$-overlaps itself.

In this case $\mathit{l}_{\pi}(I_{c}^{e_{i}})=(a-1)e_{i}+1$ and $\mathit{l}_{\pi}(D_{c}^{e_{i}})=b e_{i}$. Since $I_{c}$ and $D_{c}$ do not $\pi$-overlap and $D_{c}$ and $I_{c}$ do not $\pi$-overlap, then $\mathit{l}_{\pi}(u)=(a-1)A+bB+ \lceil\frac{n}{2}\rceil$. Lemma \ref{3.1} implies that $(a-1)+b=c$, while $b>0$ and $a >1$ since $I_{c}$ $\pi$-overlaps itself. Then $\mathit{l}_{\pi}(u) \geq (c-1)m+M+ \lceil\frac{n}{2}\rceil$.

\textbf{Case 2:} $D_{c}$ $\pi$-overlaps itself.

In this case $\mathit{l}_{\pi}(I_{c}^{e_{i}})=a e_{i}$ and $\mathit{l}_{\pi}(D_{c}^{e_{i}})=(b-1)e_{i}+1$, so $\mathit{l}_{\pi}(u)=a A + (b-1)B+\lfloor \frac{n}{2}\rfloor$. Moreover $a +(b-1)=c$, $a > 0$, and $b>1$. Then $\mathit{l}_{\pi}(u) \geq (c-1)m+M+ \lfloor \frac{n}{2}\rfloor$.

Thus in either case $\mathit{l}_{\pi}(u) \geq(c-1)m+M+\lfloor \frac{n}{2}\rfloor$. If $A\geq B$, then this value is attained by letting $\pi$ be the identity permutation. If $B>A$, then this value is attained by letting $\pi(1)=1$ and $\pi(i)=c+2-i$ for $2\leq i\leq c$.
\end{proof}

\subsection{Computing $r$}\label{sec3.2}

For every binary formation $u$ we compute $\mathit{r}(u)$, and we identify when $\mathit{r}(u)> \mathit{l}(u)$.

\begin{theorem} \label{3.5} If $c\geq 2$ and $e_{i}>0$ for all $i$, then $\mathit{r}(I_{c}^{e_{1}}D_{c}^{e_{2}}I_{c}^{e_{3}}\ldots \mathcal{L}_{c}^{e_{n}})= 2 \sum_{i=1}^{n}e_{i}-n$, where $\mathcal{L}$ is $I$ if $n$ is odd and $D$ if $n$ is even.
\end{theorem}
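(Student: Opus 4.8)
Write $k=\sum_{i=1}^{n}e_{i}$, so that $u$ is a concatenation of $k$ consecutive blocks $b_{1}b_{2}\cdots b_{k}$, each block being $I_{c}$ or $D_{c}$, with $b_{j}=I_{c}$ exactly when the run containing $b_{j}$ has odd index. In particular the direction changes at exactly $n-1$ of the $k-1$ consecutive block pairs, namely at the run boundaries. The plan is to show $\mathit{r}(u)=2k-n$ by proving the matching upper and lower bounds separately.

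For the upper bound I would embed $u$ greedily into $\mathit{alt}(c,2k-n)$ one block at a time. Assign $b_{1}$ to permutation $1$ of $\mathit{alt}(c,2k-n)$, which is $I_{c}$ and so matches $b_{1}$. Having assigned $b_{j}$ to permutation $t_{j}$, assign $b_{j+1}$ to permutation $t_{j}+1$ if $b_{j+1}$ has the opposite direction from $b_{j}$, and to permutation $t_{j}+2$ otherwise. Since permutations of $\mathit{alt}$ at positions of equal parity are identical and those at opposite parity are reverses of each other, a one-line induction shows the permutation assigned to $b_{j}$ always has the same direction as $b_{j}$, hence $b_{j}$ embeds into it; and $t_{k}=1+(n-1)\cdot 1+(k-n)\cdot 2=2k-n$, so $u$ is a subsequence of $\mathit{alt}(c,2k-n)$ and $\mathit{r}(u)\le 2k-n$.

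For the lower bound I would pass to two letters and count. Restricting $u$ to the letters $1$ and $2$ gives the word $(12)^{e_{1}}(21)^{e_{2}}(12)^{e_{3}}\cdots$ of length $2k$; its only pairs of adjacent equal letters occur at the $n-1$ run boundaries, so deleting one letter from each such pair produces an alternation of length $2k-n+1$ that is a subsequence of $u$. On the other hand, if an alternation $w$ occurs as a subsequence of $\mathit{alt}(c,k')$, then $w$ uses only two letters, and restricting $\mathit{alt}(c,k')$ to those two letters gives a word that collapses (by deleting its $k'-1$ adjacent repeats) to an alternation of length $k'+1$; since $w$ has no adjacent repeats, $w$ must already be a subsequence of this collapsed alternation, so $|w|\le k'+1$. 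As containment is transitive, $\mathit{alt}(c,k')$ containing $u$ forces it to contain an alternation of length $2k-n+1$, whence $2k-n+1\le k'+1$, i.e. $k'\ge 2k-n$. Therefore $\mathit{r}(u)\ge 2k-n$, which together with the upper bound finishes the proof.

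The step I expect to be the main obstacle is the bookkeeping in the lower bound: one must identify the longest alternation contained in $u$ and get its length exactly right, the point being that it is $2k-n+1$ rather than the naive $2k$, precisely because each of the $n-1$ direction reversals in $u$ creates a repeated letter in the two-letter restriction. One should also note that the relabeling built into the word ``contains'' causes no trouble, since the argument uses only lengths of alternations together with transitivity of containment; and in the upper bound the parity-versus-direction induction, though routine, still needs to be checked.
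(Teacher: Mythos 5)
Your proof is correct. The upper bound is the same greedy left-to-right embedding the paper uses (the paper states it in one sentence; your parity-versus-direction induction is just the explicit verification), but your lower bound takes a genuinely different route. The paper first proves by induction that $\mathit{r}(I_{c}^{x})=2x-1$ together with a rigidity statement ($I_{\pi}^{x}$ is a subsequence of $\mathit{alt}(c,2x-1)$ only if $\pi(c)=c$), and then argues that the $n$ sections $I_{\pi}^{e_{i}}$ or $D_{\pi}^{e_{i}}$ of a copy of $u$ must occupy pairwise disjoint windows of at least $2e_{i}-1$ consecutive permutations each. You instead project everything onto two letters: $u$ restricted to $\{1,2\}$ collapses to an alternation of length $2k-n+1$ (the $n-1$ direction reversals each costing one letter), while any two-letter restriction of $\mathit{alt}(c,k')$ has only $k'+1$ maximal blocks and hence contains no alternation longer than $k'+1$; monotonicity and transitivity of containment then give $k'\ge 2k-n$. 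Your argument is shorter and more elementary, essentially reducing the theorem to the classical two-letter case $\mathit{r}((ab)^{t}\dots)$. What it does not deliver is the auxiliary rigidity fact $\pi(c)=c$, which the paper extracts from its induction and reuses explicitly in the proof of Theorem \ref{5.4}; so if you adopted your proof in the paper, that claim would need a separate argument there.
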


\begin{proof} First we show that $\mathit{r}(I_{c}^{x})=2x-1$ for every $x>0$. The upper bound is trivial. For the lower bound we also show that $\mathit{alt}(c, 2x-1)$ has the subsequence $I_{\pi}^{x}$ only if $\pi(c)=c$.

We proceed by induction on $x$. Clearly $\mathit{r}(I_{c})=1$. In addition, $I_{\pi}$ is a subsequence of $I_{c}$ only if $\pi$ is the identity permutation, so $\pi(c)=c$. For the inductive hypothesis assume that $\mathit{r}(I_{c}^{x})=2x-1$ and that $\mathit{alt}(c, 2x-1)$ has the subsequence $I_{\pi}^{x}$ only if $\pi(c)=c$. We claim that $\mathit{r}(I_{c}^{x+1})=2x+1$ and that $\mathit{alt}(c, 2x+1)$ has the subsequence $I_{\pi}^{x+1}$ only if $\pi(c)=c$.

Let $\pi$ be an arbitrary permutation. We will first show that $I_{\pi}^{x+1}$ is not a subsequence of $\mathit{alt}(c, 2x)$. Suppose for contradiction that $I_{\pi}^{x+1}$ is a subsequence of $\mathit{alt}(c, 2x)$. Then $I_{\pi}^{x}$ is a subsequence of $\mathit{alt}(c, 2x-1)$, so $\pi(c)=c$. Then the last letter in $I_{\pi}^{x+1}$ must be the first letter of the last permutation of $\mathit{alt}(c, 2x)$, a contradiction. Thus $\mathit{r}(I_{c}^{x+1})=2x+1$. We still must show that $\mathit{alt}(c, 2x+1)$ has the subsequence $I_{\pi}^{x+1}$ only if $\pi(c)=c$.

Suppose $\pi(c)=i$ for some $1\leq i<c$, and assume for contradiction that $I_{\pi}^{x+1}$ is a subsequence of $\mathit{alt}(c, 2x+1)$. Since $I_{\pi}^{x}$ is not a subsequence of $\mathit{alt}(c, 2x-1)$, then the second to last $i$ in $I_{\pi}^{x+1}$ must occur in the second to last permutation of $\mathit{alt}(c, 2x+1)$ and the last $i$ in $I_{\pi}^{x+1}$ must occur in the last permutation of $\mathit{alt}(c, 2x+1)$. Since $i<c$, then there are at most $c-2$ distinct letters between the occurrences of $i$ in the last two permutations of $\mathit{alt}(c, 2x+1)$, a contradiction. Thus $\mathit{alt}(c, 2x+1)$ has the subsequence $I_{\pi}^{x+1}$ only if $\pi(c)=c$. This completes the induction.

By symmetry we find that $\mathit{r}(D_{c}^{x})=2x-1$ for every $x>0$. We now prove the claim that $\mathit{r}(I_{c}^{e_{1}}D_{c}^{e_{2}}I_{c}^{e_{3}}\ldots \mathcal{L}_{c}^{e_{n}})= 2 \sum_{i=1}^{n}e_{i}-n$. The upper bound is trivial since the copy of $I_{c}^{e_{1}}D_{c}^{e_{2}}I_{c}^{e_{3}}\ldots \mathcal{L}_{c}^{e_{n}}$ can be selected greedily from left to right in $\mathit{alt}(c, 2\sum_{i=1}^{n}e_{i}-n)$. For the lower bound, suppose for some $k$ and permutation $\pi$ that $\mathit{alt}(c, k)$ has the subsequence $I_{\pi}^{e_{1}}D_{\pi}^{e_{2}}I_{\pi}^{e_{3}}\ldots \mathcal{L}_{\pi}^{e_{n}}$ with $n$ sections of the form $I_{\pi}^{x}$ or $D_{\pi}^{x}$. No section $I_{\pi}^{x}$ or $D_{\pi}^{x}$ can occur in fewer than $2x-1$ adjacent permutations of $\mathit{alt}(c, k)$. Furthermore no different sections have letters occurring in the same permutation. Thus $\mathit{alt}(c, k)$ contains at least $2  \sum_{i=1}^{n}e_{i}-n$ permutations, so $k \geq 2\sum_{i=1}^{n}e_{i}-n$.
\end{proof}

\begin{corollary} \label{3.6} Fix $c\geq 2$ and let $u=I_{c}^{e_{1}}D_{c}^{e_{2}}I_{c}^{e_{3}}\ldots \mathcal{L}_{c}^{e_{n}}$, where $\mathcal{L}$ is $I$ if $n$ is odd and $D$ if $n$ is even, and $e_{i}>0$ for all $i$. Define $A= \sum_{i\geq 1}e_{2i-1}$ and $B = \sum_{i\geq 1}e_{2i}$. Let $M= \max(A, B)$ and let $m= \min(A, B)$. Then $\mathit{r}(u)>\mathit{l}(u)$ if and only if $ M>(c-3)m+n+\lfloor\frac{n}{2}\rfloor$.
\end{corollary}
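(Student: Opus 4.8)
The plan is to simply combine the two exact formulas just established, namely Theorem \ref{3.4} for $\mathit{l}(u)$ and Theorem \ref{3.5} for $\mathit{r}(u)$, and then rearrange the resulting inequality. By Theorem \ref{3.4} we have $\mathit{l}(u) = (c-1)m + M + \lfloor \frac{n}{2} \rfloor$. By Theorem \ref{3.5}, since $\sum_{i=1}^{n} e_i = A + B = M + m$, we have $\mathit{r}(u) = 2(M+m) - n$.

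So the first step is to write out $\mathit{r}(u) > \mathit{l}(u)$ explicitly as
\[
2(M+m) - n > (c-1)m + M + \left\lfloor \tfrac{n}{2} \right\rfloor.
\]
The second step is to subtract $M$ from both sides and collect the $m$ terms: the left side contributes $2m$ and the right side $(c-1)m$, leaving $M - (c-1)m + 2m > n + \lfloor \frac{n}{2} \rfloor$, i.e. $M > (c-1)m - 2m + n + \lfloor \frac{n}{2} \rfloor = (c-3)m + n + \lfloor \frac{n}{2} \rfloor$. Since every step is a reversible manipulation, this gives the stated equivalence in both directions.

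There is essentially no obstacle here; the only thing to be careful about is bookkeeping — making sure that $\sum_{i=1}^n e_i$ is correctly identified with $A+B = M+m$ (which is immediate from the definitions of $A$ and $B$ as the sums over odd- and even-indexed blocks), and that the floor term $\lfloor \frac{n}{2} \rfloor$ is carried through unchanged from Theorem \ref{3.4} rather than accidentally replaced by $\lceil \frac{n}{2} \rceil$. Since the inequality is strict and all coefficients are integers, no rounding subtleties arise. The corollary then follows in one or two lines.
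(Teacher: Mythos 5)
Your proposal is correct and is exactly the paper's argument: substitute the formulas $\mathit{l}(u)=(c-1)m+M+\lfloor\frac{n}{2}\rfloor$ from Theorem \ref{3.4} and $\mathit{r}(u)=2\sum_{i=1}^{n}e_{i}-n=2(M+m)-n$ from Theorem \ref{3.5} and rearrange, using $\sum_{i=1}^{n}e_{i}=m+M$. The algebra checks out and all steps are reversible, so the equivalence holds in both directions as claimed.
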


\begin{proof} This follows from setting $2  \sum_{i=1}^{n}e_{i}-n>(c-1)m+M+\lfloor\frac{n}{2}\rfloor$ since $  \sum_{i=1}^{n}e_{i}=m+M$.
\end{proof}

\section{Further bounds on extremal functions using $\mathit{fw}$}\label{sec4}

The lemmas in this section use Corollary \ref{2.2} to identify sequences $u$ with $\mathit{fw}(u)>3$ for which $\mathit{fw}(u)$ provides tight upper bounds on $\mathit{Ex}(u, n)$, starting with an infinite set of sequences which contain $a b a b a$.

\begin{lemma}\label{4.1} If $u$ is any sequence of the form $a v a v' a$ such that $a$ is a letter, $v$ is a nonempty sequence excluding $a$ with no repeated letters and $v'$ is obtained from $v$ by only moving the first letter of $v$ to another place in $v$, then $\mathit{fw}(u)=4$.
\end{lemma}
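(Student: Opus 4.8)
The claim has two halves: the lower bound $\mathit{fw}(u) \ge 4$ and the upper bound $\mathit{fw}(u) \le 4$. The lower bound is immediate: since $v$ is nonempty, $u = ava v'a$ contains the subsequence $ababa$ (take $a$ and the first letter $b$ of $v$, which reappears in $v'$ since $v'$ is a permutation of $v$), and $\mathit{fw}(ababa) = 2\cdot 2 + 1 = 3+1 = 4$ because $ababa$ is an alternation of length $5$, so by Lemma~\ref{1.2} we get $\mathit{fw}(u) \ge 4$. All the work is in the upper bound.

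**The upper bound via Corollary~\ref{2.2}.** By Corollary~\ref{2.2} it suffices to show that for a suitable $r$, \emph{every} binary $(r,4)$-formation contains $u$. Write $v = b w$, where $b$ is the first letter of $v$ and $w$ is the (possibly empty) rest, and $v' $ is $w$ with $b$ reinserted somewhere; say $v' = w_1 b w_2$ with $w = w_1 w_2$. So $u = a\, b\, w\, a\, w_1\, b\, w_2\, a$. Suppose $u$ has $r$ distinct letters. Take a binary $(N,4)$-formation $F$ on letters $1 < 2 < \cdots < N$ with $N$ large (to be chosen); by symmetry (reversing the underlying order reverses the roles of $I$ and $D$, and we may also cyclically relabel) it suffices to handle a few cases for the pattern of the four permutations. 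The key structural fact about a binary formation is that within any single permutation the letters appear either in increasing or in decreasing order, so picking an occurrence of a subsequence inside one permutation just means picking an increasing (resp. decreasing) subsequence of the chosen letters.

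**Choosing the embedding.** The idea is to dedicate permutation~1 to reading the prefix $a b w$, permutations~2 and~3 to absorbing the two interior $a$'s together with $w_1$ and $b$ (this is where having \emph{four} permutations, and the fact that $v'$ only moves the \emph{first} letter, is used — everything except $b$ appears in $w$ in the same relative order in both $v$ and $v'$), and permutation~4 to read the final letter $a$ and any leftover. Concretely: fix a large ``window'' of letters and let $a$ be a letter near one end of the order and $b$ near the other, with the letters of $w$ strictly between them; then in an increasing permutation the block $a\,b\,w$ cannot be read in order, but in a decreasing one it can — so we choose which of the four patterns to target based on the first permutation's direction, exactly as in the analogous arguments in Lemma~\ref{2.8} and in Pettie's $abcacbc$ argument cited in the introduction. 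The two middle permutations each contribute one $a$ plus a contiguous piece of the pattern, and because $w_1$ is an \emph{initial} segment of $w$ and $w_2$ the rest, the relative order needed in permutations~3 and~4 is monotone and hence realizable. One checks that $r = $ (number of distinct letters of $u$) together with $N$ chosen as a fixed function of $r$ (something like $N = 2r$ or $N=3r$, to guarantee room to separate $a$, $b$, and the letters of $w$, and to have a spare copy between occurrences as in Corollary~\ref{2.3}) makes every binary $(N,4)$-formation contain $u$.

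**Main obstacle.** The genuine difficulty is the bookkeeping: verifying that for \emph{each} of the (essentially two, up to reversal) binary patterns $II$--type and $ID$--type of the four permutations, there is a consistent choice of which occurrence of each letter of $u$ lands in which permutation, respecting both the monotonicity forced inside each permutation and the cross-permutation order of $u$. The crucial leverage — and the place where the hypothesis ``$v'$ is obtained from $v$ by moving only the first letter'' is essential rather than decorative — is that $w$ (everything but the relocated $b$) reads left-to-right identically in the $v$-block and the $v'$-block, so only $b$ needs a flexible placement, and one spare permutation suffices to accommodate it. I would organize the proof as: (1) reduce to binary formations; (2) fix notation $u = a\,b\,w_1 w_2\, a\, w_1\, b\, w_2\, a$; (3) split into the two pattern cases for $F$; (4) in each, exhibit the explicit assignment of letters-to-permutations and check monotonicity; (5) conclude via Corollary~\ref{2.2}, and then note $\mathit{Ex}(u,n) = \Theta(n\alpha(n))$ follows from $F_{r,4}(n) = \Theta(n\alpha(n))$ (Nivasch) via Lemma~\ref{1.4} for the upper bound and from $\mathit{Ex}(ababa,n) = \Theta(n\alpha(n))$ via Lemma~\ref{1.2}/monotonicity of $\mathit{Ex}$ for the lower bound.
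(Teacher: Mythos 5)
Your lower bound and your overall strategy (invoke Corollary~\ref{2.2} and reduce to checking that every binary $(r,4)$-formation contains $u$) match the paper's. But the upper bound as written has a genuine gap: the entire content of the proof is the verification you defer in your step (4), and that verification is never carried out. You assert that there are ``essentially two, up to reversal'' binary patterns to check; in fact, with the first permutation normalized to $I_r$ there are eight patterns, and the paper must treat four genuinely different situations: $I_r^4$ and $I_r^3D_r$ (where all of $av$ is read off the first permutation), the formations containing $I_rD_r^2$ or its reverse as three of the four permutations (where all of $v'a$ is read off a single decreasing permutation), and --- the hard case --- $I_r^2D_rI_r$, which is covered by neither of your ``$II$-type'' and ``$ID$-type'' buckets in any obvious way. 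For $I_r^2D_rI_r$ the paper exhibits an explicit embedding that depends on the index $i$ at which $b$ is reinserted into $v'$: the copy of $u$ is a cyclic shift $j\mapsto (r-i+j-2 \bmod r)+1$ of the identity labeling, taking the last $i+1$ letters of permutation~1, all but one letter of permutation~2, a single letter of permutation~3, and an initial segment of permutation~4. Nothing in your sketch produces this; your heuristic of placing $a$ and $b$ at opposite ends of the order with $w$ in between does not describe any of the paper's embeddings (e.g.\ in $I_r^4$ the whole block $av$ is read increasingly inside one permutation), and your proposal to inflate the alphabet to $N=2r$ or $3r$ is unnecessary, since Corollary~\ref{2.2} already reduces the problem to binary formations on exactly $r$ letters.

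You do correctly isolate the structural fact that makes the lemma true --- $w$ reads identically in $v$ and $v'$, so only $b$ needs flexible placement --- and the concluding deduction of $\mathit{Ex}(u,n)=\Theta(n\alpha(n))$ from $F_{r,4}(n)=\Theta(n\alpha(n))$ and the $ababa$ lower bound is exactly the paper's Corollary~\ref{4.2}. To close the gap you must actually write down, for each of the four cases above, which occurrences of which letters of $u$ land in which permutation and check monotonicity within each permutation; the $I_r^2D_rI_r$ case is where the hypothesis on $v'$ does real work and cannot be waved through.
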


\begin{proof} Since $u$ contains an alternation of length $5$, then $\mathit{fw}(u)\geq 4$. Suppose $u$ has $r$ distinct letters for $r\geq 2$. In order to prove that $\mathit{fw}(u)\leq 4$, it suffices by Corollary \ref{2.2} to show that $u$ is contained in every binary $(r, 4)$-formation. First note that binary $(r, 4)$-formations isomorphic to $I_{r}^{4}$ or $I_{r}^{3}D_{r}$ contain a copy of $u$ which uses every letter in the first permutation.

Furthermore if the position in $v'$ of the occurrence of the first letter of $v$ is right after the occurrence in $v'$ of the $i^{th}$ letter of $v$, then $I_{r}^{2} D_{r} I_{r}$ has a subsequence $u'$ isomorphic to $u$ such that the $j^{th}$ letter of $u'$ is given by $(r-i+j-2 \mod{r}) + 1$ for each $1\leq j\leq r$. In particular the subsequence $u'$ includes the last $i+1$ letters in the first permutation of $I_{r}^{2} D_{r} I_{r}$, all of the letters except $r-i+1$ in the second permutation, the single letter $r-i+1$ in the third permutation, and the first $r-i$ letters in the last permutation. Thus every binary $(r, 4)$-formation isomorphic to $I_{r}^{2} D_{r} I_{r}$ contains a copy of $u$.

Since every other binary $(r, 4)$-formation has a subsequence isomorphic to $I_{r} D_{r}^{2}$, then it suffices to observe that $I_{r} D_{r}^{2}$ contains a copy of $u$ that uses every letter in the third permutation. 
\end{proof}

\begin{corollary}\label{4.2} If $u$ is any sequence of the form $a v a v' a$ such that $a$ is a letter, $v$ is a nonempty sequence excluding $a$ with no repeated letters and $v'$ is obtained from $v$ by only moving the first letter of $v$ to another place in $v$, then $\mathit{Ex}(u, n)=\Theta(n\alpha(n))$.\end{corollary}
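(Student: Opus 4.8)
The plan is to obtain the upper bound from Lemma \ref{4.1} together with the known asymptotics of $F_{r,4}$, and the matching lower bound from the observation that $u$ contains the alternation $ababa$. Write $r$ for the number of distinct letters of $u$; since $v$ is nonempty, $r\geq 2$.

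For the upper bound, I would first invoke Lemma \ref{4.1} to get $\mathit{fw}(u)=4$, so that Lemma \ref{1.4} (applied with $c=r$) gives $\mathit{Ex}(u,n)=\mathit{Ex_{r}}(u,n)=O(F_{\mathit{fl}(u),4}(n))$. Since $\mathit{fl}(u)\geq r\geq 2$, Nivasch's result $F_{\rho,4}(n)=\Theta(n\alpha(n))$ for $\rho\geq 2$ then yields $\mathit{Ex}(u,n)=O(n\alpha(n))$.

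For the lower bound, I would check that $u=avav'a$ contains a subsequence isomorphic to $ababa$: if $b$ denotes the first letter of $v$, then $b$ occurs once in $v$ and once in $v'$ (as $v'$ is a rearrangement of $v$), so the pattern $a,b,a,b,a$ appears in $u$ in this order. Hence any $r$-sparse sequence avoiding $ababa$ also avoids $u$, which gives $\mathit{Ex_{r}}(ababa,n)\leq \mathit{Ex_{r}}(u,n)=\mathit{Ex}(u,n)$. By Lemma \ref{1.1} applied to the forbidden word $ababa$ (which has $2$ distinct letters, taking $c=2$ and $d=r$), we get $\mathit{Ex_{r}}(ababa,n)=\Theta(\mathit{Ex}(ababa,n))$, and $\mathit{Ex}(ababa,n)=\Theta(n\alpha(n))$ by the Nivasch--Klazar estimate $\mathit{Ex}(ababa,n)\sim 2n\alpha(n)$ quoted in the introduction. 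Combining the two bounds gives $\mathit{Ex}(u,n)=\Theta(n\alpha(n))$.

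Since both directions are essentially immediate consequences of results already established, the only point requiring care is the bookkeeping of the sparseness parameter: $\mathit{Ex}(\cdot,n)$ is defined via $\rho$-sparse sequences where $\rho$ is the alphabet size of the forbidden word, so one must pass from $\mathit{Ex}(ababa,n)$ (a $2$-sparse quantity) to the $r$-sparse quantity that lower-bounds $\mathit{Ex}(u,n)$, which is exactly what Lemma \ref{1.1} provides. I do not anticipate any genuine obstacle beyond this.
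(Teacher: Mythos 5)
Your proposal is correct and follows essentially the same route as the paper: the upper bound via Lemma \ref{4.1}, Lemma \ref{1.4}, and Nivasch's $F_{r,4}(n)=\Theta(n\alpha(n))$, and the lower bound via the containment of $ababa$ together with Lemma \ref{1.1}. The extra care you take with the sparseness parameter is exactly the bookkeeping the paper's one-line proof leaves implicit.
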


\begin{proof} The upper bound follows from the last lemma and Lemma \ref{1.4}, while the lower bound follows by Lemma \ref{1.1} since $u$ contains $a b a b a$.
\end{proof}

The next corollary is obtained by reversing the sequences considered in the last lemma.

\begin{corollary}\label{4.3} If $u$ is any sequence of the form $a v a v' a$ such that $a$ is a letter, $v$ is a nonempty sequence excluding $a$ with no repeated letters and $v'$ is obtained from $v$ by moving a single letter in $v$ to the end of $v$, then $\mathit{fw}(u)=4$ and $\mathit{Ex}(u, n)=\Theta(n\alpha(n))$.
\end{corollary}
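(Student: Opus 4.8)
The plan is to deduce this from Corollary \ref{4.2} by passing to reverses. For a sequence $w$ write $w^{R}$ for its reverse. Both quantities in the statement are reversal-invariant: if $f=p_{1}p_{2}\cdots p_{s}$ is an $(r,s)$-formation then $f^{R}=p_{s}^{R}\cdots p_{1}^{R}$ is again an $(r,s)$-formation, and $f$ contains a sequence $w$ if and only if $f^{R}$ contains $w^{R}$; hence for each $r$ and $s$, every $(r,s)$-formation contains $u$ exactly when every $(r,s)$-formation contains $u^{R}$, so $\mathit{fw}(u)=\mathit{fw}(u^{R})$. Likewise a sequence $s$ is $r$-sparse, has $n$ distinct letters, and avoids $u$ precisely when $s^{R}$ is $r$-sparse, has $n$ distinct letters, and avoids $u^{R}$, so $\mathit{Ex}(u,n)=\mathit{Ex}(u^{R},n)$. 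Thus it suffices to check that $u^{R}$ satisfies the hypotheses of Corollary \ref{4.2}.

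To do that I would simply unwind the definitions. Write $v=x_{1}x_{2}\cdots x_{k}$, a repetition-free sequence not containing $a$, and suppose $v'$ is obtained by deleting $x_{i}$ from $v$ and appending it at the end, so $v'=x_{1}\cdots x_{i-1}x_{i+1}\cdots x_{k}x_{i}$. Then $u^{R}=a\,(v')^{R}\,a\,v^{R}\,a$ with $(v')^{R}=x_{i}\,x_{k}x_{k-1}\cdots x_{i+1}\,x_{i-1}\cdots x_{1}$ and $v^{R}=x_{k}x_{k-1}\cdots x_{1}$. Put $\tilde{v}=(v')^{R}$ and $\tilde{v}'=v^{R}$. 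Then $\tilde{v}$ is a repetition-free sequence not containing $a$, its first letter is $x_{i}$, and deleting this leading $x_{i}$ and re-inserting it between $x_{i+1}$ and $x_{i-1}$ turns $\tilde{v}$ into $\tilde{v}'$. Hence $\tilde{v}'$ is obtained from $\tilde{v}$ by moving the first letter of $\tilde{v}$ to another place in $\tilde{v}$, so $u^{R}=a\tilde{v}a\tilde{v}'a$ has exactly the form treated in Lemma \ref{4.1} and Corollary \ref{4.2}. Therefore $\mathit{fw}(u^{R})=4$ and $\mathit{Ex}(u^{R},n)=\Theta(n\alpha(n))$, and the reversal invariance gives the same for $u$.

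The real content is just the symmetry between ``relocating some letter of $v$ to the end of $v$'' and ``relocating the first letter of $v^{R}$ somewhere in $v^{R}$'', since the letter pushed to the end of $v'$ becomes the first letter of $(v')^{R}$; no part of the binary-formation case analysis of Lemma \ref{4.1} has to be repeated, as Corollary \ref{4.2} is invoked as a black box. I expect the only points needing care to be the degenerate ones: when $|v|=1$, or more generally when the letter moved to the end is already the last letter of $v$ (so $v'=v$), there is no genuine relocation and $u$ collapses to an $ax_{1}ax_{1}a$-type sequence, which is either excluded by the hypothesis or handled directly from the fact that $\mathit{fw}(ababa)=4$; and when $i\ge 2$ one should note that the insertion point used to build $\tilde{v}'$ is genuinely not the front of $\tilde{v}$ (it is preceded by $x_{i-1}$, resp.\ by $x_{k}$ when $i=1$ one inserts at the end), so $\tilde{v}'$ really is $\tilde{v}$ with its first letter moved to a \emph{different} position, as required for Corollary \ref{4.2}.
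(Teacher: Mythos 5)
Your proposal is correct and takes exactly the paper's route: the paper derives Corollary~\ref{4.3} with the single remark that it ``is obtained by reversing the sequences considered in the last lemma,'' which is precisely the reversal-invariance argument (for both $\mathit{fw}$ and $\mathit{Ex}$) that you spell out, together with the observation that reversing turns ``move some letter of $v$ to the end'' into ``move the first letter of $v^{R}$ elsewhere.'' Your explicit verification that $\tilde{v}'=v^{R}$ is $\tilde{v}=(v')^{R}$ with its leading letter $x_{i}$ reinserted between $x_{i+1}$ and $x_{i-1}$ is a correct and welcome elaboration of what the paper leaves implicit.
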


The next lemma implies that if $v$ and $v'$ are nonempty permutations of the same distinct letters excluding $a$, then $\mathit{fw}(a v a v' a)=4$ if and only if $v'$ is obtained from $v$ by only moving the first letter of $v$ to another place in $v$ or by only moving a single letter in $v$ to the end of $v$.

\begin{lemma}\label{4.4} Let $u$ be any sequence of the form $a v a v' a$ such that $a$ is a letter, $v$ is a nonempty sequence excluding $a$ with no repeated letters, and $v'$ is a permutation of $v$ which cannot be obtained from $v$ by only moving the first letter of $v$ to another place in $v$ or by only moving a single letter in $v$ to the end of $v$. Then $\mathit{fw}(u)>4$.
\end{lemma}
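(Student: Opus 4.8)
The plan is to show that if $v'$ is a permutation of $v$ that is \emph{not} reachable from $v$ by moving the first letter somewhere else and is \emph{not} reachable by moving a single letter to the end, then some binary $(r,4)$-formation avoids $u = ava'a$, so $\mathit{fw}(u)>4$ by Corollary~\ref{2.2}. Since $u$ contains $ababa$ we already know $\mathit{fw}(u)\geq 4$, so the content is to rule out $\mathit{fw}(u)=4$, i.e.\ to produce, for each such $u$, an explicit binary $(r,4)$-formation avoiding it. By Corollary~\ref{2.2} it suffices to exhibit one binary $(r,4)$-formation (with $r = |v|+1$ distinct letters) that avoids $u$; the candidates are, up to relabeling by a single permutation $\pi$, the sequences $I_\pi^2 D_\pi I_\pi$, $I_\pi D_\pi^2 I_\pi$, $I_\pi D_\pi I_\pi D_\pi$, etc.\ — but in fact it is enough to analyze the ``middle'' types $I_r^2 D_r I_r$ and $I_r D_r^2$ (under all relabelings $\pi$), because the proof of Lemma~\ref{4.1} shows precisely that a copy of $u$ living in such a formation forces $v'$ to arise from $v$ by one of the two permitted moves.

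The key steps, in order. First, fix the combinatorial description: write $v = b_1 b_2 \ldots b_{r-1}$ and record that any embedding of $u=a v a v' a$ into a binary $(r,4)$-formation $f$ must place the three $a$'s and the letters of $v$, $v'$ in increasing position across the four blocks, so that the two ``copies'' $v$ and $v'$ are each order-extracted from a short binary formation on the remaining $r-1$ letters. Second, analyze the generic case $I_\pi^2 D_\pi I_\pi$: an embedding of $v$ must straddle blocks $1$--$2$ as an increasing-then-increasing sequence (hence be monotone in $\pi$-order up to one ``wrap''), and an embedding of $v'$ must straddle blocks $3$--$4$, which are $D_\pi$ then $I_\pi$; carefully tracking which letter is ``skipped over'' by the middle $a$ shows that $v'$ must be obtainable from $v$ by moving exactly one designated letter (the one playing the role of the first letter of $v$, because of where the middle $a$ sits) — this reproduces the ``move the first letter'' case and nothing more. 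Third, do the symmetric analysis for the formations whose relevant window is $I_r D_r^2$ (equivalently, by reversal, where the last $a$ pins everything to the final block), which yields exactly the ``move one letter to the end'' case. Fourth, observe that any binary $(r,4)$-formation not isomorphic to one already handled still contains, as a subsequence on all $r$ letters, one of $I_r^2 D_r I_r$ or $I_r D_r^2$ (this is the same case split used at the end of the proof of Lemma~\ref{4.1}), so if $u$ embeds in \emph{every} binary $(r,4)$-formation then it embeds in representatives of both families, forcing $v'$ to lie in \emph{both} permitted classes simultaneously; since our hypothesis says $v'$ lies in \emph{neither}, we get a contradiction, hence $\mathit{fw}(u)>4$.

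The main obstacle I expect is Step~2/Step~3: turning ``an embedding of $v$ into a two-block binary window forces $v$ to be nearly $\pi$-monotone'' into a precise statement, and then pinning down \emph{exactly} which single letter of $v$ is free to move. The subtlety is that the middle $a$ occupies some position inside one of the permutations, and the letters it ``jumps'' are constrained; one must show that the only freedom transmitted from the $v$-copy to the $v'$-copy is relocation of the specific letter whose index is dictated by the position of the middle $a$, and that every other letter keeps its relative order. Handling the ``wrap-around'' (a letter of $v$ drawn from block~$1$ versus block~$2$ of $I_\pi I_\pi$) cleanly, and matching it against the $D_\pi I_\pi$ side, is where the bookkeeping is genuinely delicate; I would organize it by first normalizing $\pi$ to the identity (legitimate since the formations are defined only up to a global relabeling), reducing to the concrete windows $I_r^2 D_r I_r$ and $I_r D_r^2$, and then arguing directly with the numeric labels exactly as in the displayed formula $(r-i+j-2 \bmod r)+1$ from Lemma~\ref{4.1}, now run in reverse to constrain $v'$ rather than to construct it. Once the single-move characterization is established for each window, the final contradiction (Step~4) is immediate.
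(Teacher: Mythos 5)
Your overall strategy (exhibit, for each forbidden $u$, a binary $(r,4)$-formation avoiding it, then invoke Corollary~\ref{2.2}) is sound in principle, but the central claims of your Steps~2 and~3 are false, and the ``both classes simultaneously'' contradiction of Step~4 therefore collapses. You assert that an embedding of $u=avav'a$ into a formation built around the window $I_r^2D_rI_r$ forces $v'$ to arise from $v$ by moving the first letter ``and nothing more,'' and that the window $I_rD_r^2$ yields ``exactly'' the move-a-letter-to-the-end case. Take $v=bcd$ and $v'=cbd$: this $v'$ is obtained by moving the first letter of $v$ one place to the right, so $\mathit{fw}(avav'a)=4$ by Lemma~\ref{4.1}, hence by Corollary~\ref{2.2} this $u$ embeds in \emph{every} binary $(4,4)$-formation, including $I_4D_4^3$, which contains your window $I_4D_4^2$; yet $cbd$ is not obtainable from $bcd$ by moving a single letter to the end. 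Symmetrically, $v'=bdc$ lies only in the move-to-the-end class but, by Corollary~\ref{4.3}, the corresponding $u$ embeds in $I_4^2D_4I_4$. So each window admits strictly more permutations $v'$ than the one class you assign to it; the correct statement is that the intersection over all binary $(r,4)$-formations carves out the \emph{union} of the two classes, not their intersection, and your sketch supplies no argument for that. Characterizing exactly which $v'$ embed into each binary $(r,4)$-formation type, over all relabelings $\pi$, is a substantially harder bookkeeping problem than the sketch acknowledges.

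The paper takes a different and much lighter route: it isolates four short patterns, $abcdadbca$, $abcdadcba$, $abcdeabdcea$, and $abcdeacbeda$, verifies by the algorithm in the appendix that each has formation width greater than $4$, and then uses Lemma~\ref{1.2} to conclude that any $u=0v0v'0$ with $\mathit{fw}(u)=4$ must avoid all four. Avoiding the first two forces $\pi_i\le i+1$ for $v'=\pi_1\cdots\pi_r$, and a short case split on $\pi_1\in\{1,2\}$ using the last two patterns pins $v'$ down to exactly the two permitted moves. If you want a proof along your lines, you would need to produce, for each excluded $v'$, an explicit avoiding binary $(r,4)$-formation together with a verification that it avoids $u$ under every relabeling, which amounts to redoing the paper's finite check by hand rather than to the two clean window analyses you describe.
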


\begin{proof} First note that $\mathit{fw}(x)>4$ if $x$ is $abcdadbca, abcdadcba, abcdeabdcea$, or $abcdeacbeda$. This can be verified using the formation width algorithm in the appendix. Suppose $u$ is a sequence of the form $0v0v'0$ for which $\mathit{fw}(u)=4$, $v$ is the sequence $12 \ldots r$, and $v'$ is the permutation $\pi_{1}\pi_{2}\ldots\pi_{r}$ of $12 \ldots r$. Since $u$ avoids $abcdadbca$ and $abcdadcba$, then $\pi_{i}\leq i+1$ for each $1\leq i\leq r$.

Consider two cases. In the first case, $\pi_{1}=1$. If $\pi_{i}=i$ for each $1\leq i\leq r$, then $\mathit{fw}(u)=4$ since $\mathit{fw}(\mathit{up}(r+1,2))=3$. Otherwise let $m$ be minimal for which $\pi_{m}=m+1$. Then $\pi_{j}=j$ for each $j<m$. Since $u$ avoids $abcdeabdcea$, then $\pi_{r}=m$. Moreover $\pi_{j}=j+1$ for $m\leq j<r$ since $\pi_{i}\leq i+1$ for each $1\leq i\leq r$. Thus $v'$ can be obtained from $v$ by only moving a single letter in $v$ to the end of $v$.

In the second case, $\pi_{1}=2$. Let $m$ be the index for which $\pi_{m}=1$. Then $\pi_{j}=j+1$ for $1\leq j<m$ since $\pi_{i}\leq i+1$ for each $1\leq i\leq r$. Since $u$ avoids $abcdeacbeda$, then $\pi_{j}=j$ for each $j>m$. Thus $v'$ can be obtained from $v$ by only moving the first letter of $v$ to another place in $v$.
\end{proof}

For $t\leq 4$ the next lemma exhibits sequences with three distinct letters and $t$ occurrences of each letter which contain $(a b)^{t}$ and have formation width $2t-1$.

\begin{lemma}\label{4.5} If $t$ is 2, 3, or 4 and $z$ is any sequence of the form $a x_{1} a x_{2}\ldots a x_{t}$ such that $a$ is a letter and $x_{i}$ is a sequence equal to either $bc$ or $cb$ for each $1\leq i\leq t$, then $\mathit{fw}(z)=2t-1$.
\end{lemma}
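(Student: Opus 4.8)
The lower bound is immediate: retaining in $z$ every occurrence of $a$ together with the single occurrence of one fixed non-$a$ letter, which occurs once inside each $x_i$, exhibits $(ab)^{t}$ as a subsequence of $z$. Hence $\mathit{fw}(z)\ge \mathit{fw}((ab)^{t})$ by Lemma~\ref{1.2}, and $\mathit{fw}((ab)^{t})=2t-1$ by Lemma~\ref{2.7} (two distinct letters, length $2t$), so $\mathit{fw}(z)\ge 2t-1$.

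For the upper bound it suffices, by Corollary~\ref{2.2}, to show that every binary $(3,2t-1)$-formation contains $z$. After a one-to-one renaming of its letters, such a formation is $f=P_{1}P_{2}\cdots P_{2t-1}$ with each $P_{j}$ equal to $I_{3}$ or $D_{3}$, so it is specified by a word in $\{I_{3},D_{3}\}^{2t-1}$; when we look for a copy of $z$ we are also free to choose the bijection between $\{a,b,c\}$ and $\{1,2,3\}$, possibly depending on $f$. For $t=2$ this is already known, since up to exchanging $b$ and $c$ the sequence $z$ is isomorphic to $abcabc=\mathit{up}(3,2)$ or to $abcacb$, and these have formation width $3$ by Theorem~\ref{2.10} and Lemma~\ref{2.8}. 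So assume $t\in\{3,4\}$.

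The plan for $t\in\{3,4\}$ is a greedy embedding of $z$ into $f$. Write $z=B_{1}B_{2}\cdots B_{t}$ with $B_{i}=ax_{i}$ a permutation of $\{a,b,c\}$; once the labeling is fixed, each $B_{i}$ is one of a two-element set of literal length-$3$ permutations (namely $\{123,132\}$ if $a\mapsto 1$), only one of which can sit inside a single $P_{j}$. Scanning $f$ from left to right, one matches the $i$-th copy of $a$ and then the two letters of $x_{i}$ as early as possible, tracking which $P_{j}$ the match currently occupies; a short case analysis on three consecutive entries of the word defining $f$ shows each $B_{i}$ consumes roughly one permutation when $x_{i}$ is ``compatible'' with the local shape of $f$ and at most two otherwise, with consecutive blocks allowed to share a permutation and with some permutations skippable. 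Choosing the labeling (both the image of $a$ and the assignment of $b,c$) so that at least one $B_{i}$ is of the type embeddable in a single permutation, one verifies that the $2t-1$ permutations always suffice. Since $t\le 4$ there are at most $2^{7}=128$ words defining $f$ and only a bounded number of sequences $z$, so this is a finite verification; it can also be done mechanically with the formation width algorithm in the appendix, exactly as in the proof of Lemma~\ref{4.4}. The only real difficulty is the bookkeeping: one must produce a single labeling and a single way of distributing $B_{1},\dots,B_{t}$ over (possibly overlapping, possibly gap-leaving) stretches of $f$ that works simultaneously for every word defining $f$, and it is precisely the tight budget of $2t-1$ permutations for $t$ three-letter blocks -- with no slack once the word alternates and the $x_{i}$ are mixed -- that keeps a clean argument confined to small $t$.
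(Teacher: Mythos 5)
Your proposal is correct and takes essentially the same route as the paper: the lower bound comes from $z$ containing $(ab)^{t}$, and the upper bound reduces via Corollary~\ref{2.2} to checking that every binary $(3,2t-1)$-formation contains $z$, which the paper settles by the finite computation in the appendix. Your sketched greedy hand-argument for $t\in\{3,4\}$ is not actually carried out (as you acknowledge), but since you fall back on the same mechanical verification the paper uses, this does not constitute a gap.
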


\begin{proof} The lower bound follows since $z$ contains $(a b)^{t}$. By Corollary \ref{2.2}, the upper bound is verified by checking that every binary $(3, 2t-1)$-formation contains $z$. The appendix has a program for running this check. 
\end{proof}

\begin{corollary}\label{4.6} If $t$ is $3$ or $4$ and $z$ is any sequence of the form $a x_{1} a x_{2}\ldots a x_{t}$ such that $a$ is a letter and $x_{i}$ is a sequence equal to either $bc$ or $cb$ for each $1\leq i\leq t$, then $\mathit{Ex}(z, n)=n2^{\frac{1}{(t-2)!}\alpha(n)^{t-2}\pm O(\alpha(n)^{t-3})}$.\end{corollary}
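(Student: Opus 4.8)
The plan is to obtain Corollary~\ref{4.6} from Lemma~\ref{4.5} in exactly the way Corollary~\ref{2.11} was obtained from Theorem~\ref{2.10}. For the upper bound, Lemma~\ref{4.5} gives $\mathit{fw}(z)=2t-1$, and for $t\in\{3,4\}$ the value $s:=2t-1$ is an odd integer with $s\ge 5$. Since $z$ has three distinct letters, $\mathit{Ex}(z,n)=\mathit{Ex}_{3}(z,n)$, and $\mathit{fl}(z)$ is a finite constant (bounded via Corollary~\ref{2.4}), so Lemma~\ref{1.4} yields $\mathit{Ex}(z,n)=O(F_{\mathit{fl}(z),s}(n))$. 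Applying Nivasch's asymptotic evaluation $F_{r,s}(n)=n2^{\frac{1}{t'!}\alpha(n)^{t'}\pm O(\alpha(n)^{t'-1})}$, valid for every fixed $r\ge 2$ and odd $s\ge 5$ with $t'=\frac{s-3}{2}$, with $t'=t-2$, gives the upper bound $\mathit{Ex}(z,n)\le n2^{\frac{1}{(t-2)!}\alpha(n)^{t-2}+O(\alpha(n)^{t-3})}$.

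For the lower bound, note that $z$ contains $(ab)^{t}$, so any $3$-sparse sequence that avoids $(ab)^{t}$ also avoids $z$; hence $\mathit{Ex}_{3}(z,n)\ge\mathit{Ex}_{3}((ab)^{t},n)$. By Lemma~\ref{1.1} with $c=2$ and $d=3$ we have $\mathit{Ex}_{3}((ab)^{t},n)=\Omega(\mathit{Ex}_{2}((ab)^{t},n))=\Omega(\mathit{Ex}((ab)^{t},n))$, since $\mathit{Ex}_{2}((ab)^{t},2)$ is a constant. Combining this with the lower bound $\mathit{Ex}((ab)^{t},n)=n2^{\frac{1}{(t-2)!}\alpha(n)^{t-2}-O(\alpha(n)^{t-3})}$ from \cite{2} then completes the proof.

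I do not expect a genuine obstacle: the substantive work is already packaged into Lemma~\ref{4.5} (the formation-width computation) and into Nivasch's tight bounds for $F_{r,s}$. The two points requiring a little care are (i) verifying that $s=2t-1$ lies in the range $s\ge 5$ where Nivasch's formula applies, which is exactly why $t=2$ must be excluded (there $\mathit{fw}(z)=3$ and $F_{r,3}(n)=O(n)$ gives only $\mathit{Ex}(z,n)=\Theta(n)$), and (ii) moving between $2$-sparse and $3$-sparse extremal functions through Lemma~\ref{1.1}, so that the $(ab)^{t}$ lower-bound constructions can be used against the $3$-sparseness demanded by the definition of $\mathit{Ex}(z,n)$.
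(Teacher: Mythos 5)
Your proposal is correct and follows essentially the same route as the paper: the upper bound comes from Lemma \ref{4.5} combined with Lemma \ref{1.4} and Nivasch's evaluation of $F_{r,s}(n)$ for odd $s=2t-1\geq 5$, and the lower bound comes from the fact that $z$ contains $(ab)^{t}$ together with the bounds of \cite{2} transferred via Lemma \ref{1.1}. The only difference is that you spell out the sparseness bookkeeping and the parameter check $t'=t-2$ that the paper leaves implicit.
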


\begin{proof} The upper bounds follow from the last lemma and Lemma \ref{1.4}. The lower bounds follow from the lower bounds on $\mathit{Ex}((a b)^{t}, n)$ in \cite{2} by Lemma \ref{1.1}. \end{proof}

There are sequences $z$ of the form $a x_{1} a x_{2} a x_{3} a x_{4} a x_{5}$ such that $a$ is a letter and $x_{i}$ is a sequence equal to either $bc$ or $cb$ for each $1\leq i\leq 5$ for which $\mathit{fw}(z)>9$. For example $\mathit{fw}(abcacbacbabcacb) =10$.

The following lemma presents another infinite class of forbidden sequences with three distinct letters for which formation width yields tight bounds on extremal functions.

\begin{lemma}\label{4.7} $\mathit{fw}(abc(acb)^{t})=2t+1$ for $t\geq 0$.\end{lemma}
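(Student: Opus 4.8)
The plan is to prove the two inequalities $\mathit{fw}(abc(acb)^t)\ge 2t+1$ and $\mathit{fw}(abc(acb)^t)\le 2t+1$ separately, using Corollary~\ref{2.2} so that everything reduces to a question about binary $(3,s)$-formations. For the lower bound, observe that the $3$ distinct letters are $a,b,c$, and that $abc(acb)^t$ contains the alternation $(ac)^{t+1}$ of length $2t+2$ (take $a$ from the leading block, then alternate $c$ and $a$ through the $t$ copies of $acb$). Hence $\mathit{fw}(abc(acb)^t)\ge \mathit{fw}((ac)^{t+1})=2(t+1)-1=2t+1$ by Lemma~\ref{1.2} and the alternation computation recalled in the introduction. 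This already gives $\ge 2t+1$, so the real work is the matching upper bound.

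For the upper bound it suffices, by Corollary~\ref{2.2}, to show that every binary $(3,2t+1)$-formation $F$ contains $abc(acb)^t$. By Definition of binary formation, $F$ is determined by a base permutation $\pi\in S_3$ together with a sign pattern in $\{I,D\}^{2t+1}$; renaming letters, we may take $\pi=\mathrm{id}$, so $F=\mathcal P_1\mathcal P_2\cdots\mathcal P_{2t+1}$ with each $\mathcal P_j\in\{123,321\}$. I would then argue that a greedy left-to-right embedding of $abc(acb)^t$ succeeds: the forbidden word $abc(acb)^t$ has length $3t+3$, and the key point is that the only "cost" in a greedy embedding is that a maximal run of equal permutations of length $\ell$ lets you read off an increasing (or decreasing) sequence of length only $\ell+2$ rather than $2\ell$ — but each block $abc$ or $acb$ is just a single permutation of $\{a,b,c\}$, so we only ever need to extract \emph{one} permutation of the three letters per permutation of $F$, except that consecutive blocks sharing a letter force us to split across a sign change. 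Concretely: $abc$ is an increasing run, $acb$ starts with the increasing pair $ac$ then needs $b$ below $c$, i.e.\ each $acb$ needs one $I$-permutation for $ac$ and then can grab $b$ either in that same permutation (if the next one flips) or in the next permutation; so $2t+1$ permutations suffice to lay down $abc$ (needs the first) followed by $t$ copies of $acb$ (each needs essentially $2$ permutations, with sharing at the junctions). I would make this precise by an explicit induction on $t$, tracking after each $acb$-block which permutation index we have consumed up to and in what state.

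Alternatively — and this may be cleaner to write — I would mirror the structure of the proof of Lemma~\ref{4.1}: enumerate the possible binary $(3,2t+1)$-formations by how the $I/D$ pattern behaves, and in each case exhibit the embedding directly, using that $\mathit{up}(3,2t+1)=I_3^{2t+1}$ trivially contains $abc(acb)^t$ (just take $abc$ from permutation $1$, then $acb$ needs two permutations each because $acb$ is not a subsequence of $I_3$ but is of $I_3^2$, giving total $1+2t=2t+1$), and that a single sign flip can only \emph{help} because $acb$ is a subsequence of $I_3 D_3$ in one permutation less. The base case $t=0$ is $\mathit{fw}(abc)=1$ and $t=1$ is $\mathit{fw}(abcacb)$, already noted in Lemma~\ref{2.8} to equal $3=2\cdot1+1$, which anchors the induction.

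The main obstacle I anticipate is the upper bound bookkeeping: one has to handle the interaction at the junctions between the trailing letter of one $acb$ block and the leading letter $a$ of the next $acb$ block (they share the letter $a$), and show the greedy pointer never advances by more than $2$ per block, so that the total never exceeds $2t+1$; getting the off-by-one right across sign changes of the binary formation is where a careless argument would break. I would isolate this as a small lemma: in any binary $(3,s)$-formation with base $I_3$, the word $(acb)^k$ is a subsequence of permutations $1$ through $2k$ (regardless of the sign pattern, and with the final $b$ possibly needing permutation $2k$), with the embedding ending in a controlled state, and then prepend $abc$ using permutation $0$... i.e. shift indices so $abc(acb)^t$ fits in $1+2t=2t+1$ permutations. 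Once that lemma is in hand, Corollary~\ref{2.2} finishes the proof.
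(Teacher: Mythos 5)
Your lower bound is correct and coincides with the paper's: $abc(acb)^t$ contains the alternation $(ac)^{t+1}$ of length $2t+2$, so $\mathit{fw}(abc(acb)^t)\geq 2t+1$. The gap is in the upper bound. Both of your routes fix the renaming so that the binary $(3,2t+1)$-formation has base permutation $123$ and then try to embed $abc\mapsto 123$, $acb\mapsto 132$ greedily left to right, with the key claim that each $acb$-block costs at most two permutations ``regardless of the sign pattern.'' For that fixed assignment the claim is false: $(132)^2$ is not a subsequence of the binary $(3,4)$-formation $123\,123\,321\,123$ even though its base is $123$, and, more to the point, $123\,132\,132$ (your image of $abc(acb)^2$) is not a subsequence of the binary $(3,5)$-formation $123\,123\,123\,321\,123$ --- any copy of the last $acb$-block must start by position $7$, leaving only $123123$ for the first six letters $123132$, which do not fit. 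A copy of $abc(acb)^2$ does exist there, but only under the assignment $a\mapsto 1$, $b\mapsto 3$, $c\mapsto 2$. So the letter assignment must depend on the sign pattern of the whole formation, not just on its base permutation; your sketch never chooses it, and the assertion that ``a single sign flip can only help'' fails for the same reason. The ``controlled state'' in your proposed small lemma is exactly the content that is missing.

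This is the difficulty the paper's proof is built to sidestep. It first invokes Theorem~\ref{2.10} to locate a copy of $(xyz)^t$ in the \emph{last} $2t-1$ permutations (pinning the assignment for the $(acb)^t$-part as $acb\mapsto xyz$), and then only needs to place the initial $abc\mapsto xzy$ in the first two permutations; a short case check shows this works unless those two permutations are $zyx\,xyz$, in which case the whole word is re-embedded under the other assignment as $zyx(zxy)^t$. To repair your argument you would need either such a global, formation-dependent choice between the two viable assignments, or a strengthened induction hypothesis that tracks which assignments survive after each prefix; as written, the greedy embedding with a fixed assignment does not go through, even though the reduction via Corollary~\ref{2.2} and the anchoring cases $t=0,1$ are fine.
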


\begin{proof} The proof is trivial for $t=0$, so suppose that $t>0$. Since $abc(acb)^{t}$ contains an alternation of length $2t+2$, then $\mathit{fw}(abc(acb)^{t})\geq 2t+1$. In order to prove that $\mathit{fw}(abc(acb)^{t})\leq 2t+1$, it suffices by Corollary \ref{2.2} to show that every binary $(3,2t+1)$-formation contains $abc(acb)^{t}$. 

Consider any binary $(3, 2t+1)$-formation $f$ with permutations $xyz$ and $zyx$. Without loss of generality suppose that the last $2t-1$ permutations of $f$ have the subsequence $(xyz)^{t}$. Then $f$ has the subsequence $xzy(xyz)^{t}$ unless the first six letters of $f$ are $zyxxyz$. If the first six letters of $f$ are $zyxxyz$, then $f$ has the subsequence $zyx(zxy)^{t}$.
\end{proof}

\begin{corollary}\label{4.8} $\mathit{Ex}(abc(acb)^{t}, n)=n2^{\frac{1}{(t-1)!}\alpha(n)^{t-1}\pm O(\alpha(n)^{t-2})}$ for $t\geq 2$.\end{corollary}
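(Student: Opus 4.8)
The plan is to derive Corollary~\ref{4.8} from Lemma~\ref{4.7} in exactly the same way that Corollary~\ref{2.11} was derived from Theorem~\ref{2.10}, namely by combining the formation-width upper bound with the known asymptotics for alternations. First I would establish the upper bound. By Lemma~\ref{4.7} we have $\mathit{fw}(abc(acb)^{t}) = 2t+1$, so by Lemma~\ref{1.4} (applied with the $3$ distinct letters of $abc(acb)^{t}$, or any fixed $c \geq 3$) we get $\mathit{Ex}(abc(acb)^{t}, n) = O(F_{\mathit{fl}(abc(acb)^{t}),\, 2t+1}(n))$. Since $2t+1$ is odd and $2t+1 \geq 5$ for $t \geq 2$, the result of Agarwal--Sharir--Shor and Nivasch quoted in the introduction gives $F_{r,s}(n) = n2^{\frac{1}{j!}\alpha(n)^{j} \pm O(\alpha(n)^{j-1})}$ for all $r \geq 2$ and odd $s \geq 5$ with $j = \frac{s-3}{2}$; here $s = 2t+1$ yields $j = t-1$, so $F_{\mathit{fl},\,2t+1}(n) = n2^{\frac{1}{(t-1)!}\alpha(n)^{t-1}\pm O(\alpha(n)^{t-2})}$, and the multiplicative $O(1)$ factor from Lemma~\ref{1.4} is absorbed into the exponent's lower-order term. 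This gives the upper bound $\mathit{Ex}(abc(acb)^{t}, n) \leq n2^{\frac{1}{(t-1)!}\alpha(n)^{t-1}+O(\alpha(n)^{t-2})}$.

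For the lower bound I would observe that $abc(acb)^{t}$ contains the alternation $abc(acb)^{t}$ restricted to the two letters $a$ and $b$, which is $ab(ab)^{t} = (ab)^{t+1}$, an alternation of length $2(t+1) = 2t+2$. By the Agarwal--Sharir--Shor lower bound on $\mathit{Ex}((ab)^{t+1}, n)$ stated in the introduction (an alternation of length $2(t-1)+4 = 2t+2$ gives exponent $\frac{1}{(t-1)!}\alpha(n)^{t-1} - O(\alpha(n)^{t-2})$), together with Lemma~\ref{1.1} to pass between sparsity parameters, any sequence avoiding $(ab)^{t+1}$ — in particular, any sequence avoiding $abc(acb)^{t}$ would avoid it only if \ldots — wait, this needs care: avoiding $abc(acb)^{t}$ does \emph{not} imply avoiding $(ab)^{t+1}$. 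Instead the correct direction is: since $abc(acb)^{t}$ \emph{contains} $(ab)^{t+1}$, a sequence avoiding $(ab)^{t+1}$ also avoids $abc(acb)^{t}$, hence $\mathit{Ex}(abc(acb)^{t}, n) \geq \mathit{Ex}((ab)^{t+1}, n) = n2^{\frac{1}{(t-1)!}\alpha(n)^{t-1}-O(\alpha(n)^{t-2})}$, using Lemma~\ref{1.1} to reconcile the sparsity requirements (both sequences have a bounded number of distinct letters, so the sparsity constants differ by only a constant factor, again absorbed into the exponent).

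Combining the two bounds yields $\mathit{Ex}(abc(acb)^{t}, n) = n2^{\frac{1}{(t-1)!}\alpha(n)^{t-1}\pm O(\alpha(n)^{t-2})}$ for $t \geq 2$, as claimed. I do not anticipate any genuine obstacle here; the proof is a routine assembly of Lemma~\ref{4.7}, Lemma~\ref{1.1}, Lemma~\ref{1.4}, and the cited alternation bounds, entirely parallel to the proof of Corollary~\ref{2.11}. The only point requiring a moment's attention is bookkeeping the direction of the containment in the lower bound (one wants $(ab)^{t+1} \preceq abc(acb)^{t}$, so that avoiding the alternation is a stronger restriction) and checking that $t \geq 2$ is exactly the range where the exponent $t-1 \geq 1$ makes the $\alpha(n)^{t-1}$ term meaningful and the odd formation width $2t+1 \geq 5$ puts us in the regime covered by the Agarwal--Sharir--Shor/Nivasch formula.
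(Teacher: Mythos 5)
Your proposal is correct and follows essentially the same route as the paper: the upper bound combines Lemma~\ref{4.7} with Lemma~\ref{1.4} and the Agarwal--Sharir--Shor/Nivasch formula for $F_{r,s}(n)$ at odd $s=2t+1\geq 5$, and the lower bound comes from the containment of the alternation $(ab)^{t+1}$ together with Lemma~\ref{1.1}. The momentary hesitation about the direction of containment is resolved correctly, and the bookkeeping ($j=t-1$ in both bounds) matches the paper's.
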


\begin{proof} The upper bounds follow from the last lemma and Lemma \ref{1.4}. The lower bounds follow from the lower bounds on $\mathit{Ex}((a b)^{t}, n)$ in \cite{2} by Lemma \ref{1.1}.
\end{proof}

\section{Further bounds on $\mathit{fw}$}\label{sec5}

For $c\geq 2$ the bounds on $\mathit{l}(u)$ imply that $(c+1)k\leq \mathit{fw}(\mathit{alt}(c, 2k))\leq 2(c+1)k-1$ and $(c+1)k+1\leq \mathit{fw}(\mathit{alt}(c, 2k+1))\leq 2(c+1)k+1$ for every $k$. In this section we derive improved bounds on $\mathit{fw}(\mathit{alt}(c, 2k))$ and $\mathit{fw}(\mathit{alt}(c, 2k+1))$ using Corollary \ref{2.2}.

First we compute $\mathit{fw}(\mathit{alt}(c, 3))$ for all $c\geq 2$. Pettie showed in \cite{9} that $\mathit{Ex}(\mathit{alt}(c, 3), n)=O(n)$.

\begin{theorem}\label{5.1} If $c\geq 2$, then $\mathit{fw}(I_{c}D_{c}I_{c})=c+3$.\end{theorem}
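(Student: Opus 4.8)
The plan is to establish the two bounds $\mathit{fw}(I_{c}D_{c}I_{c}) \le c+3$ and $\mathit{fw}(I_{c}D_{c}I_{c}) \ge c+3$ separately, in each case using Corollary \ref{2.2} to reduce the question to containment in binary $(c,s)$-formations. For the upper bound I would show that every binary $(c, c+3)$-formation $f$ contains $I_{c}D_{c}I_{c}$. A binary $(c,c+3)$-formation is a concatenation of $c+3$ permutations, each equal to a fixed permutation $p$ or its reverse; after relabeling we may assume $p = I_{c}$, so each block is $I_{c}$ or $D_{c}$. I would argue by a pigeonhole/greedy embedding: we need to realize the three ``runs'' $I_{c}$, $D_{c}$, $I_{c}$, and the key point is how many consecutive blocks of a binary formation are needed to contain a single $I_{c}$ or $D_{c}$. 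By Lemma \ref{3.1} and the $\mathit l_\pi$ machinery of Section \ref{sec3.1}, a single $I_{c}$ (resp.\ $D_{c}$) can be read off from a single $I_{c}$-block (resp.\ $D_{c}$-block) but needs as many as $c$ blocks in the worst alternating pattern; the relevant bound to track is exactly how the three target runs can ``share'' boundary letters with the blocks of $f$. I expect that a careful case analysis on the first block of $f$ (an $I_{c}$ or a $D_{c}$), together with an argument that after using up at most a bounded number of blocks per run the remaining $\ge c+3 - (\text{used})$ blocks suffice, gives containment; alternatively one can invoke Lemma \ref{2.9} with $c$ letters and the word $u = I_c D_c I_c$ (so $n=3$, $k=3$, each $e_i=1$), which yields $\mathit{fw}(I_cD_cI_c) \le c\cdot 2 + 2 - 1 = 2c+1$ — not tight, so the upper bound here really does require the sharper block-sharing analysis rather than a black-box appeal.

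For the lower bound I would construct a binary $(c, c+2)$-formation that avoids $I_{c}D_{c}I_{c}$, which by Corollary \ref{2.2} gives $\mathit{fw}(I_{c}D_{c}I_{c}) > c+2$. The natural candidate is a formation whose block pattern is chosen so that any embedding of $I_c D_c I_c$ is forced to consume too many blocks. A promising explicit choice is $\mathit{alt}(c, c+2)$ itself, or a variant: by Theorem \ref{3.5}, $\mathit r(I_c D_c I_c) = 2(1+1+1) - 3 = 3$, so $\mathit{alt}(c,3)$ already contains $I_c D_c I_c$ and $\mathit{alt}$ is the wrong pattern to use. Instead I would use the ``order-reversing'' pattern only where it hurts: take $f$ to be $I_c$ followed by $D_c$ repeated, i.e.\ $I_c D_c D_c D_c \cdots$, or more cleverly a pattern that forces the middle $D_c$ run to start late and the final $I_c$ run to then not fit in the $\le c$ remaining blocks. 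Concretely, since reading an $I_c$ out of a maximal run of $D_c$-blocks costs one block per letter, a formation of the form $I_c^{1} D_c^{c+1}$ (one $I_c$-block then $c+1$ $D_c$-blocks, total $c+2$ blocks) should force: the first $I_c$ of the target is read from block 1; the $D_c$ is read from one $D_c$-block; but then the final $I_c$ must be read from the remaining $D_c$-blocks and, having already spent a block on the middle $D_c$, there are only $c$ $D_c$-blocks left total while extracting $I_c$ from $D_c$-blocks costs $c$ blocks — and the boundary-letter bookkeeping (as in the proof of Theorem \ref{3.5}) shows the budget is off by one. I would make this precise by an induction mirroring the argument in Theorem \ref{3.5} that $\mathit{alt}(c,2x-1)$ contains $I_\pi^x$ only if $\pi(c)=c$.

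The main obstacle I anticipate is the upper bound: proving that \emph{every} binary $(c,c+3)$-formation contains $I_c D_c I_c$ requires handling the interaction between the three target runs and an \emph{arbitrary} $\pm$ block pattern of length $c+3$, and the bound is tight so no slack is available. The cleanest route is probably to prove a small lemma of the shape: in any binary $(c,s)$-formation, the prefix of blocks needed to embed a prefix $w$ of a word ending in the letter that does (or does not) $\pi$-overlap the next run can be bounded by an explicit function, then assemble the three runs. I would also double-check the $c=2$ base case directly ($\mathit{fw}(ababa)=4=2+3$, consistent with the known $\mathit{Ex}(ababa,n)\sim 2n\alpha(n)$ and with $\mathit{fw}(\mathit{alt}(2,3))$), since it both sanity-checks the formula and likely seeds the induction.
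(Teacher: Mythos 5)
Your overall strategy (reduce to binary formations via Corollary \ref{2.2}, then prove the two bounds separately) matches the paper's, but both halves have genuine gaps, and one is fatal. The lower bound: your proposed witness $I_cD_c^{\,c+1}$ does \emph{not} avoid $I_cD_cI_c$. Take the first $I_c$ of the target from the initial $I_c$-block, the middle $D_c$ from the first $D_c$-block, and then read the final $I_c$ by taking letter $i$ from the $i$-th of the remaining $c$ blocks of $D_c$ (letter $i$ sits at position $c+1-i$ inside its block, but successive letters come from strictly later blocks, so the increasing order is respected). This uses exactly $1+1+c=c+2$ blocks, so the ``budget'' is not off by one --- it closes exactly. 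Already for $c=2$ the string $I_2D_2^{\,3}=abbababa$ contains $abbaab=I_2D_2I_2$. The paper's witness is $I_c^{\,c}D_c^{\,2}$ (equivalently, by reversal, $I_c^{\,2}D_c^{\,c}$; note this differs from your $I_c^{\,1}D_c^{\,c+1}$ by shifting a single block, and that shift is exactly what matters). Showing it avoids $I_\pi D_\pi I_\pi$ for \emph{every} $\pi\in S_c$ --- not just the identity copy --- uses $\mathit{l}(I_cD_c)=c+1$ (Corollary \ref{3.2}) to force the last letter of $D_\pi$ into the first $D_c$-block and hence the final $I_\pi$ into the last $D_c$-block, whence $I_\pi=D_c$ and a contradiction.

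The upper bound is also not yet a proof. You correctly observe that Lemma \ref{2.9} is too weak and that there is no slack, but the sketch (``a bounded number of blocks per run,'' an unstated overlap lemma) omits the one real idea. After disposing of block patterns that contain $I_cD_cI_c$ or $D_cI_cD_c$ outright, the hard case is $f=I_c^{\,a}D_c^{\,b}$ with $a+b=c+3$ and $a,b\geq 3$, where neither $\pi=\mathrm{id}$ nor the reversal embeds $I_\pi D_\pi I_\pi$; the paper produces the copy by choosing $\pi$ adaptively from the block pattern, taking $I_\pi$ to be $I_c$ with its last $a-1$ letters reversed. Any argument that fixes $\pi$ before seeing $f$, or that only counts ``blocks consumed per run,'' cannot reach $c+3$. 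Your $c=2$ sanity check is fine, but the paper's proof is a direct case analysis, not an induction, so it does not need a seed.
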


\begin{proof} First we prove for every permutation $\pi\in S_{c}$ that $I_{\pi}D_{\pi}I_{\pi}$ is not a subsequence of the binary $(c, c+2)$-formation $I_{c}^{c}D_{c}^{2}$. Assume for contradiction that $I_{c}^{c}D_{c}^{2}$ has the subsequence $I_{\pi}D_{\pi}I_{\pi}$ for some permutation $\pi\in S_{c}$. Since $\mathit{l}(I_{c}D_{c})=c+1$ by Corollary \ref{3.2}, then the last letter of $D_{\pi}$ must be in the first $D_{c}$ in $I_{c}^{c}D_{c}^{2}$. However, the first letter of $I_{\pi}$ is the same as the last letter of $D_{\pi}$, so the first letter of the last $I_{\pi}$ in $I_{\pi}D_{\pi}I_{\pi}$ must be in the last $D_{c}$ in $I_{c}^{c}D_{c}^{2}$. Then $I_{\pi}=D_{c}$, so the last letter of $D_{\pi}$ is $c$. This would imply that $I_{c}^{c}c$ has the subsequence $I_{\pi}D_{\pi}$. Since the last letter of $I_{c}^{c}$ is $c$, then $I_{\pi}D_{\pi}$ would be a subsequence of $I_{c}^{c}$, a contradiction. Thus $I_{c}^{c}D_{c}^{2}$ does not have $I_{\pi}D_{\pi}I_{\pi}$ as a subsequence for any permutation $\pi\in S_{c}$. Thus $\mathit{fw}(I_{c}D_{c}I_{c}) >c+2$ by Corollary \ref{2.2}.

It remains to show that every binary $(c, c+3)$-formation $f$ has a subsequence $I_{\pi}D_{\pi}I_{\pi}$ for some permutation $\pi\in S_{c}$. Without loss of generality suppose the first permutation of $f$ is $I_{c}$. If $f$ is $I_{c}^{c+3}$, then $f$ has $I_{c}D_{c}I_{c}$ as a subsequence. If $f$ has an alternation of $I_{c}$ and $D_{c}$ terms of length at least 3, then also $f$ must have $I_{c}D_{c}I_{c}$ as a subsequence. Otherwise $f$ has the form $I_{c}^{a}D_{c}^{b}$ with $a+b=c+3$, $a>0$ and $b>0$. If $a \leq 2$, then $f$ has $I_{c}D_{c}I_{c}$ as a subsequence. If $b\leq 2$, then $f$ has $D_{c}I_{c}D_{c}$ as a subsequence. Otherwise $f$ has the subsequence $I_{\pi}D_{\pi}I_{\pi}$, such that $I_{\pi}$ is the sequence $I_{b-2}c\ldots (b-1)$ consisting of the integers from $1$ to $b-2$ followed by the integers in reverse from $c$ to $b-1$. In other words $I_{\pi}$ is obtained by reversing the last $a -1$ letters of $I_{c}$. Thus $\mathit{fw}(I_{c}D_{c}I_{c}) \leq c+3$ by Corollary \ref{2.2}. 
\end{proof}

The next two lemmas are used for the lower bounds in the remaining theorems.

\begin{lemma}\label{5.2} If $c\geq 2$ and $\pi\in S_{c}$, then $I_{\pi}D_{\pi}$ is a subsequence of $I_{c}^{c}D_{c}$ if and only if $\pi(1)<\pi(2)$. \end{lemma}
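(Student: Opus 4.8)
The plan is to analyze directly when the sequence $I_\pi D_\pi$ embeds into $I_c^c D_c$, using the fact from Corollary~\ref{3.2} that $\mathit{l}(I_c D_c) = c+1$, i.e. that $I_c^c$ alone is exactly one copy short of containing $I_\pi D_\pi$. First I would observe that $I_\pi D_\pi$ has $c$ distinct letters and length $2c$, with $\pi(c)$ occurring consecutively in the middle (the last letter of $I_\pi$ equals the first letter of $D_\pi$). In any embedding of $I_\pi D_\pi$ into $I_c^c D_c$, consider where the final $D_c$ block of $I_c^c D_c$ is used. Since $I_\pi D_\pi$ is \emph{not} a subsequence of $I_c^c$ (as $\mathit{l}_\pi(I_c D_c) = \mathit{l}_\pi(I_c)+\mathit{l}_\pi(D_c) = c+1 > c$ because $I_c$ and $D_c$ never $\pi$-overlap, by the discussion preceding Theorem~\ref{3.4}), at least one letter of the embedded copy must land in the trailing $D_c$.

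Next I would pin down which letters can land in that trailing $D_c$. The $D_c$ block reads $c, c-1, \ldots, 1$. The portion of $I_\pi D_\pi$ lying in this block must itself be an increasing-in-$\pi$ run that is also a subsequence of the single decreasing permutation $D_c = D_{\mathrm{id}}$; equivalently it is a subsequence of $D_\pi$ whose letters appear in $D_c$-order. I would argue that the tail of $I_\pi D_\pi$ that lies in the final $D_c$ can contain at most the letters forming a suitably monotone suffix, and careful bookkeeping shows this tail consists of at most $\{\pi(1)\}$ or $\{\pi(1),\pi(2)\}$-type initial segments of $D_\pi$ depending on the relative order of $\pi(1),\pi(2)$ in $I_c$. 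Concretely: the first letter of $D_\pi$ is $\pi(c)$ and the last is $\pi(1)$, and a nonempty suffix of $D_\pi$ is a subsequence of $D_c$ iff that suffix is decreasing as integers; since $D_\pi$ reads $\pi(c),\ldots,\pi(1)$, its $D_c$-increasing-length is governed precisely by where $\pi(1)$ and $\pi(2)$ sit. This yields the dichotomy: if $\pi(1)<\pi(2)$, then $I_c^c$ can absorb all of $I_\pi D_\pi$ except the final letter $\pi(1)$ of $D_\pi$, which is then picked up from the trailing $D_c$; if $\pi(1)>\pi(2)$, then $I_c^c$ falls short by two letters and a single $D_c$ cannot make up the deficit.

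For the ``if'' direction (when $\pi(1)<\pi(2)$) I would exhibit the embedding explicitly: use $I_c^c$ to embed the length-$(2c-1)$ prefix $I_\pi D_\pi$ minus its last letter — this prefix has $\mathit{l}_\pi$-value exactly $c$ by Lemma~\ref{3.1} and the $\pi$-overlap analysis, noting that $\pi(1)<\pi(2)$ forces $D_\pi$ to $\pi$-overlap the preceding occurrence so that the prefix fits in $c$ copies of $I_\pi$ — and then append the final $\pi(1)$ using the trailing $D_c$. For the ``only if'' direction (when $\pi(1)>\pi(2)$) I would show $\mathit{l}_\pi$ of the prefix $I_\pi D_\pi$-minus-last-letter is $c+1$, not $c$ (now $I_c$ $\pi$-overlaps itself rather than $D_\pi$ overlapping, by the remark that exactly one of $I_c, D_c$ $\pi$-overlaps itself), so even after removing the last letter we need $c+1$ copies of $I_\pi$; since $I_c^c D_c$ offers only $c$ clean increasing runs plus one short decreasing tail, and that decreasing tail in $D_c$-order can contribute at most the single letter $\pi(1)$, the embedding is impossible.

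The main obstacle will be the careful case analysis in the ``only if'' direction: precisely quantifying how many letters of $I_\pi D_\pi$ the trailing $D_c$ can legitimately carry, and ruling out clever embeddings that split the copy of $D_\pi$ unevenly between $I_c^c$ and $D_c$. The key leverage throughout is that $I_c$ and $D_c$ never mutually $\pi$-overlap while exactly one of them self-$\pi$-overlaps, which is exactly the mechanism that Theorem~\ref{3.4} exploits, so I expect the argument to parallel the two-case structure (Case~1: $I_c$ self-overlaps; Case~2: $D_c$ self-overlaps) used there.
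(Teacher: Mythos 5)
Your proposal is correct and follows essentially the same route as the paper: both arguments force at least one letter of the embedded copy into the trailing $D_c$ via Corollary~\ref{3.2}, observe that two or more letters landing there would make $\pi(2)\pi(1)$ a subsequence of a decreasing block (forcing $\pi(1)<\pi(2)$), and in the one-letter case show that $I_\pi D_\pi$ minus its final $\pi(1)$ fits into $I_c^c$ exactly when $\pi(1)<\pi(2)$. One small caution: the dichotomy you actually need is whether the final transition $\pi(2)\to\pi(1)$ of $D_\pi$ forces a new copy of $I_c$, not the ``exactly one of $I_c,D_c$ self-$\pi$-overlaps'' dichotomy from Section~\ref{sec3.1} (which is governed by the positions of $1$ and $c$ in $I_\pi$ rather than by $\pi(1)$ versus $\pi(2)$), but the condition you extract is the right one and the copy-counting goes through.
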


\begin{proof} Let $\pi\in S_{c}$ and suppose $I_{\pi}D_{\pi}$ is a subsequence of $I_{c}^{c}D_{c}$. Then the last letter of $I_{\pi}D_{\pi}$, namely $\pi(1)$, occurs in the last $D_{c}$ of $I_{c}^{c}D_{c}$ since $\mathit{l}(I_{c}D_{c})= c+1$ by Corollary \ref{3.2}. If $\pi(1)$ is not the only letter of $I_{\pi}D_{\pi}$ occurring in the last $D_{c}$, then $\pi(2)\pi(1)$ is a subsequence of $D_{c}$. This is possible only if $\pi(1)<\pi(2)$.

If the final $D_{c}$ contains no letters in $I_{\pi}D_{\pi}$ besides $\pi(1)$, then the last $\pi(2)$ in $I_{\pi}D_{\pi}$ occurs in some $I_{c}$. If $\pi(1)>\pi(2)$, then the last $\pi(1)$ in $I_{\pi}D_{\pi}$ can be replaced with the $\pi(1)$ in the same permutation as the last $\pi(2)$ in $I_{\pi}D_{\pi}$. This would imply that $I_{\pi}D_{\pi}$ is a subsequence of $I_{c}^{c}$, which is impossible since $\mathit{l}(I_{c}D_{c})=c+1$. Thus $\pi(1)<\pi(2)$.

For the other direction suppose that $\pi(1)<\pi(2)$. Then $I_{\pi}D_{\pi}$ is a subsequence of $I_{c}^{c+1}$ with exactly one letter in the last permutation of $I_{c}^{c+1}$. Thus $I_{\pi}D_{\pi}$ is a subsequence of $I_{c}^{c}D_{c}$.
\end{proof}

Define the reverse permutation $\pi_{r}\in S_{c}$ so that $\pi_{r}(i)=c+1-i$ for $1\leq i\leq c$.

\begin{corollary}\label{5.3} If $c\geq 2$ and $\pi\in S_{c}$, then $I_{\pi}D_{\pi}$ is a subsequence of $D_{c}I_{c}^{c}$ if and only if $\pi(2)<\pi(1)$.\end{corollary}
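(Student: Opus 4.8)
The plan is to deduce Corollary~\ref{5.3} from Lemma~\ref{5.2} by exploiting a symmetry of the pattern $I_\pi D_\pi$ rather than redoing the case analysis. The key operation is $\Phi$: ``reverse the whole sequence, then relabel each letter $\ell$ by $c+1-\ell$.'' This is an involution on sequences over $\{1,\dots,c\}$, and it preserves the subsequence relation (both reversal and a bijective relabelling do). First I would record what $\Phi$ does to the host sequence: reversing $D_c I_c^{c}$ turns it into $D_c^{c} I_c$ (block order reverses and each block is internally reversed, so each $I_c$ becomes $D_c$ and the $D_c$ becomes $I_c$), and then relabelling by $\ell\mapsto c+1-\ell$ turns $D_c^{c}I_c$ into $I_c^{c}D_c$. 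Hence $\Phi(D_c I_c^{c}) = I_c^{c}D_c$, the host sequence of Lemma~\ref{5.2}.

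Next I would check what $\Phi$ does to $I_\pi D_\pi$. As a string, $I_\pi D_\pi = \pi(1)\pi(2)\cdots\pi(c)\pi(c)\cdots\pi(2)\pi(1)$ is a palindrome, so reversal leaves it unchanged; and relabelling by $\ell\mapsto c+1-\ell$ sends it to $I_{\pi'}D_{\pi'}$, where $\pi' = \pi_r\circ\pi$, i.e.\ $\pi'(i)=c+1-\pi(i)$. Therefore $\Phi(I_\pi D_\pi)=I_{\pi'}D_{\pi'}$, and since $\Phi$ preserves subsequence containment, $I_\pi D_\pi$ is a subsequence of $D_c I_c^{c}$ if and only if $I_{\pi'}D_{\pi'}$ is a subsequence of $I_c^{c}D_c$.

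Finally I would invoke Lemma~\ref{5.2}: the latter holds exactly when $\pi'(1)<\pi'(2)$, that is, $c+1-\pi(1) < c+1-\pi(2)$, which simplifies to $\pi(2)<\pi(1)$. This is precisely the claimed equivalence.

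The proof is essentially bookkeeping, so there is no serious obstacle; the one point demanding care is the order of operations. Neither symmetry alone suffices: reversing $D_c I_c^{c}$ alone gives $D_c^{c}I_c$ and relabelling alone gives $I_c D_c^{c}$, so only the composition $\Phi$ lands on $I_c^{c}D_c$. One must also verify that $\Phi$ genuinely fixes the shape $I_\pi D_\pi$ (up to relabelling $\pi$), rather than, say, swapping it to $D_\pi I_\pi$; this is where the palindrome observation is used. (An alternative, if one prefers to avoid the symmetry argument, is to repeat the proof of Lemma~\ref{5.2} with the words ``first''/``last'' and the blocks $I_c$/$D_c$ interchanged throughout, using $\mathit{l}(I_cD_c)=c+1$ from Corollary~\ref{3.2} in the same way; but the reduction via $\Phi$ is cleaner.)
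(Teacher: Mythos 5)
Your proposal is correct and is essentially the paper's own argument: the paper also reduces to Lemma~\ref{5.2} by first reversing (using that $I_\pi D_\pi$ is a palindrome, so $D_cI_c^c$ becomes $D_c^cI_c$) and then relabelling via $\pi_r$, which is exactly your composed involution $\Phi$. The only difference is presentational — you package the two symmetries into one map, while the paper applies them in sequence.
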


\begin{proof} By reflection, $I_{\pi}D_{\pi}$ is a subsequence of $D_{c}I_{c}^{c}$ if and only if $I_{\pi}D_{\pi}$ is a subsequence of $D_{c}^{c}I_{c}$. Moreover $I_{\pi}D_{\pi}$ is a subsequence of $D_{c}^{c}I_{c}$ if and only if $\pi_{r}(I_{\pi}D_{\pi})$ is a subsequence of $I_{c}^{c}D_{c}$. By Lemma \ref{5.2}, $\pi_{r}(I_{\pi}D_{\pi})$ is a subsequence of $I_{c}^{c}D_{c}$ if and only if $\pi_{r}(\pi(1))<\pi_{r}(\pi(2))$. Since $\pi_{r}(\pi(1))< \pi_{r}(\pi(2))$ if and only if $\pi(2)<\pi(1)$, then $I_{\pi}D_{\pi}$ is a subsequence of $D_{c}I_{c}^{c}$ if and only if $\pi(2)<\pi(1)$.
\end{proof}

Using these facts we determine $\mathit{fw}(I_{c}^{k}D_{c})$ and $\mathit{fw}(\mathit{alt}(c, 4))$. Pettie in \cite{9} showed bounds of $\Theta(n\alpha(n))$ on the maximum lengths of sequences with $n$ distinct letters avoiding both $ababab$ and $\mathit{alt}(c, 4)$ for some $c$. This improved an upper bound by Ezra, Aronov, and Sharir in \cite{4} on the complexity of the union of $n$ $\delta$-fat triangles.

\begin{theorem}\label{5.4} If $c\geq 2$ and $k\geq 1$, then $\mathit{fw}(I_{c}^{k}D_{c})=c+2k-1$.\end{theorem}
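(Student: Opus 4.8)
The plan is to get the upper bound directly from Lemma~\ref{2.9} and to obtain the lower bound by exhibiting one explicit binary formation, and then appeal to Corollary~\ref{2.2}.

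For the upper bound, apply Lemma~\ref{2.9} to $u=I_{c}^{k}D_{c}$, i.e.\ with $e_{1}=k$, $e_{2}=1$, $n=2$ and $\sum_i e_{i}=k+1$; taking $m=1$ (the index of the $I_{c}^{k}$ block) gives $\mathit{fw}(I_{c}^{k}D_{c})\le c\bigl((k+1)-k\bigr)+2k-1=c+2k-1$. It remains to prove $\mathit{fw}(I_{c}^{k}D_{c})\ge c+2k-1$, so by Corollary~\ref{2.2} I would produce a binary $(c,c+2k-2)$-formation containing no $I_{\pi}^{k}D_{\pi}$ for any $\pi\in S_{c}$. I would take $g=\mathit{alt}(c,2k-2)\cdot I_{c}^{c}=(I_{c}D_{c})^{k-1}I_{c}^{c}$, a binary $(c,c+2k-2)$-formation. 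Suppose for contradiction that some embedding realizes $I_{\pi}^{k}D_{\pi}$ as a subsequence of $g$. Since $\mathit{r}(I_{c}^{k})=2k-1$ by Theorem~\ref{3.5}, the block $I_{\pi}^{k}$ does not fit in the first $2k-2$ permutations of $g$, so its last letter (which is $\pi(c)$) lies in permutation $2k-2+t$ for some $t\ge1$, where permutations $2k-1,\dots,2k-2+c$ form the trailing $I_{c}^{c}$. Because $\pi(c)$ occurs exactly once in each of these permutations and $D_{\pi}$ again begins with $\pi(c)$, $D_{\pi}$ must be a subsequence of the last $c-t$ of them, i.e.\ $D_{\pi}\subseteq I_{c}^{c-t}$. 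By Lemma~\ref{3.1} (after relabelling by $\pi^{-1}$), $I_{\pi}$ is a subsequence of $d+1$ consecutive copies of $I_{c}$ and $D_{\pi}$ of $c-d$ consecutive copies and no fewer, where $d$ is the number of descents of $\pi$; hence $c-d\le c-t$, so $t\le d$, and therefore $I_{\pi}^{k}$ is a subsequence of $\mathit{alt}(c,2k-2)\cdot I_{c}^{d}$.

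The crux is then the claim: \emph{for every $\pi\in S_{c}$ with $d$ descents, $I_{\pi}^{k}$ is not a subsequence of $\mathit{alt}(c,2k-2)\cdot I_{c}^{d}$.} I would prove this by induction on $k$. The base case $k=1$ is just $I_{\pi}\not\subseteq I_{c}^{d}$, which holds since $I_{\pi}$ needs $d+1>d$ copies of $I_{c}$. For $k\ge2$, write $\mathit{alt}(c,2k-2)\cdot I_{c}^{d}=I_{c}D_{c}\cdot W$ with $W=\mathit{alt}(c,2(k-1)-2)\cdot I_{c}^{d}$ and suppose $I_{\pi}^{k}=I_{\pi}\cdot I_{\pi}^{k-1}$ embeds. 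If the image of the second copy of $I_{\pi}$ starts beyond the leading $I_{c}D_{c}$, then $I_{\pi}^{k-1}$ embeds into $W$, contradicting the inductive hypothesis. Otherwise the image of the first copy of $I_{\pi}$ together with the first letter $\pi(1)$ of the second copy lies inside $I_{c}D_{c}$; since every subword of the unimodal word $I_{c}D_{c}$ is unimodal and this subword begins and ends with $\pi(1)$, this forces $\pi(1)=1$ and $\pi$ unimodal with peak value $c$. For such $\pi$ I would close the argument at the $D_{c}I_{c}^{c}$ junction of $g$: by Corollary~\ref{5.3}, $I_{\pi}D_{\pi}$ is not a subsequence of $D_{c}I_{c}^{c}$ when $\pi(1)=1$ (and Lemma~\ref{5.2} records the complementary case), which forces the last copy of $I_{\pi}D_{\pi}$ back into the alternating part of $g$ and lets the count close. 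The step I expect to be the main obstacle is making the inductive statement robust enough to absorb the partial $D_{c}$ (or $I_{c}$) blocks produced when a copy of $I_{\pi}$ straddles a permutation boundary of $g$ — i.e.\ proving a version of the claim above in which the alternating prefix is allowed to begin with a truncated permutation; granting that, everything else is routine bookkeeping using Lemma~\ref{2.9}, Theorem~\ref{3.5}, Lemma~\ref{3.1}, Lemma~\ref{5.2} and Corollary~\ref{5.3}.
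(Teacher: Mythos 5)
Your upper bound is fine (Lemma~\ref{2.9} with $e_1=k$, $e_2=1$, $m=1$ does give $c+2k-1$), and your witness formation $g=\mathit{alt}(c,2k-2)\cdot I_c^c$ is exactly the one the paper uses. Your reduction is also sound as far as it goes: since $\mathit{r}(I_c^k)=2k-1$, the last letter of $I_\pi^k$ must land in the trailing $I_c^c$, and counting descents shows $D_\pi$ needs $c-d$ of the remaining copies of $I_c$, so everything reduces to the claim that $I_\pi^k\not\subseteq\mathit{alt}(c,2k-2)\cdot I_c^d$. But that claim is where the proof actually lives, and your induction on it does not close. In the ``otherwise'' branch, once the first copy of $I_\pi$ and the leading $\pi(1)$ of the second copy are absorbed into the initial $I_cD_c$, what remains to be embedded in $W$ is not $I_\pi^{k-1}$ but the truncation $\pi(2)\cdots\pi(c)\,I_\pi^{k-2}$, which the inductive hypothesis says nothing about; a truncated pattern can embed where the full one cannot, and the host $W$ is likewise not of the inductive form once the straddling is accounted for. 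You flag this yourself as ``the main obstacle,'' but it is not bookkeeping --- it is the entire difficulty. The final appeal to Corollary~\ref{5.3} is also misdirected: the claim you are inducting on contains no $D_\pi$ and no $D_cI_c^c$ junction, so a statement about $I_\pi D_\pi\subseteq D_cI_c^c$ has nothing to attach to.

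The paper sidesteps the truncation problem by never separating $D_\pi$ from the pattern: it inducts directly on ``$T_k$ avoids $I_c^kD_c$.'' In the step, the last $I_\pi D_\pi$ of $I_\pi^{k+1}D_\pi$ is forced into the final $D_cI_c^c$ (so $\pi(1)>\pi(2)$ by Corollary~\ref{5.3}), while the inductive hypothesis --- applied to a \emph{full} copy $I_\pi^kD_\pi$ of the pattern, which is why no truncated remainder ever arises --- forces the first letter of the second $I_\pi$ into the initial $I_cD_c$, so that $\pi(1)\pi(2)\pi(1)\subseteq I_cD_c$ and hence $\pi(1)<\pi(2)$, a contradiction. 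If you want to salvage your route, you would need to strengthen your inductive claim to cover patterns with a truncated leading block (and hosts with a truncated leading permutation), which is considerably messier than keeping $D_\pi$ in play from the start.
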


\begin{proof} The upper bound follows since $\mathit{fw}(I_{c}^{k}D_{c})\leq \mathit{fw}(I_{c}^{k})+c$.

For the lower bound let $T_{k}$ be the $(c, c+2k-2)$-formation obtained by concatenating $\mathit{alt}(c, 2k-2)$ and $I_{c}^{c}$. We show that $T_{k}$ avoids $I_{c}^{k}D_{c}$ by induction on $k$. This is clearly true for $k=1$ since $\mathit{l}(I_{c}D_{c})=c+1$ by Corollary \ref{3.2}. For the inductive hypothesis assume that $T_{k}$ avoids $I_{c}^{k}D_{c}$. Suppose for contradiction that $T_{k+1}$ has the subsequence $I_{\pi}^{k+1}D_{\pi}$ for some permutation $\pi\in S_{c}$.

The proof of Theorem \ref{3.5} showed that $\mathit{r}(I_{c}^{k}) = 2k-1$ and $I_{\pi}^{k}$ is a subsequence of $\mathit{alt}(c, 2k-1)$ only if $\pi(c)=c$, so the last $I_{\pi}D_{\pi}$ of $I_{\pi}^{k+1}D_{\pi}$ must be a subsequence of the rightmost $D_{c}I_{c}^{c}$ in $T_{k+1}$. Then $\pi(1)>\pi(2)$.

Since $T_{k}$ avoids $I_{c}^{k}D_{c}$, then the first letter $\pi(1)$ of the second $I_{\pi}$ in $I_{\pi}^{k+1}D_{\pi}$ must occur in the initial $I_{c}D_{c}$ of $T_{k+1}$. Thus $\pi(1)\pi(2)\pi(1)$ must be a subsequence of $I_{c}D_{c}$. This contradicts $\pi(1)>\pi(2)$, so $T_{k+1}$ avoids $I_{c}^{k+1}D_{c}$. Thus $\mathit{fw}(I_{c}^{k}D_{c})>c+2k-2$ for every $c\geq 2$ and $k\geq 1$ by Corollary \ref{2.2}. \end{proof}

\begin{theorem}\label{5.5} If $c\geq 2$, then $\mathit{fw}(I_{c}D_{c}I_{c}D_{c})=2c+3$. \end{theorem}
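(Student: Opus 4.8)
The plan is to prove the two bounds $\mathit{fw}(I_cD_cI_cD_c)\le 2c+3$ and $\mathit{fw}(I_cD_cI_cD_c)> 2c+2$ separately, using Theorem~\ref{5.1} and Corollary~\ref{2.2} throughout.

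\emph{Upper bound.} I would use the general principle that appending a permutation of the $r$ distinct letters of a sequence raises $\mathit{fw}$ by at most $r$, applied to $v=I_cD_cI_c$ (so $r=c$) together with Theorem~\ref{5.1}. Concretely: given any binary $(c,2c+3)$-formation $F$, its first $c+3$ permutations form a binary $(c,c+3)$-formation, which by Corollary~\ref{2.2} and Theorem~\ref{5.1} contains a copy $I_\sigma D_\sigma I_\sigma$; since each of the last $c$ permutations of $F$ contains all $c$ letters, picking $\sigma(c),\sigma(c-1),\dots,\sigma(1)$ one from each appends a copy of $D_\sigma$ after that copy, so $F$ contains $I_\sigma D_\sigma I_\sigma D_\sigma$. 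By Corollary~\ref{2.2} this gives $\mathit{fw}(I_cD_cI_cD_c)\le 2c+3$.

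\emph{Lower bound --- the construction.} I would exhibit the binary $(c,2c+2)$-formation $g=I_c^cD_c^2I_c^c$ and show it avoids $I_\pi D_\pi I_\pi D_\pi$ for every $\pi\in S_c$; Corollary~\ref{2.2} then gives $\mathit{fw}(I_cD_cI_cD_c)>2c+2$. Write the permutations of $g$ as $P_1\cdots P_{2c+2}$, so $P_1\cdots P_c=I_c^c$, $P_{c+1}P_{c+2}=D_c^2$, $P_{c+3}\cdots P_{2c+2}=I_c^c$, and suppose for contradiction that $g$ contains a copy of $I_\pi D_\pi I_\pi D_\pi$ with blocks $T_1=I_\pi$, $T_2=D_\pi$, $T_3=I_\pi$, $T_4=D_\pi$. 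The first step localizes $T_2$ and $T_3$ using Theorem~\ref{5.1}: since $I_c^cD_c^2$ (the first $c+2$ permutations) avoids $\mathit{alt}(c,3)$, the copy $T_1T_2T_3$ of $\mathit{alt}(c,3)$ cannot lie inside $P_1\cdots P_{c+2}$, so $T_3$ has a letter in $P_{c+3}\cdots P_{2c+2}$; reversing and relabelling Theorem~\ref{5.1} shows $D_c^2I_c^c$ likewise avoids $\mathit{alt}(c,3)$, so $T_2$ has a letter in $P_1\cdots P_c$. Writing $p=1+\mathrm{des}(\pi)$ and $q=1+\mathrm{asc}(\pi)$ (so $p+q=c+1$), one has $\mathit{l}_{\mathrm{id}}(I_\pi)=p$, $\mathit{l}_{\mathrm{id}}(D_\pi)=q$, and $I_\pi$, $D_\pi$ do not $\mathrm{id}$-overlap, so $I_\pi D_\pi$ requires $c+1>c$ copies of $I_c$; hence neither $T_1T_2$ nor $T_3T_4$ fits inside a single $I_c^c$ block, which forces $T_2\subseteq P_1\cdots P_{c+2}$, $T_3\subseteq P_{c+1}\cdots P_{2c+2}$, and therefore $T_1\subseteq P_1\cdots P_c$, $T_4\subseteq P_{c+3}\cdots P_{2c+2}$.

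\emph{Lower bound --- the finish.} Decompose $T_2$ into its prefix $D_\pi[1{:}k]$ lying in $P_1\cdots P_c$ and its suffix $D_\pi[k{+}1{:}c]$ lying in $D_c^2$, and $T_3$ into its prefix $I_\pi[1{:}k']$ in $D_c^2$ and its suffix $I_\pi[k'{+}1{:}c]$ in $P_{c+3}\cdots P_{2c+2}$ (with $1\le k\le c-1$ and $1\le k'\le c-1$, the endpoints $k=c$ or $k'=0$ being excluded exactly as above). These pieces appear in $g$ in the order $T_1,\ D_\pi[1{:}k]$ (inside the first $I_c^c$), then $D_\pi[k{+}1{:}c],\ I_\pi[1{:}k']$ (inside $D_c^2$), then $I_\pi[k'{+}1{:}c],\ T_4$ (inside the last $I_c^c$); since each consecutive pair shares only the repeated letter $\pi(c)$ or $\pi(1)$ and hence does not overlap, additivity of $\mathit{l}_{\mathrm{id}}$ and of $\mathit{l}_{\pi_r}$ over non-overlapping concatenations gives $p+\mathit{l}_{\mathrm{id}}(D_\pi[1{:}k])\le c$, $\ \mathit{l}_{\mathrm{id}}(I_\pi[k'{+}1{:}c])+q\le c$, and $\mathit{l}_{\pi_r}(D_\pi[k{+}1{:}c])+\mathit{l}_{\pi_r}(I_\pi[1{:}k'])\le 2$. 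The last inequality forces $D_\pi[k{+}1{:}c]$ and $I_\pi[1{:}k']$ to be decreasing, i.e.\ $\pi$ increasing on positions $1,\dots,c-k$ and decreasing on positions $1,\dots,k'$, so by comparing positions $1$ and $2$ we get $c-k\le 1$ or $k'\le 1$; in each case the remaining two inequalities (now with $\mathit{l}_{\mathrm{id}}(D_\pi[1{:}c-1])$, resp.\ $\mathit{l}_{\mathrm{id}}(I_\pi[2{:}c])$, pinned down exactly by the sign of $\pi(1)-\pi(2)$) collide, a contradiction. The main obstacle is precisely this last step: one must verify that splitting each block across the three sections of $g$ makes the $\mathit{l}$-costs \emph{add}, never drop, across the two transitions of $g$ --- i.e.\ that "spanning" a transition buys nothing here --- which is what makes $2c+2$ permutations one short of containing $I_\pi D_\pi I_\pi D_\pi$.
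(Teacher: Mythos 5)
Your proof is correct, and both halves rest on the same skeleton as the paper's: the upper bound via appending one permutation to a copy of $I_cD_cI_c$ guaranteed by Theorem~\ref{5.1}, and the lower bound via the same witness $I_c^cD_c^2I_c^c$. Where you diverge is in how the contradiction is extracted from that witness. The paper localizes the two halves $I_\pi D_\pi$ as units: since neither copy of $I_c^c$ alone contains $I_\pi D_\pi$ (Corollary~\ref{3.2}), the first half must sit inside the initial $I_c^cD_c$ and the second inside the final $D_cI_c^c$, and then Lemma~\ref{5.2} and Corollary~\ref{5.3} immediately give $\pi(1)<\pi(2)$ and $\pi(2)<\pi(1)$. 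You instead localize each of the four blocks separately, split $T_2$ and $T_3$ across the three sections of $g$, and run an $\mathit{l}$-arithmetic argument; your final case analysis ($k=c-1$ or $k'=1$, then the sign of $\pi(1)-\pi(2)$ pinning down $\mathit{l}_{\mathrm{id}}(D_\pi[1{:}c-1])$ and $\mathit{l}_{\mathrm{id}}(I_\pi[2{:}c])$ against $p+q=c+1$) is exactly an inline re-derivation of Lemma~\ref{5.2} and Corollary~\ref{5.3}, and I checked that the compressed ``collide'' step does close in both cases. What your version buys is self-containment --- it never invokes those two auxiliary results --- at the cost of a longer bookkeeping argument and a weaker intermediate localization (you allow $T_2$ and $T_3$ to each straddle into $D_c^2$, which is precisely what forces the extra case split); the paper's version buys brevity by reusing lemmas it needs anyway for Theorems~\ref{5.4} and~\ref{5.6}.
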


\begin{proof} Since $c+\mathit{fw}(I_{c}D_{c}I_{c}) \geq \mathit{fw}(I_{c}D_{c}I_{c}D_{c})$, then $2c+3\geq \mathit{fw}(I_{c}D_{c}I_{c}D_{c})$. As for the lower bound, the $(c, 2c+2)$-formation $F=I_{c}^{c}D_{c}^{2}I_{c}^{c}$ avoids $I_{\pi}D_{\pi}I_{\pi}D_{\pi}$ for all permutations $\pi\in S_{c}$. To see this assume for contradiction that $F$ contains $I_{\pi}D_{\pi}I_{\pi}D_{\pi}$ for some permutation $\pi\in S_{c}$. Since $I_{c}^{c}$ does not contain $I_{\pi}D_{\pi}$ by Corollary \ref{3.2}, then the first $I_{\pi}D_{\pi}$ is in the first $I_{c}^{c}D_{c}$ of $F$ and the second $I_{\pi}D_{\pi}$ is in the last $D_{c}I_{c}^{c}$ of $F$. This is a contradiction by Lemma \ref{5.2} and Corollary \ref{5.3}. Thus $\mathit{fw}(I_{c}D_{c}I_{c}D_{c}) >2c+2$ by Corollary \ref{2.2}. 
\end{proof}

We extend the technique used in the last proof to get an improved lower bound on $\mathit{fw}(\mathit{alt}(c, k))$ for all $c\geq 2$ and $k\geq 5$.

\begin{theorem}\label{5.6} If $c\geq 2$ and $k\geq 1$, then $\mathit{fw}(\mathit{alt}(c, 2k))\geq k(c+2)-1$ and $\mathit{fw}(\mathit{alt}(c, 2k+1))\geq k(c+2)+1$.\end{theorem}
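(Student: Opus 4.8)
The plan is to generalize the construction from the proof of Theorem~\ref{5.5}: there the $(c,2c+2)$-formation $I_c^c D_c^2 I_c^c$ was shown to avoid $I_\pi D_\pi I_\pi D_\pi$ for every $\pi$, and we want to iterate this pattern so that $k$ blocks of alternation cost us roughly $c+2$ permutations each. Concretely, for $\mathit{alt}(c,2k)$ I would build a binary $(c, k(c+2)-2)$-formation by concatenating $k-1$ copies of $I_c^c D_c$ (or $D_c I_c^c$, chosen so that adjacent blocks join up correctly) together with one extra padding block of the form $I_c^c$ or $D_c^c$, arranging the orientations so that the whole formation is binary and so that each ``turn'' of the alternation is forced to live inside one of the $I_c^c D_c$ blocks. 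For $\mathit{alt}(c,2k+1)$ I would similarly build a binary $(c, k(c+2))$-formation; the extra odd permutation at the end of $\mathit{alt}(c,2k+1)$ forces one more permutation than in the even case, which is why the bound there is $k(c+2)+1$ rather than $k(c+2)-1$.

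The key steps, in order, are: (1) Write down the explicit binary formation $F_k$ (a concatenation of $I_c^c$ and $D_c$ blocks in a carefully chosen pattern) of the claimed length, and check that it is genuinely binary, i.e. every permutation is $I_c$ or $D_c$. (2) Prove by induction on $k$ that $F_k$ avoids $I_\pi D_\pi I_\pi \cdots$ of alternation-length $2k$ for every permutation $\pi\in S_c$. The inductive step will use Corollary~\ref{3.2} ($\mathit{l}(I_cD_c)=c+1$, so $I_c^c$ alone cannot contain a single turn $I_\pi D_\pi$) together with Lemma~\ref{5.2} and Corollary~\ref{5.3}, which pin down exactly which $\pi$ allow $I_\pi D_\pi$ to sit inside $I_c^c D_c$ versus $D_c I_c^c$ (namely $\pi(1)<\pi(2)$ versus $\pi(2)<\pi(1)$). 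As in Theorem~\ref{5.5}, the contradiction comes from the fact that the constraint on $\pi$ forced by the first turn is incompatible with the constraint forced by the next turn, so the prefix of $F_k$ cannot already contain the first $2(k-1)$ letters of the alternation while the suffix contains the remaining two. (3) Handle the odd case $\mathit{alt}(c,2k+1)$ separately, appending an extra $I_c^c$ (or using a block of the form arising in the proof of Theorem~\ref{3.5}, where $I_\pi^x$ fits into $\mathit{alt}(c,2x-1)$ only when $\pi(c)=c$) to absorb the final permutation; the argument that $\pi(c)=c$ is forced at the far end of the formation, combined with the turn-parity argument, should give the extra $+1$. (4) Conclude via Corollary~\ref{2.2}: since $F_k$ is a binary $(c, k(c+2)-2)$-formation avoiding $\mathit{alt}(c,2k)$, we get $\mathit{fw}(\mathit{alt}(c,2k)) > k(c+2)-2$, i.e. $\ge k(c+2)-1$, and similarly for the odd case.

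The main obstacle I expect is getting the orientations of the $I_c^c$ and $D_c$ blocks to mesh correctly when they are concatenated: each junction between consecutive blocks must not accidentally create a ``free'' turn of $I_\pi D_\pi I_\pi$ that lets the alternation propagate more cheaply than one turn per $(c+2)$-block, and at the same time the whole thing must remain binary (only two distinct permutations appear). In Theorem~\ref{5.5} this was manageable by hand because there were only three blocks; with $k$ blocks I will need a clean bookkeeping device — probably tracking, for the hypothetical embedded copy of $\mathit{alt}(c,2k)$, which block each turn of the alternation falls into, and showing inductively that the $j$-th turn cannot occur before the $j$-th block. A secondary subtlety is that the two endpoints of each turn ($\pi(1)$ versus $\pi(2)$, and at the ends $\pi(c)=c$) impose alternating inequality constraints on $\pi$ as one moves through the alternation, and I must verify these constraints are mutually contradictory after enough turns — this is the direct analogue of the ``$\pi(1)>\pi(2)$ vs. $\pi(1)<\pi(2)$'' clash in Theorem~\ref{5.5}, but it needs to be stated once and for all rather than checked ad hoc. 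Once the block structure and the constraint-propagation lemma are set up, the induction itself should be short.
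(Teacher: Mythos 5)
Your overall strategy is the right one and matches the paper's: iterate the construction of Theorem~\ref{5.5}, prove avoidance by induction on the number of turns, and use Lemma~\ref{5.2} and Corollary~\ref{5.3} to extract contradictory constraints on $\pi$. But the two concrete constructions you sketch do not work, and finding the correct formation is the actual content of the proof. For the even case, the separators between consecutive $I_c^c$ blocks must be $D_c^2$, not a single $D_c$: the paper takes $T_{2k-1}=I_c^c D_c^2 I_c^c D_c^2\cdots I_c^c$ with $k$ copies of $I_c^c$ and $k-1$ copies of $D_c^2$, which has exactly $kc+2(k-1)=k(c+2)-2$ permutations. Your ``$k-1$ copies of $I_c^cD_c$ plus one padding block $I_c^c$'' has only $(k-1)(c+1)+c=kc+k-1$ permutations, so even if it avoids $\mathit{alt}(c,2k)$ (it does, being a subsequence of $T_{2k-1}$), it only yields $\mathit{fw}(\mathit{alt}(c,2k))\geq kc+k$, which is strictly weaker for $k\geq 2$. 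The doubled $D_c$ is also what makes the induction run: the prefix $T_{2k-3}$ avoids $\mathit{alt}(c,2k-2)$, which forces the last letter of the first $k-1$ turns into the final $D_c^2I_c^c$; since the next letter of the alternation repeats that letter, the last $I_\pi D_\pi$ must sit inside $D_cI_c^c$, giving $\pi(2)<\pi(1)$ by Corollary~\ref{5.3}, while the mirror argument on the suffix forces the first $I_\pi D_\pi$ into $I_c^cD_c$ and gives $\pi(1)<\pi(2)$ by Lemma~\ref{5.2}. In particular your worry about propagating alternating constraints through all $k$ turns is unfounded: only the two extreme turns are ever constrained, and they already clash.

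The odd case is where your sketch genuinely fails. Appending $I_c^c$ to the even-case formation does not produce an avoider: $T_{2k-1}I_c^c$ ends in $I_c^{2c}$, and $I_cD_cI_c$ is a subsequence of $I_c^{2c}$ whenever $c+2\leq 2c$, i.e.\ for all $c\geq 2$; taking one permutation from each earlier block then embeds all of $(I_cD_c)^kI_c=\mathit{alt}(c,2k+1)$. The paper instead appends $D_c^2$, setting $T_{2k}=T_{2k-1}D_c^2$ of length exactly $k(c+2)$; a hypothetical copy of $(I_\pi D_\pi)^kI_\pi$ must place its final $I_\pi$ entirely inside the last $D_c$, forcing $I_\pi=D_c$, so the alternation's last letter before that final $I_\pi$ is $c$, which already ends $T_{2k-1}$ and contradicts the inductive hypothesis. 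So the route is correct in outline, but you need the recursive definition $T_1=I_c^c$, $T_{2k}=T_{2k-1}D_c^2$, $T_{2k+1}=T_{2k}I_c^c$ (and Corollary~\ref{2.2}) to get the stated bounds rather than weaker ones.
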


\begin{proof} Define $T_{1}=I_{c}^{c}, T_{2k}=T_{2k-1}D_{c}^{2}$, and $T_{2k+1}=T_{2k}I_{c}^{c}$ for $k\geq 1$. We prove that $T_{k-1}$ avoids $\mathit{alt}(c, k)$ by induction on $k$. This implies that $\mathit{fw}(\mathit{alt}(c, 2k))>k(c+2)-2$ and $\mathit{fw}(\mathit{alt}(c, 2k+1))>k(c+2)$ by Corollary \ref{2.2}. Theorems \ref{5.1} and \ref{5.5} proved that $T_{2}$ avoids $\mathit{alt}(c, 3)$ and $T_{3}$ avoids $\mathit{alt}(c, 4)$.

For the inductive hypothesis there are two cases. First assume that $T_{j-1}$ avoids $\mathit{alt}(c, j)$ for all $j\leq 2k-1$, but suppose for contradiction that $T_{2k-1}$ has the subsequence $(I_{\pi}D_{\pi})^{k}$ for some permutation $\pi\in S_{c}$. Let $G$ be the leftmost $(I_{\pi}D_{\pi})^{k-1}$ in the subsequence $(I_{\pi}D_{\pi})^{k}$. Since the leftmost $T_{2k-3}$ in $T_{2k-1}$ avoids $\mathit{alt}(c, 2k-2)$, then the last letter of $G$ must occur somewhere in the rightmost $D_{c}^{2}I_{c}^{c}$ in $T_{2k-1}$. Moreover the letter directly after $G$ in $(I_{\pi}D_{\pi})^{k}$ is the same as the last letter of $G$, so these two letters cannot occur in the same permutation. Thus the last $I_{\pi}D_{\pi}$ in $(I_{\pi}D_{\pi})^{k}$ must be a subsequence of the last $D_{c}I_{c}^{c}$ in $T_{2k-1}$. Then $\pi(2)<\pi(1)$ by Corollary \ref{5.3}.

Let $H$ be the rightmost $(I_{\pi}D_{\pi})^{k-1}$ in the subsequence $(I_{\pi}D_{\pi})^{k}$. Since the rightmost $T_{2k-3}$ in $T_{2k-1}$ avoids $\mathit{alt}(c, 2k-2)$, then the first letter of $H$ must occur somewhere in the leftmost $I_{c}^{c}D_{c}^{2}$ in $T_{2k-1}$. Moreover the letter directly before $H$ in $(I_{\pi}D_{\pi})^{k}$ is the same as the first letter of $H$, so these two letters cannot occur in the same permutation. Thus the first $I_{\pi}D_{\pi}$ in $(I_{\pi}D_{\pi})^{k}$ must be a subsequence of the first $I_{c}^{c}D_{c}$ in $T_{2k-1}$. Then $\pi(1)<\pi(2)$ by Lemma \ref{5.2}, a contradiction.

For the second case of the inductive hypothesis, assume that $T_{j-1}$ avoids $\mathit{alt}(c, j)$ for all $j\leq 2k$, but suppose for contradiction that $T_{2k}$ has the subsequence $(I_{\pi}D_{\pi})^{k}I_{\pi}$ for some permutation $\pi\in S_{c}$. Let $G$ be the leftmost $(I_{\pi}D_{\pi})^{k}$ in the subsequence $(I_{\pi}D_{\pi})^{k}I_{\pi}$. Since the leftmost $T_{2k-1}$ in $T_{2k}$ avoids $G$, then the last letter of $G$ must occur in the last $D_{c}^{2}$ in $T_{2k}$. The last letter of $G$ is equal to the first letter of the last permutation of $(I_{\pi}D_{\pi})^{k}I_{\pi}$, so the last $I_{\pi}$ of $(I_{\pi}D_{\pi})^{k}I_{\pi}$ must be a subsequence of the final $D_{c}$ in $T_{2k}$. Therefore $I_{\pi}=D_{c}$, so the last letter of $D_{\pi}$ is $c$. This implies that $(I_{\pi}D_{\pi})^{k}$ is a subsequence of $T_{2k-1}c$, so $(I_{\pi}D_{\pi})^{k}$ would be a subsequence of $T_{2k-1}$, a contradiction. Thus $(I_{\pi}D_{\pi})^{k}I_{\pi}$ is not a subsequence of $T_{2k}$ for any permutation $\pi\in S_{c}$.
\end{proof}

\section{Open Problems}\label{sec6}

Many questions about formation width are left unresolved by the results in this paper. We found several classes of sequences $u$ for which $u$ contained an alternation with the same formation width as $u$, which implied tight bounds on $\mathit{Ex}(u, n)$. One problem is to find all sequences $u$ for which $u$ contains an alternation with the same formation width as $u$.

We showed that $\mathit{fw}(abc(acb)^{t})=2t+1$ for $t\geq 0$, which implied that $\mathit{Ex}(abc(acb)^{t}, n) =$ $n2^{\frac{1}{(t-1)!}\alpha(n)^{t-1}\pm O(\alpha(n)^{t-2})}$ for $t\geq 2$. We conjecture the following result, which would imply nearly tight bounds on $\mathit{Ex}(abc(acb)^{t}abc, n)$.

\begin{conjecture}\label{6.1} $\mathit{fw}(abc(acb)^{t}abc)=2t+3$ for $t\geq 0$.\end{conjecture}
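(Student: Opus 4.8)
I would prove the conjecture by extending the argument of Lemma~\ref{4.7}. The lower bound is immediate: restricting $abc(acb)^{t}abc$ to the letters $a$ and $b$ yields the alternation $(ab)^{t+2}$ of length $2t+4$, so Lemmas~\ref{1.2} and~\ref{2.7} give $\mathit{fw}(abc(acb)^{t}abc)\ge\mathit{fw}((ab)^{t+2})=2t+3$; and for $t=0$ the claimed value $3$ also follows from Theorem~\ref{2.10} since $abcabc=\mathit{up}(3,2)$. So assume $t\ge1$. By Corollary~\ref{2.2} it suffices to show that every binary $(3,2t+3)$-formation $f$ contains $u:=abc(acb)^{t}abc$. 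After relabelling, the two permutation types of $f$ are $123$ and $321$; write them $I$ and $D$ and let $P_{1},\dots,P_{2t+3}$ be the permutations of $f$.

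The first step is the pigeonhole observation that any $2t-1$ consecutive permutations of a binary $(3,\cdot)$-formation on $\{1,2,3\}$ contain $(123)^{t}$ or $(321)^{t}$ as a subsequence, since one of the two types occurs at least $t$ times among them and can be read off one permutation at a time. Apply this to the central block $P_{3}\cdots P_{2t+1}$ and, after relabelling, assume it contains $(123)^{t}$. Since $acb=123$ forces $abc=132$, the natural target is the copy $u\cong 132\cdot(123)^{t}\cdot 132$, obtained by finding $132$ as a subsequence of $P_{1}P_{2}$ and another $132$ as a subsequence of $P_{2t+2}P_{2t+3}$. A short check of the four products shows that the concatenation of two permutations from $\{I,D\}$ contains $132$ as a subsequence unless that concatenation is $D\,I=321\,123$. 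Hence $f$ contains $u$ whenever neither $(P_{1},P_{2})$ nor $(P_{2t+2},P_{2t+3})$ equals $(D,I)$.

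It remains to handle the boundary cases in which a flank equals $D\,I$. If $(P_{1},P_{2})=(D,I)$, I would instead use $P_{1}=321$ as the opening $abc$-block --- so now $abc=321$ and $acb=312$ --- and realize $u\cong 321\cdot(312)^{t}\cdot 321$: the central $(312)^{t}$ embeds in $(123)^{t+1}$, which is a subsequence of $P_{2}\cdots P_{2t+1}$ (the permutation $P_{2}=123$ followed by the central $(123)^{t}$), and the closing $321$ embeds in $P_{2t+2}P_{2t+3}$ unless that product is $I\,I=123\,123$; in the latter case $P_{2}\cdots P_{2t+3}$ contains $(123)^{t+3}$ and one checks directly that $(312)^{t}321$ is a subsequence of $(123)^{t+3}$, so $f$ still contains $u$. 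This settles every formation whose left flank is $D\,I$. For the right flank $(P_{2t+2},P_{2t+3})=(D,I)$ one would like to mirror the argument, using $P_{2t+3}=123$ as the closing $abc$-block; but the reverse of $u$ is isomorphic to $abc(bac)^{t}abc\not\cong u$, so the mirror is not literal, and here the closing block being $123$ forces $abc=123$, hence a central block of the form $(132)^{t}$ rather than $(123)^{t}$ or $(312)^{t}$ --- a pattern a generic binary $(3,\cdot)$-formation need not contain.

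The step I expect to be the main obstacle is exactly this asymmetric family of boundary configurations, which is most delicate when $f$ is ``reversal heavy'' (many copies of $D$) with the forbidden product on the right. Resolving it should require either sliding the central window and absorbing more of the surrounding permutations into a single long run of $I$'s or $D$'s, or setting up an induction on $t$; I expect the cleanest route is an induction whose base cases (say $t\le 3$) are verified with the formation-width algorithm of the appendix, together with a structural step that, given a binary $(3,2t+3)$-formation, locates a binary $(3,2t+1)$-subformation containing $abc(acb)^{t}$ and a position to attach the extra $abc$-block consistently with the labelling that subformation forces.
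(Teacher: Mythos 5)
This statement is Conjecture~\ref{6.1}: the paper offers no proof of it and explicitly lists it among the open problems, so there is no argument of record to compare yours against; I am judging the attempt on its own terms. Your lower bound is complete and correct (restriction to $a,b$ gives $(ab)^{t+2}$, so Lemmas~\ref{1.2} and~\ref{2.7} give $\mathit{fw}\geq 2t+3$, with $t=0$ settled by Theorem~\ref{2.10}), and your upper-bound strategy is the natural extension of Lemma~\ref{4.7}: reduce to binary $(3,2t+3)$-formations via Corollary~\ref{2.2}, pigeonhole $(123)^{t}$ into the $2t-1$ central permutations, and attach an $abc$-block to each flank. The cases you do treat check out, including the verification that $(312)^{t}321$ is a subsequence of $(123)^{t+3}$.

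However, the case you yourself flag --- right flank equal to $D\,I$ with left flank not equal to $D\,I$ --- is a genuine hole, not a deferred routine check, so the proposal is not a proof. Concretely, for $t=1$ take $F=123\,123\,123\,321\,123$: the closing $132$ of $132(123)^{t}132$ cannot be placed, since no suffix of $F$ disjoint from a preceding copy of $132\,(123)$ contains $132$; and $P_{1}=123$, so the $abc=321$ variant is unavailable. Yet $F$ does contain $u$, via $abc=123$, $acb=132$: read $123$ from $P_{1}$, then $132$ as $1,3$ from $P_{2}$ and $2$ from $P_{3}$, then the final $123$ as the $1$ ending $P_{4}$ followed by $2,3$ from $P_{5}$. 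The instructive point is that in this embedding $(acb)^{t}=(132)^{t}$ must straddle consecutive increasing permutations (it needs $2t$ copies of $123$, not $t$), so the missing case cannot be absorbed as one more subcase of your template in which $(acb)^{t}$ occupies a homogeneous block of $t$ permutations; a correct proof must either choose the letter assignment adaptively as a function of the whole flank pattern or run the induction you sketch but do not carry out. As it stands the conjecture remains open.
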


We identified the set of all sequences $u$ for which $\mathit{fw}(u)\leq 3$. These are all the sequences for which the value of $\mathit{fw}(u)$ implies linear bounds on $\mathit{Ex}(u, n)$. A next step would be to identify all sequences $u$ for which $\mathit{fw}(u)\leq 4$, since these are all of the sequences for which the value of $\mathit{fw}(u)$ implies $O(n\alpha(n))$ upper bounds on $\mathit{Ex}(u, n)$.

We also determined the values of $\mathit{l}(u)$ and $\mathit{r}(u)$ for every binary formation $u$. Since both of these functions provide lower bounds on $\mathit{fw}(u)$, it would be useful to compute the values of $\mathit{l}(u)$ and $\mathit{r}(u)$ for every sequence $u$.

On a related note, the values of $\mathit{l}(u)$ implied bounds on $\mathit{fw}(u)$ within a factor of $2$ for every binary formation $u$. What is the exact value of $\mathit{fw}(u)$ for every binary formation $u$?

We also obtained bounds on $\mathit{fw}(\mathit{alt}(c, k))$ for every $k\geq 1$. In particular we determined the exact values for $k\leq 4$. What is the exact value of $\mathit{fw}(\mathit{alt}(c, k))$ for each $k\geq 5$?

In addition we proved that $\mathit{fl}(u)\leq(r-1)^{2^{\mathit{fw}(u)-1}}+1$ for all sequences $u$ with $r$ distinct letters. What is the exact value of $\mathit{fl}(u)$ for every sequence $u$?


\subsection*{Acknowledgments}
We thank the referee for many helpful suggestions, as well as Peter Tian, Tanya Khovanova, Pavel Etingof, Slava Gerovitch, Katherine Bian, Jacob Fox, and Peter Shor. We also thank the MIT PRIMES program for supporting this research.


\appendix
\section{Algorithm for computing $\mathit{fw}$}

The following algorithm for computing $\mathit{fw}(u)$ is an implementation in Python of the method for computing formation width in Corollary \ref{2.2}. Specifically if $u$ is a nonempty sequence with $r$ distinct letters, then the algorithm increments $s$ starting from $1$ until it finds that every binary $(r,s)$-formation contains $u$. The longest common subsequence functions are from the post by MarkF6 at http://stackoverflow.com/questions/10746282/longest-common-subsequence-of-three-strings.

\begin{verbatim}
import string 
from collections import defaultdict
from itertools import permutations
  
#computes longest common subsequence: 

def lcs_grid(xs, ys): 
    grid = defaultdict(lambda: defaultdict(lambda: (0,""))) 
    for i,x in enumerate(xs): 
        for j,y in enumerate(ys): 
            if x == y: 
                grid[i][j] = (grid[i-1][j-1][0]+1,"\\") 
            else: 
                if grid[i-1][j][0] > grid[i][j-1][0]: 
                    grid[i][j] = (grid[i-1][j][0],"<") 
                else: 
                    grid[i][j] = (grid[i][j-1][0],"^") 
    return grid 
  
def lcs2(xs,ys): 
    grid = lcs_grid(xs,ys) 
    i, j = len(xs) - 1, len(ys) - 1
    best = [] 
    length,move = grid[i][j] 
    while length: 
        if move == "\\": 
            best.append(xs[i]) 
            i -= 1
            j -= 1
        elif move == "^": 
            j -= 1
        elif move == "<": 
            i -= 1
        length,move = grid[i][j] 
    best.reverse() 
    return best 
  
#determines whether one sequence is a subsequence of another: 

def issubseq(seq, subseq): 
    if len(lcs2(seq, subseq)) == len(subseq): 
        return True
    else: 
        return False
  
#constructs set of binary (l, s)-formations: 
  
def rsform(l,s): 
    rsformset = set() 
    if s == 0: 
        return rsformset 
    rsformset1 = set() 
    q = tuple(range(l)) 
    q1 = q[::-1] 
    rsformset.add(q) 
    for i in range(s-1): 
        for rsform in rsformset: 
            t = rsform+q 
            rsformset1.add(t) 
            t = rsform+q1 
            rsformset1.add(t) 
        rsformset.clear() 
        for rsform in rsformset1: 
            rsformset.add(rsform) 
        rsformset1.clear() 
    return rsformset 
  
#determines the formation width of u:

def formwidth(u, l):
    count = 0
    s=1
    v = list(u) 
    while True:
        count = 0
        for rsforms in rsform(l, s):
            for perms in permutations(range(l)):
                for i in range(len(u)):
                    v[i] = perms[u[i]]
                if issubseq(rsforms, v):
                    count = count+1
                    break
        if count == len(rsform(l, s)):
            return s
        else:
            s = s+1

def fw(u):
    t = set(u)
    return formwidth(u, len(t))

#u must be nonempty tuple with letters 0,1,2,..., e.g.:
print fw((0,1,2,3,4,5,0,2,3,1,4,5,0))
\end{verbatim}

\end{document}